\setlist{
  listparindent=\parindent,
}
\definecolor[named]{Purple}{cmyk}{0.56,1,0,0.15}
\definecolor[named]{DarkBlue}{cmyk}{1,0.6,0,0.2}
\declaretheorem[numberwithin=section]{theorem}
\declaretheorem[unnumbered, name=Theorem]{theorem*}
\declaretheorem[numberlike=theorem]{lemma}
\declaretheorem[numberlike=theorem]{corollary}
\declaretheorem[numberlike=theorem]{claim}
\declaretheorem[numberlike=theorem, name=Definition]{definition}
\declaretheorem[unnumbered, name=Definition]{definition*}
\declaretheorem[unnumbered, name=Conjecture]{conjecture*}
\declaretheorem[unnumbered, name=Hypothesis]{hypothesis*}
\declaretheorem[numberlike=theorem,name=Open Question]{question}
\crefname{claim}{Claim}{Claims}
\Crefname{claim}{Claim}{Claims}
\DeclarePairedDelimiter\parens{\lparen}{\rparen}
\DeclarePairedDelimiter\set{\lbrace}{\rbrace}
\DeclarePairedDelimiter\ceil{\lceil}{\rceil}
\DeclarePairedDelimiter\abs{\lvert}{\rvert}
\DeclarePairedDelimiterX\innerprod[2]{\langle}{\rangle}{#1,#2}
\DeclarePairedDelimiterX\intervaloo[2]{(}{)}{#1\,.\,.\,#2}
\DeclarePairedDelimiterX\intervaloc[2]{(}{]}{#1\,.\,.\,#2}
\DeclarePairedDelimiterX\intervalco[2]{[}{)}{#1\,.\,.\,#2}
\DeclarePairedDelimiterX\intervalcc[2]{[}{]}{#1\,.\,.\,#2}
\DeclareMathOperator*\Pr{\mathbf P}
\DeclareMathOperator\One{\mathbf 1}
\DeclareMathOperator\HD{HD}
\DeclareMathOperator\MM{MM}
\DeclareMathOperator\poly{poly}
\newcommand\Order{O}
\newcommand\order{o}
\DeclareMathOperator\bmod{mod}
\newcommand\mod[1]{\;\,({\bmod}\:#1)}
\def\prob#1{\relax\ifmmode\text{\textsc{#1}}\else\textsc{#1}\fi}
\def\alg#1{\relax\ifmmode\text{\textsc{#1}}\else\textsc{#1}\fi}
\renewcommand{\tilde}{\widetilde}
\newcommand\cs[1]{}
\newcommand\nick[1]{}
\newcommand\ron[1]{}
\newcommand\amir[1]{}
\newcommand\elazar[1]{}
\newcommand\todo[1]{}
\title{\textbf{Can You Solve Closest String Faster than Exhaustive~Search?%
}}
\author{Amir Abboud\footnote{This project has received funding from the European Research Council (ERC) under the European Union’s Horizon Europe research and innovation programme (grant agreement No 101078482). Additionally, Amir Abboud is supported by an Alon scholarship and a research grant from the Center for New Scientists at the Weizmann Institute of Science.
}\\ Weizmann Institute of Science\\\texttt{amir.abboud@weizmann.ac.il} \and  Nick Fischer\\ Weizmann Institute of Science\\\texttt{nick.fischer@weizmann.ac.il} \and Elazar Goldenberg\\ Academic College of Tel Aviv-Yaffo\\\texttt{elazargo@mta.ac.il}\and  Karthik C.\ S.\footnote{This work was supported by a grant from the Simons Foundation, Grant Number 825876, Awardee Thu D.
Nguyen.}\\ Rutgers University\\\texttt{karthik0112358@gmail.com} \and Ron Safier\\ Weizmann Institute of Science\\\texttt{ron.safier@weizmann.ac.il	}}
\date{}
\begin{document}
\maketitle
\begin{abstract}
\noindent
We study the fundamental problem of finding the best string to represent a given set, in the form of the \emph{Closest String} problem: Given a set $X \subseteq \Sigma^d$ of $n$ strings, find the string $x^*$ minimizing the radius of the smallest Hamming ball around $x^*$ that encloses all the strings in $X$. In this paper, we investigate whether the Closest String problem admits algorithms that are faster than the trivial exhaustive search algorithm. We obtain the following results for the two natural versions of the problem:
\begin{itemize}
    \item In the \emph{continuous} Closest String problem, the goal is to find the solution string $x^*$ anywhere in $\Sigma^d$. For binary strings, the exhaustive search algorithm runs in time $\Order(2^d \poly(nd))$ and we prove that it cannot be improved to time $\Order(2^{(1-\epsilon) d} \poly(nd))$, for any $\epsilon > 0$, unless the Strong Exponential Time Hypothesis fails. 
    \item In the \emph{discrete} Closest String problem, $x^*$ is required to be  in the input set $X$. While this problem is clearly in polynomial time, its fine-grained complexity has been pinpointed to be quadratic time~\smash{$n^{2\pm \order(1)}$} whenever the dimension is $\omega(\log n) < d < n^{\order(1)}$. We complement this known hardness result with new algorithms, proving essentially that whenever $d$ falls out of this hard range, the discrete Closest String problem can be solved faster than exhaustive search. In the small-$d$ regime, our algorithm is based on a novel application of the inclusion-exclusion principle. 
\end{itemize}
Interestingly, all of our results apply (and some are even stronger) to the natural dual of the Closest String problem, called the \emph{Remotest String} problem, where the task is to find a string maximizing the Hamming distance to all the strings in $X$.
\end{abstract}

\clearpage

\section{Introduction} \label{sec:introduction}

The challenge of characterizing a set of strings by a single \emph{representative} string is a fundamental problem all across computer science, arising in essentially all contexts where strings are involved.
The basic task is to find a string~$x^*$ which minimizes the maximum number of mismatches to all strings in a given set~$X$. 
Equivalently, the goal is to find the center $x^*$ of a smallest ball enclosing all strings in $X$ in the Hamming (or~$\ell_0$) metric. This problem has been studied under various names, including \emph{Closest String}, \emph{1-Center in the Hamming metric} and \emph{Chebyshev Radius}, and constitutes the perhaps most elementary clustering task for strings. \nick{@Karthik: You wanted to add a sentence like this? More broadly, identifying center points in metric spaces has puzzled mathematicians for centuries.}

In the literature, the Closest String problem has received a lot of attention~\cite{FrancesL97,GasieniecJL99,LiMW02,GrammNR03,LanctotLMWZ03,MauchMH03,MenesesLOP04,MaS09,KochmanMP12,MazumdarPS13,AbboudBCCS21}, and it is not surprising that besides the strong theoretical interest, it finds wide-reaching applications in various domains including machine learning, bioinformatics, coding theory and cryptography. One such application in machine learning is for clustering \emph{categorical} data. Typical clustering objectives involve finding good center points to characterize a set of feature vectors. For numerical data (such as a number of publications) this task translates to a center (or median) problem over, say, the $\ell_1$ metric which can be solved using geometry tools. For categorical data, on the other hand, the points have non-numerical features (such as blood type or nationality) and the task becomes finding a good center string over the Hamming metric.

Another important application, in the context of computational biology, is the computer-aided design of PCR primers~\cite{MaS09,LucasBMT91,DopazoRSS93,ProutskiH96,GrammHN02,WangCLC06}. On a high level, in the PCR method the goal is to find and amplify (i.e., copy millions of times) a certain fragment of some sample DNA. To this end, short DNA fragments (typically~18 to 25 nucleotides) called \emph{primers} are used to identify the start and end of the region to be copied. These fragments should match as closely as possible the target regions in the sample DNA. Designing such primers is a computational task that reduces exactly to finding a closest string in a given set of genomes.

The Closest String problem comes in two different flavors: In the \emph{continuous} Closest String problem the goal is to select an \emph{arbitrary} center string $x^* \in \Sigma^d$ (here, $\Sigma$ is the underlying alphabet) that minimizes the maximum Hamming distance to the $n$ strings in $X$. This leads to a baseline algorithm running in exponential time~\makebox{$\Order(|\Sigma|^d \poly(nd))$}. In the \emph{discrete} Closest String problem, in contrast, the task is to select the best center~$x^*$ in the given set of strings~$X$; this problem therefore admits a baseline algorithm in time~$\Order(n^2 d)$. Despite the remarkable attention that both variants have received so far, the most basic questions about the continuous and discrete Closest String problems have not been fully resolved yet:

\smallskip
\begin{center}
    \emph{Can the $\Order(|\Sigma|^d \poly(nd))$-time algorithm for continuous Closest String be improved?}\\[.2ex]
    \emph{Can the $\Order(n^2 d)$-time algorithm for discrete Closest String be improved?}
\end{center}
\medskip
In this paper, we make considerable progress towards resolving both driving questions, by respectively providing tight conditional lower bounds and new algorithms. In the upcoming \cref{sec:introduction:sec:continuous,sec:introduction:sec:discrete} we will address these questions in depth and state our results.

\medskip
Interestingly, in both cases our results also extend, at times even in a stronger sense, to a natural dual of the Closest String problem called the \emph{Remotest String} problem. Here, the task is to find a string $x^*$ that maximizes the minimum Hamming distance from $x^*$ to a given set of strings~$X$. This problem has also been studied in computational biology~\cite{LanctotLMWZ03,Lanctot04} and more prominently in the context of coding theory: The remotest string distance is a fundamental parameter of any code which is also called the \emph{covering radius}~\cite{CohenHLL97}, and under this name the Remotest String problem has been studied in previous works~\cite{Micciancio04,GuruswamiMR04,AlonPY09,HavivR12} mostly for specific sets $X$ such as linear codes or lattices. See Alon, Panigrahy and Yekhanin~\cite{AlonPY09} for further connections to matrix rigidity.

\subsection{Continuous Closest/Remotest String} \label{sec:introduction:sec:continuous}
Let us start with the more classical \emph{continuous} Closest String problem. It is well-known that the problem is NP-complete~\cite{FrancesL97,LanctotLMWZ03}, and up to date the best algorithm remains the naive one: Exhaustively search through all possible strings in time $\Order(|\Sigma|^d \poly(nd))$. This has motivated the study of approximation algorithms leading to various approximation schemes~\cite{GasieniecJL99,LiMW02,MaS09,MazumdarPS13}, and also the study through the lens of parameterized algorithms~\cite{GrammNR03}. In this work, we insist on exact algorithms and raise again the question: \emph{Can you solve the continuous Closest String problem faster than exhaustive search?}

For starters, focus on the Closest String problem for \emph{binary} alphabets (i.e., for $|\Sigma| = 2$) which is of particular importance in the context of coding theory~\cite{KochmanMP12}. From the known NP-hardness reduction which is based on the 3-SAT problem~\cite{FrancesL97}, it is not hard to derive a $2^{d/2}$ lower bound under the Strong Exponential Time Hypothesis (SETH)~\cite{ImpagliazzoP01,ImpagliazzoPZ01}. This bound clearly does not match the upper bound and possibly leaves hope for a meet-in-the-middle-type algorithm. In our first contribution we shatter all such hopes by strengthening the lower bound,
with considerably more effort, to match the time complexity of exhaustive search:

\begin{restatable}[Continuous Closest String is SETH-Hard]{theorem}{corcontinuousclosestseth} \label{cor:continuous-closest-seth}
The continuous Closest String problem on binary alphabet cannot be solved in time $\Order(2^{(1-\epsilon) d} \poly(n))$, for any $\epsilon > 0$, unless SETH fails.
\end{restatable}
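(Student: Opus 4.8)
The plan is to reduce from the Orthogonal Vectors problem (which is SETH-hard in the usual fine-grained sense, but here I need the stronger "Strong OV" / direct SETH reduction), or more precisely to reduce from $k$-SAT directly so that a single Boolean variable maps to a single coordinate of the center string. The key point to beat is the naive $2^{d/2}$ bound coming from the Frances–Litman reduction: there, a clause on $3$ variables forces a gadget using several coordinates, so $d$ ends up roughly $2m$ or worse and the SETH lower bound degrades by a factor of $2$ in the exponent. To get the tight $2^{(1-\epsilon)d}$ bound I want a reduction in which the $d$ coordinates of the candidate center string are in near-perfect bijection with the $n$ Boolean variables of the $k$-SAT instance, so that searching over $x^* \in \{0,1\}^d$ is essentially the same as searching over all assignments.

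**First** I would fix a $k$-SAT instance $\varphi$ on variables $z_1, \dots, z_n$ with $m$ clauses, and build a Closest String instance over $\Sigma = \{0,1\}$ with dimension $d = n + o(n)$; the candidate center $x^* \in \{0,1\}^d$ is meant to encode an assignment $\alpha$ on its first $n$ coordinates (plus a few bookkeeping coordinates). **Then** for each clause $C_j$ I would place one string $s_j \in \{0,1\}^d$ in the set $X$, designed so that $\HD(x^*, s_j)$ is (i) exactly some fixed value $r$ when the assignment encoded by $x^*$ satisfies $C_j$, and (ii) exactly $r+1$ (or more) when it falsifies $C_j$ — i.e., the single coordinate that "should" satisfy $C_j$ being wrong pushes the distance up by one. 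The obvious obstacle is that a clause touches $k$ variables, so a single string $s_j$ cannot distinguish "all $k$ literals unsatisfied" from "all $k$ satisfied" by just one unit of Hamming distance without extra coordinates; I would handle this either by a careful choice of threshold together with padding coordinates that normalize the distance contribution of the $n$ "assignment" coordinates to be constant regardless of $\alpha$, or — more robustly — by first reducing $k$-SAT to a constraint-satisfaction form in which each constraint is (essentially) a conjunction/disjunction that a single mismatch controls, e.g.\ by going through the standard SETH $\Rightarrow$ OV chain but being careful to keep the vector length linear rather than incurring the factor-$2$ blow-up. A clean way to ensure the bijection is to add, for each variable $z_i$, a consistency string that forces coordinate $i$ of $x^*$ to genuinely behave like a Boolean value (penalizing "useless" coordinates), so that the optimum radius $r$ is achieved if and only if there is a satisfying assignment.

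**Putting it together:** I would prove the two-way implication — $\varphi$ is satisfiable $\iff$ there exists $x^*$ with $\max_{s \in X}\HD(x^*, s) \le r$ — and then observe that a hypothetical $\Order(2^{(1-\epsilon)d}\poly(n))$-time algorithm for continuous Closest String would decide $k$-SAT in time $\Order(2^{(1-\epsilon)(n + o(n))}\poly(n, m)) = \Order(2^{(1 - \epsilon/2)n}\poly)$ for $n$ large, contradicting SETH. **The main obstacle** I expect is exactly the "one clause, one string, one unit of distance" design: making the gap between satisfied and falsified clauses robust under the adversarial freedom that $x^*$ has in the unconstrained ("continuous") setting, without spending more than $o(n)$ extra coordinates. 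This likely requires a more clever gadget than a single string per clause — perhaps grouping clauses, or using a small number of auxiliary coordinates shared across many clauses together with a threshold/balancing trick — and getting the arithmetic of the radius $r$ to line up so that the "yes" and "no" cases are separated is where the real work lies; the rest of the reduction (consistency gadgets, the final SETH bookkeeping) should be routine.
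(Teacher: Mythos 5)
Your high-level plan---reduce from $k$-SAT so that the $d$ coordinates of the center string are in near-bijection with the $N$ Boolean variables, with only $o(N)$ overhead---is indeed the correct strategy and matches the paper's. But the central technical step is missing, and the specific design you propose ("one string $s_j$ per clause $C_j$, with $\HD(x^*, s_j)$ jumping by exactly one unit when $C_j$ is falsified") cannot be made to work as stated. As you yourself observe, the contribution to $\HD(x^*, s_j)$ from the $d-k$ coordinates that $C_j$ does not touch depends on $x^*$ and the arbitrary values chosen for $s_j$ there, and there is no fixed padding of a \emph{single} string per clause that makes this contribution a constant independent of the assignment $x^*$ encodes. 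You gesture at "padding coordinates that normalize the distance contribution" and "grouping clauses / a threshold-balancing trick," but these are exactly the places where the proof lives, and the proposal stops there.

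The paper's actual solution has three interlocking ingredients, none of which appear in the proposal. First, it reduces to the \emph{Remotest} String problem (equivalent to Closest over $\{0,1\}$ by complementing), which makes the construction more natural: the instance $X$ contains encodings of \emph{falsifying} assignments, and a remote string should behave like a satisfying assignment. Second, and crucially, it adds \emph{many} strings per clause, not one: for each clause $C$, all assignments $\alpha$ that falsify $C$ and are \emph{constant} on every variable group not touched by $C$. Third, it normalizes the off-clause distance by a regularization-plus-balancing preprocessing: partition the $N$ variables into groups of size $s = \omega(1)$, make the formula $r$-regular (every clause touches exactly $r$ groups, \cref{lem:qsat-regularizing}), and make the formula \emph{balanced}, i.e.\ satisfiable (if at all) by an assignment with exactly $s/2$ zeros and $s/2$ ones per group (\cref{lem:qsat-balancing}). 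Then, for a balanced satisfying assignment $\alpha^*$ and any constant block $\alpha[P] \in \{0^s, 1^s\}$, the distance on the non-touching group $P$ is exactly $s/2$ regardless of $\alpha^*$; the $N/s - r$ non-touching groups contribute the fixed amount $\frac{s}{2}(\frac{N}{s} - r)$, and the $+1$ separation is won entirely in the touching groups. This is what makes "the arithmetic of the radius line up," and it costs only $\Order(s + N/s) = o(N)$ extra variables and a subexponential blowup in the number of strings. Without these ideas the two-way implication you want to prove is not reachable, so I would mark the proposal as identifying the right target but having a genuine gap at the heart of the argument.
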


Interestingly, we obtain this lower bound as a corollary of the analogous lower bound for the continuous Remotest String problem (see the following \cref{thm:continuous-remotest-seth}). This is because both problems are equivalent over the binary alphabet. However, even for larger sized alphabet sets $\Sigma$, we obtain a matching lower bound against the Remotest String problem: 

\begin{restatable}[Continuous Remotest String is SETH-Hard]{theorem}{thmcontinuousremotestseth} \label{thm:continuous-remotest-seth}
The continuous Remotest String problem on alphabet set $\Sigma$ cannot be solved in time $\Order(|\Sigma|^{(1-\epsilon) d} \poly(n))$, for any $\epsilon > 0$ and $|\Sigma| = \order(d)$, unless SETH fails.
\end{restatable}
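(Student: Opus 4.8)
The plan is to reduce $k$-SAT to continuous Remotest String and invoke SETH. Fix $\epsilon>0$; given a $k$-SAT instance $\varphi$ on $N$ variables with $m\le N^k$ clauses, I want to construct a Remotest String instance over $\Sigma$, writing $q:=\abs{\Sigma}$, with $n=\poly(N,q)$ input strings, dimension $d=\tfrac{N}{\log q}\,(1+o(1))$, and an explicit radius $r$, such that some string lies at Hamming distance $\ge r$ from all input strings if and only if $\varphi$ is satisfiable. Since then $q^{(1-\epsilon)d}=2^{(1-\epsilon)(1+o(1))N}$, any $\Order(q^{(1-\epsilon)d}\poly(n))$\extdash time algorithm would solve $k$-SAT in time $\Order(2^{(1-\epsilon/2)N})$ for every constant $k$, contradicting SETH. (For $q=2$ this simultaneously yields \cref{cor:continuous-closest-seth}, since over the binary alphabet Remotest String and Closest String coincide; this is why I phrase the core construction for Remotest String.)

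The difficulty is that a clause constrains only $\le k$ variables, so its effect on any Hamming distance is easily overwhelmed by the $\Theta(d)$ irrelevant coordinates — and the crude remedy of introducing a ``don't care'' gadget coordinate per irrelevant variable inflates $d$ by a constant factor and recovers only the weaker $2^{d/2}$ bound. The key idea I would use is a \emph{balanced} block encoding of assignments: partition the $N$ variables into blocks and encode each block by a short length\extdash$b$ word over $\Sigma$ drawn from a constant\extdash composition family (every symbol occurring the same number of times), choosing $b=\Theta(\log N/\log q)$; then the Hamming distance from \emph{every} valid block\extdash word to the fixed reference word $\sigma_0^b$ is one and the same constant $\Delta$, and the total dimension works out to $\tfrac{N}{\log q}(1+o(1))$. (When $q=\omega(1)$ one may instead simply reserve one symbol $\star$, encode over $\Sigma\setminus\set{\star}$, and use $\star$ as the reference, so that $\Delta=1$; this variant is cleaner, and for $q=2$ the constant\extdash composition version is just a constant\extdash weight code.) Consequently, any string that equals the reference word on all blocks \emph{not} touched by a given clause contributes a fixed amount on those blocks, so its distance to a codeword $\mathrm{enc}(\alpha)$ depends only on $\alpha$'s values on the $\le k$ relevant blocks — exactly the effect the naive ``don't care'' coordinates were meant to achieve, but now essentially for free.

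With this in hand, for each clause $C_i$ and each ``bad configuration'' of its relevant blocks (an assignment to those blocks falsifying all literals of $C_i$; since $k$ is constant there are only $\poly(N,q)$ of them) I add an input string equal to that configuration on the relevant blocks, to the reference word on all other blocks, and suitably extended over an $o(d)$\extdash sized region of slack coordinates that aligns the effective thresholds of all clause strings to a common $r$. I also add $\poly(N,q)$ ``well\extdash formedness'' strings whose sole purpose is to force any $x^*$ that is remote enough to actually be a codeword (in the wildcard variant a single string $\star^d$ at the appropriate threshold forbids all use of $\star$; in the constant\extdash composition variant one forbids, block by block, wrong symbol counts and the few unused composition words). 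Choosing $r$ appropriately (essentially $\Delta\cdot(\text{number of blocks})$, adjusted by the slack region), one verifies that a codeword $\mathrm{enc}(\alpha)$ is at distance $\ge r$ from every clause string precisely when $\alpha$ realizes no bad configuration, i.e.\ satisfies $\varphi$; that codewords are automatically far from the well\extdash formedness strings; and that every non\extdash codeword is ``caught'' by one of them — so a remote string exists iff $\varphi$ is satisfiable. I expect the main obstacle to be exactly this accounting: reconciling (a) the need to neutralize the irrelevant coordinates, which seems to demand extra coordinates, with (b) the requirement that the entire overhead — the $\Theta(\log b)$ composition loss per block, the slack region, and the full well\extdash formedness apparatus — stays $o(d)$ while the string count stays polynomial, uniformly over the whole range $q=o(d)$.
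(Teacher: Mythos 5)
Your high\-/level plan — encode assignments as strings, make the instance consist of falsifying assignments, and design the encoding so that the $\Theta(d)$ coordinates irrelevant to any single clause contribute a \emph{fixed} amount to the Hamming distance — is exactly the paper's idea, and your constant\-/composition block encoding is a close relative of the paper's notion of a ``balanced'' assignment. However, the concrete route differs substantially, and one of your steps contains a genuine gap.

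First, the paper does not start from binary $k$\-/SAT and do the block encoding by hand; it starts from the \emph{$q$\-/ary SAT} problem $(q,k)$\-/SAT, invoking a known result (Stephens\-/Davidowitz and Vaikuntanathan) that $(q,k)$\-/SAT requires time $q^{(1-\eps)N}$ under SETH. The alphabet of the Remotest String instance is then literally $[q]$, the dimension is simply $d = N$ (the number of $q$\-/ary variables), and the ``irrelevant coordinates contribute a fixed amount'' effect is achieved by (i) \emph{regularizing} the formula so every clause touches exactly the same number $r$ of blocks, and (ii) \emph{balancing} the formula so that, if satisfiable, it has a satisfying assignment that uses every symbol equally often within each block. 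Both steps are non\-/trivial lemmas (and the balancing one costs a $((s+1)(q-1))^{(q-1)N/s}$ blow\-/up that must be tuned away with a careful choice of block size $s$); your proposal waves at these via ``slack coordinates'' and the implicit bijectivity of your encoding, but does not resolve the accounting that you yourself flag as the obstacle.

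Second, and more importantly, your approach tries to \emph{force the remote string to be a valid codeword} via a separate apparatus of ``well\-/formedness'' strings, and this is where the proposal breaks. The paper never needs this: its soundness argument works for an \emph{arbitrary} $\alpha^*\in[q]^d$, because the input set includes \emph{all} strings that falsify some clause $C$ and are \emph{constant} on every block not touching $C$ (note: arbitrary, not just codeword, on the touching blocks). Given any falsifying $\alpha^*$, one then simply picks the instance string that agrees with $\alpha^*$ exactly on the touching blocks and, on each non\-/touching block, uses the plurality symbol of $\alpha^*$ there — by pigeonhole, this symbol occurs at least $s/q$ times, so the non\-/touching blocks contribute at most $(1-1/q)s$ each, matching the threshold. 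In your construction the clause strings carry only encoded bad configurations on the relevant blocks, so they cannot ``chase'' an $\alpha^*$ that deviates from the codeword set; you must therefore add well\-/formedness strings to catch all of the roughly $q^d$ non\-/codewords, and the ones you sketch do not obviously do so at the right threshold. Consider a string $\alpha^*$ that, on a single block, has a slightly skewed symbol histogram: its distance to a well\-/formedness string of the form ``$\sigma^b$ on block $j$, reference elsewhere'' is only a little below $m\cdot\Delta$, where $m$ is the number of blocks, whereas the radius threshold for clause strings is only about $(m-r)\cdot\Delta + 1$, which is far smaller. So these well\-/formedness strings never become the bottleneck for such $\alpha^*$, and the instance fails soundness. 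Making this work would require a genuinely different (and much larger) family of checkers, and you would effectively be re\-/inventing the paper's ``allow arbitrary touching\-/block values'' trick. In short: the key idea you are missing is that soundness should be achieved by enriching the clause gadgets so they can match any $\alpha^*$, rather than by policing $\alpha^*$ into a small codeword set.
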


\cref{thm:continuous-remotest-seth} gives a \emph{tight} lower bound for the continuous Remotest String problem in all regimes where we can expect lower bounds, and we therefore close the exact study of the continuous Remotest String problem. Indeed, in the regime where the alphabet size $|\Sigma|$ exceeds the dimension $d$, the Closest and Remotest String problems \emph{can} be solved faster in time $\Order(d^d \poly(n, d))$ (and even faster parameterized in terms of the target distance~\cite{GrammNR03}). 

The intuition behind \cref{thm:continuous-remotest-seth} is simple: We encode a $k$-SAT instance as a Remotest String problem by viewing strings as \emph{assignments} and by searching for a string which is remote from all \emph{falsifying} assignments. The previously known encoding~\cite{FrancesL97} was inefficient (encoding a single variable $X_i$ accounted for \emph{two} letters in the Remotest String instance: one for encoding the truth value and another one as a ``don't care'' value for clauses not containing $X_i$), and our contribution is that we make the encoding lossless. While superficially simple, this baseline idea requires a lot of technical effort.

\subsection{Discrete Closest/Remotest String} \label{sec:introduction:sec:discrete}
Recall that in the \emph{discrete} Closest String problem (in contrast to the continuous one) the solution string~$x^*$ must be part of the input set $X$. For applications in the context of data compression and summarization, the discrete problem is often the better choice: Selecting the representative string from a set of, say, grammatically or semantically meaningful strings is typically more informative than selecting an arbitrary representative string.

The problem can be naively solved in time $\Order(n^2 d)$ by exhaustive search: Compute the Hamming distance between all $\binom{n}{2}$ pairs of strings in $X$ in time $\Order(d)$ each. In terms of exact algorithms, this running time is the fastest known. Toward our second driving question, we investigate whether this algorithm can be improved, at least for some settings of $n$ and $d$. In previous work, Abboud, Bateni, Cohen-Addad, Karthik, and Seddighin~\cite{AbboudBCCS21} have established a conditional lower bound under the Hitting Set Conjecture~\cite{AbboudWW16}, stating that the problem requires quadratic time in $n$ whenever $d = \omega(\log n)$:

\begin{theorem}[Discrete Closest String for Super-Logarithmic Dimensions~\cite{AbboudBCCS21}] \label{thm:discrete-closest-hsc}
The discrete Closest String problem in dimension $d = \omega(\log n)$ cannot be solved in time $\Order(n^{2-\epsilon})$, for any $\epsilon > 0$, unless the Hitting Set Conjecture fails.
\end{theorem}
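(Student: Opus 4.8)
The plan is to prove the theorem by a fine-grained reduction from Hitting Set, which is the natural source of the claimed hardness. Recall that a Hitting Set instance is a pair of families $A,B$, each consisting of $n$ subsets of a universe $[d]$, and the task is to decide whether some $a\in A$ intersects \emph{every} $b\in B$; the Hitting Set Conjecture asserts that this cannot be done in $\Order(n^{2-\epsilon})$ time for any $\epsilon>0$ once $d=\omega(\log n)$. From such an instance I would build, in near-linear time, a set $X$ of $\Theta(n)$ binary strings of length $d'=\Theta(d)$ together with a radius $r$, so that $X$ contains a string of (discrete) Closest-String radius at most $r$ if and only if the Hitting Set instance is a ``yes'' instance. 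Since this reduction preserves the number of strings up to a constant and the dimension up to a constant factor, an $\Order(n^{2-\epsilon})$-time algorithm for discrete Closest String in dimension $\omega(\log n)$ would solve Hitting Set in the hard regime in subquadratic time, refuting the conjecture.

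I would first normalize the instance: by appending $\Order(d)$ dummy universe elements, using \emph{disjoint} dummy blocks for the $A$-sets and for the $B$-sets, every set in $A\cup B$ can be made to have the exact same size $w=\Theta(d)$ without altering any $A$--$B$ intersection pattern; trivial corner cases (an empty set in $B$, $n\le 1$, $A$ or $B$ empty) are handled directly. Each set $S$ is then encoded as a string over two coordinate blocks. The \emph{core block} has one coordinate per (padded) universe element and stores the indicator vector $\mathbf 1_S$; because all weights equal $w$, this block contributes $\HD(\mathbf 1_S,\mathbf 1_T)=2w-2\lvert S\cap T\rvert$ to a pair's distance, which equals $2w$ when $S\cap T=\emptyset$ and is at most $2w-2$ otherwise --- this $\pm1$ gap is precisely the predicate ``$a$ hits $b$''. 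The \emph{spreader block} is the crux: every $A$-string receives the all-zero vector, while the $j$-th $B$-string receives a fixed weight-$q$ vector $\mathrm{id}_j$, where $q=\Theta(d)$ and the vectors $\mathrm{id}_j$ are pairwise at Hamming distance more than $r$; concretely, $n$ subsets of a ground set of size $\Theta(d)$, each of size $q$, with pairwise intersection at most $\approx w/2$. On this block every $A$--$B$ pair picks up exactly $q$ (an all-zero vector against a weight-$q$ vector), every $A$--$A$ pair picks up $0$, and every $B$--$B$ pair picks up more than $r$. Setting the threshold $r:=2w-2+q$ makes all three estimates line up: any fixed $A$-string $\alpha_a$ is within distance $2w\le r$ of every other $A$-string and within distance $2w-2\lvert a\cap b\rvert+q$ of $\beta_b$, so its radius is at most $r$ iff $a$ meets every $b$; meanwhile no $B$-string can have radius at most $r$, since it lies at distance greater than $r$ from every other $B$-string (using $n\ge2$). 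Thus $X$ admits a center of radius $\le r$ iff the Hitting Set instance is a ``yes'' instance.

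The single genuinely delicate ingredient is the spreader block, and I expect it to be the main obstacle. Two conflicting requirements must hold at once: (i) the ambient dimension must remain $\Theta(d)$, so that the produced instance is still in the $\omega(\log n)$ regime, yet the $n$ $B$-strings must be pairwise at Hamming distance $\Theta(d)$; and (ii) \emph{every} $A$--$B$ distance on this block must be the \emph{same} value $q$, so that the $\pm1$ gap from the core block is not drowned out. Requirement (ii) forces all $\mathrm{id}_j$ to lie on a single Hamming sphere, so their pairwise distances are at most $2q$; combined with the lower bound ``greater than $r\approx q+2w$'' this pins $q$ down to $\Theta(d)$ and forces the target pairwise intersection to $\approx w/2$. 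The existence of such a family inside a ground set of size $\Theta(d)$ follows from a routine probabilistic argument exactly because $d=\omega(\log n)$: two uniformly random weight-$q$ subsets of a ground set of size $\Theta(d)$ have expected intersection $\approx q^2/\Theta(d)$, which is a small constant fraction of $w$ once the ground set is a large enough constant multiple of $d$, this intersection is exponentially concentrated in $d$, and a union bound over the $\binom{n}{2}$ pairs succeeds whenever $d=\omega(\log n)$; alternatively one may plug in an explicit constant-rate, constant-relative-distance code. With the spreader block in place, the remaining steps --- the distance computations, the corner cases, and the bookkeeping that $d'=\Theta(d)$ and that the reduction runs in $\Order(nd)$ time --- are routine.
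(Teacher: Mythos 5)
The paper itself does not prove this theorem; it imports it verbatim from \cite{AbboudBCCS21}, so there is no in-paper proof to compare against. Judged on its own terms, your reduction is correct and is in the spirit of what the cited work does: reduce Hitting Set to a string-center problem by encoding sets as constant-weight indicator vectors so that the Hamming distance linearizes the intersection size, then add a gadget so that only $A$-strings can be viable centers while $B$-strings are pushed beyond the radius threshold.

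The arithmetic checks out. After padding all sets to weight $w=\Theta(d)$ with disjoint dummy blocks for $A$ and $B$ (so $A$--$B$ intersections are preserved), the core block contributes $2w-2\lvert a\cap b\rvert$ to an $A$--$B$ pair; the spreader block adds exactly $q$ to every $A$--$B$ pair, $0$ to every $A$--$A$ pair, and $>r$ to every $B$--$B$ pair. With $r=2w-2+q$ and $q\ge 2$, every $A$--$A$ distance is $\le 2w\le r$, an $A$--$B$ distance is $\le r$ iff $a\cap b\ne\emptyset$, and every $B$--$B$ distance exceeds $r$, so (given $n\ge 2$) $X$ has a center of radius $\le r$ iff some $a$ hits all of $B$. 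For the spreader itself, since $\HD(\mathrm{id}_i,\mathrm{id}_j)=2q-2\lvert\mathrm{id}_i\cap\mathrm{id}_j\rvert$, the requirement $>r$ is equivalent to $\lvert\mathrm{id}_i\cap\mathrm{id}_j\rvert<q/2-w+1$, which with $q=\Theta(d)$ and a spreader block of length $\Theta(d)$ is achieved by $n$ random weight-$q$ sets with failure probability $n^2\cdot 2^{-\Omega(d)}=o(1)$ once $d=\omega(\log n)$, or deterministically by a constant-weight code exactly as in \cref{lem:constant-weight-code}. The output has $2n$ strings and dimension $\Theta(d)=\omega(\log n)$, and the reduction runs in $\widetilde\Order(nd)$ time, so a subquadratic Closest String algorithm in this regime would refute the Hitting Set Conjecture. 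The only nit is a loose constant (you write ``pairwise intersection at most $\approx w/2$,'' which matches the choice $q\approx 3w$ but not others), and you should make the choice of $q$ explicit; this is cosmetic and does not affect correctness.
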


This hardness result implies that there is likely no polynomially faster algorithm for Closest String whenever the dimension~$d$ falls in the range $\omega(\log n) < d < n^{\order(1)}$. But this leaves open the important question of whether the exhaustive search algorithm can be improved outside this region, if $d$ is very small (say, $\order(\log n)$) or very large (i.e., polynomial in $n$). In this paper, we provide answers for both regimes.

\paragraph{Small Dimension}
Let us start with the small-dimension regime, $d = \order(\log n)$. The outcome of the question whether better algorithms are possible is a priori not clear. Many related center problems (for which the goal is to select a center point $x^*$ that is closest not necessarily in the Hamming metric but in some other metric space) differ substantially in this regard: On the one hand, in the Euclidian metric, even for $d = 2^{\Order(\log^* n)}$, the center problem requires quadratic time under the Hitting Set Conjecture~\cite{AbboudBCCS21}.\footnote{Technically, the problem is only known to be hard in the \emph{listing} version where we require to list \emph{all} feasible centers~\cite{AbboudBCCS21}.} On the other hand, in stark contrast, the center problem for the $\ell_1$ and~$\ell_\infty$ metrics can be solved in almost-linear time $n^{1+\order(1)}$ whenever the dimension is~$d = \order(\log n)$. This dichotomy phenomenon extends to even more general problems including nearest and furthest neighbor questions for various metrics and the maximum inner product problem~\cite{Williams18,Chen20}.

In view of this, we obtain the perhaps surprising result that whenever $d = \order(\log n)$ the discrete Closest String problem can indeed be solved in subquadratic---even almost-linear---time. More generally, we obtain the following algorithm:

\begin{restatable}[Discrete Closest String for Small Dimensions]{theorem}{thminclexclclosest} \label{thm:incl-excl-closest}
The discrete Closest String problem can be solved in time $\Order(n \cdot 2^d)$.
\end{restatable}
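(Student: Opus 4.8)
The plan is to compute, for every string $x \in X$, the value $f(x) \coloneqq \max_{y \in X} \HD(x, y)$ --- the Hamming distance from $x$ to its farthest point of $X$ --- and then output $\min_{x \in X} f(x)$ together with a minimizer $x^*$. Evaluating a single $f(x)$ directly takes $\Order(nd)$ time, for $\Order(n^2 d)$ overall; the goal is to trade the quadratic dependence on $n$ for a factor $2^d$ by an inclusion--exclusion over \emph{subsets of the coordinate set}, so that the algorithm beats exhaustive search whenever $d$ is small.

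The crux is the following reformulation. For $k \in \set{0, 1, \dots, d}$ let $N_k(x)$ denote the number of pairs $(y, R)$ with $y \in X$, $R \subseteq [d]$, $\abs{R} = k$, and $y_i \neq x_i$ for every $i \in R$. Since for each $y$ the admissible sets $R$ are exactly the $k$-subsets of the set of coordinates on which $x$ and $y$ differ,
\[ N_k(x) \;=\; \sum_{y \in X} \binom{\HD(x, y)}{k}. \]
Because $N_k(x)$ is a non-negative integer, it is positive precisely when some $y \in X$ lies at Hamming distance at least $k$ from $x$; hence $f(x) = \max\set{k : N_k(x) \geq 1}$ (and $N_0(x) = n \geq 1$, so this is well defined). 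It therefore suffices to compute the $d+1$ numbers $N_0(x), \dots, N_d(x)$ for every $x$.

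This is where inclusion--exclusion does the work. Writing $c_T(x) \coloneqq \#\set{y \in X : y|_T = x|_T}$ for the number of strings of $X$ agreeing with $x$ on the coordinate set $T$, a standard sieve over the coordinates of a fixed $R$ shows that the number of $y \in X$ disagreeing with $x$ on \emph{all} of $R$ equals $\sum_{T \subseteq R} (-1)^{\abs{T}} c_T(x)$. Summing over all $R$ with $\abs{R} = k$ and regrouping by $T$ gives $N_k(x) = \sum_{t=0}^{k} (-1)^t \binom{d - t}{k - t}\, \sigma_t(x)$, where $\sigma_t(x) \coloneqq \sum_{\abs{T} = t} c_T(x)$. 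So once the $d+1$ aggregate statistics $\sigma_0(x), \dots, \sigma_d(x)$ are known for a given $x$, one reads off $N_0(x), \dots, N_d(x)$, and hence $f(x)$, in $\Order(d^2)$ additional time per string.

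It remains to compute $\sigma_t(x)$ for all $x \in X$ and all $t$, i.e.\ to compute $c_T(x)$ for all $2^d$ coordinate subsets $T$: for a fixed $T$ we partition $X$ according to the restriction $y \mapsto y|_T$, so that $c_T(x)$ is the size of $x$'s block, and add this into $\sigma_{\abs{T}}(x)$. Handled independently, each subset costs $\Order(nd)$ and the total is $\Order(n d 2^d)$; to reach the claimed $\Order(n 2^d)$ I would enumerate the subsets along a depth-first traversal of the subset lattice, carrying the current partition along and refining it by one coordinate whenever an element is inserted (and undoing that refinement on backtracking using a stack). Every refinement and every pass of block-size bookkeeping touches $\Order(n)$ cells, and there are $\Order(2^d)$ of each, so this phase is $\Order(n 2^d)$; together with the $\Order(n d^2)$ post-processing (dominated once $d$ exceeds an absolute constant, and trivial below it) the bound follows. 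The conceptual heart --- and the step I would check most carefully --- is the two-line reduction above: that ``$x$ has a point of $X$ at distance $\geq k$'' is detectable purely from positivity of the binomial-moment sum $\sum_{y} \binom{\HD(x,y)}{k}$, and that inclusion--exclusion linearises that sum over coordinate subsets into statistics $\sigma_t(x)$ that can be gathered in bulk. Making the subset scan run in $\Order(n 2^d)$ rather than $\Order(n 2^d \cdot d)$ is the only genuinely technical point, and the incremental-refinement traversal is the natural device for it.
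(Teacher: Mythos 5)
Your proof is correct, and it follows a genuinely different route from the paper. The paper's approach rests on an explicit inclusion--exclusion identity for the indicator $\One(\HD(x,y)\leq k)=\sum_{|I|\geq d-k}(-1)^{|I|-d+k}\binom{|I|-1}{d-k-1}\One(x[I]=y[I])$, proved via a nontrivial binomial identity ($\sum_{i=0}^{\ell}(-1)^i\binom{m+\ell}{m+i}\binom{m+i-1}{m-1}=1$), and then tests for each $k$ whether the summed right-hand side equals $|X|$. You instead work with the binomial moments $N_k(x)=\sum_{y\in X}\binom{\HD(x,y)}{k}$ and observe that $f(x)=\max\{k:N_k(x)\geq 1\}$ --- a clean monotonicity/positivity argument that needs only the elementary sieve $D_R(x)=\sum_{T\subseteq R}(-1)^{|T|}c_T(x)$ and a straightforward regrouping, avoiding the paper's binomial-coefficient lemma entirely. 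Notably, the intermediate aggregate statistics coincide: your $\sigma_t(x)$ are exactly the paper's $S[x,\ell]$ with $\ell=t$, so the heavy step (computing $\abs{\set{y\in X:x[I]=y[I]}}$ for all $I$ via partition refinement and bucketing by $|I|$) is the same in both. The implementation differs slightly --- the paper enumerates subsets in nondecreasing order of size and stores all $2^d$ partitions, while your DFS with backtracking achieves the same $\Order(n\cdot 2^d)$ time with only $\Order(nd)$ working space for the refinement stack, a modest improvement. Both approaches silently assume a word-RAM model where the $\Order(d+\log n)$-bit intermediate values are handled in $\Order(1)$ arithmetic (or incur the same lower-order $\poly(d,\log n)$ overhead), so you are on equal footing there. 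What your route buys is a shorter, more transparent correctness argument; what the paper's route buys is an exact closed-form characterization of $\One(\HD(x,y)\leq k)$ that may be of independent interest.
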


Note that this result is trivial for binary alphabets, and our contribution lies in finding an algorithm in time $\Order(n \cdot 2^d)$ for alphabets of \emph{arbitrary} size.

We believe that this result is interesting also from a technical perspective, as it crucially relies on the \emph{inclusion-exclusion} principle. While this technique is part of the everyday toolset for exponential-time and parameterized algorithms, it is uncommon to find applications for polynomial-time problems and our algorithm yields the first such application to a center-type problem, to the best of our knowledge. We believe that our characterization of the Hamming distance in terms of an inclusion-exclusion-type formula (see \cref{lem:hd-inclusion-exclusion}) is very natural and likely to find applications in different contexts.

\paragraph{Large Dimension}
In the large-dimension regime, where $d$ is polynomial in $n$, it is folklore that fast matrix multiplication should be of use. Specifically, over a binary alphabet we can solve the Closest String problem in time $\Order(\MM(n, d, n))$ (where $\MM(n, d, n)$ is the time to multiply an $n \times d$ by a $d \times n$ matrix) by using fast matrix multiplication to compute the Hamming distances between all pairs of vectors, rather than by brute-force. For arbitrary alphabet sizes this idea leads to a running time of $\Order(\MM(n, d|\Sigma|, n))$ which is of little use as $|\Sigma|$ can be as large as $n$ and in this case the running time becomes $\Omega(n^2 d)$.

We prove that nevertheless, the $\Order(n^2 d)$-time baseline algorithm can be improved using fast matrix multiplication---in fact, using ideas from \emph{sparse} matrix multiplication such as Yuster and Zwick's heavy-light idea~\cite{YusterZ05}.

\begin{restatable}[Discrete Closest String for Large Dimensions]{theorem}{thmfastmmclosest} \label{thm:fast-mm-closest}
For all $\delta > 0$, there is some~$\epsilon > 0$ such that the discrete Closest String problem with dimension $d = n^\delta$ can be solved in time $\Order(n^{2+\delta-\epsilon})$.
\end{restatable}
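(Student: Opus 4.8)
The plan is to compute all pairwise Hamming distances by expressing them through a \emph{sparse} matrix product and evaluating that product with the heavy--light technique of Yuster and Zwick~\cite{YusterZ05}; the answer is then immediate. For a binary alphabet the folklore reduction to $\Order(\MM(n,d,n))$ already suffices (it beats $n^{2+\delta}$ precisely because the matrix-multiplication exponent $\omega$ satisfies $\omega<3$), so the whole point is to handle an \emph{arbitrary} alphabet without paying the extra $|\Sigma|$ factor that the naive $\Order(\MM(n, d|\Sigma|, n))$ bound incurs.

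First I would reduce to computing the \emph{agreement matrix} $A$, with $A(i,j) := \#\{k\in[d] : x_i[k]=x_j[k]\}$, because $\HD(x_i,x_j)=d-A(i,j)$; given $A$ one reads off the discrete Closest String in $\Order(n^2)$ extra time by taking, for each $i$, the maximum of $\HD(x_i,x_j)$ over all $j$ and then minimizing over $i$ (for Remotest String one simply swaps the two extrema, so the same argument works verbatim). The key observation is that $A = M M^{\top}$, where $M$ is the Boolean matrix indexed by strings against (position, symbol) pairs with $M[i,(k,\sigma)]=1$ iff $x_i[k]=\sigma$: the entry $\sum_{(k,\sigma)} M[i,(k,\sigma)]\,M[j,(k,\sigma)]$ counts exactly the positions where $x_i$ and $x_j$ agree. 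Although $M$ can have as many as $d|\Sigma|$ columns, it has exactly $nd$ ones, so it is sparse, and this sparsity is what we exploit.

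Next I would fix a threshold $t$ and split the columns of $M$ into \emph{heavy} columns (more than $t$ ones) and \emph{light} columns, writing $A = A_{\mathrm{heavy}} + A_{\mathrm{light}}$. For any fixed position $k$ the weights of the columns $(k,\sigma)$ sum to $n$, so the total column weight is $nd$ and there are at most $nd/t$ heavy columns; restricting $M$ to them yields an $n \times (nd/t)$ matrix, and computing its product with its transpose by rectangular fast matrix multiplication takes $\Order(\MM(n,\, nd/t,\, n))$ time. For the light part, a light column of weight $m\le t$ contributes $1$ to $A(i,j)$ for each of its $\binom m2$ pairs, and $\binom m2 \le tm/2$, so enumerating these pairs over all light columns costs $\sum_{c\ \mathrm{light}} \binom{m_c}{2} \le \tfrac t2 \sum_c m_c \le \tfrac{t}{2}\,nd = \Order(tnd)$ in total; bucketing the strings by symbol within each position and initializing the array $A$ add only $\Order(n^2)$ plus near-linear overhead, which will be subsumed.

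It remains to choose $t$, and this balancing step is the part to carry out with care. Writing $d=n^\delta$ and $t=n^\tau$, the light part costs $\Order(n^{1+\delta+\tau})$ and the heavy part costs $\MM(n,\, n^{1+\delta-\tau},\, n)$. Using the convexity bound $\MM(n,n^\beta,n)\le n^{2+\beta(\omega-2)+\order(1)}$ for $\beta\le 1$ together with the block-decomposition bound $\MM(n,n^\beta,n)\le n^{\beta-1+\omega+\order(1)}$ for $\beta\ge 1$, a routine optimization shows that there is a choice $\tau=\tau(\delta)$ making both costs $\Order(n^{2+\delta-\epsilon})$ for some $\epsilon=\epsilon(\delta)\in(0,\delta)$; the only nontrivial ingredient is $\omega<3$, exactly as in the binary case (and then $\Order(n^2)$ is indeed subsumed, since $\epsilon<\delta$). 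I expect the main, albeit minor, obstacle to be the bookkeeping here: tracking which of the two regimes $n^{1+\delta-\tau}\lessgtr n$ governs the rectangular product for a given $\delta$, and verifying that the window of admissible exponents $\tau$ is nonempty — which it is, precisely because $\omega<3$.
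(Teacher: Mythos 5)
Your proposal is correct and follows essentially the same route as the paper: reduce to computing $MM^\top$ for the sparse $n\times d|\Sigma|$ occurrence matrix (the paper calls it $A$), split columns into heavy and light via a threshold (your $t=n^\tau$ is the paper's $n^{1-\epsilon}$), evaluate the heavy part with rectangular fast matrix multiplication and the light part by direct pair enumeration, and balance. The paper makes the balancing explicit by setting $\epsilon=0.3\min\{1,\delta\}$ and using the single bound $\MM(a,b,c)=\Order(abc/\min\{a,b,c\}^{3-\omega})$, while you phrase the same conclusion via the two-regime convexity/block bounds; these are equivalent.
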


\paragraph{Remotest String}
Finally, we turn our attention to the discrete Remotest String problem. In light of the previously outlined equivalence in the continuous setting, we would expect that also in the discrete setting, the Closest and Remotest String problem are tightly connected. We confirm this suspicion and establish a strong equivalence for binary alphabets:

\begin{restatable}[Equivalence of Discrete Closest and Remotest String]{theorem}{thmdiscreteequiv} \label{thm:discrete-equiv}
If the discrete Closest String over a binary alphabet is in time $T(n, d)$, then the discrete Remotest String over a binary alphabet is in time $T(\Order(n), \Order(d + \log n)) + \widetilde\Order(nd)$. Conversely, if the discrete Remotest String over a binary alphabet is in time $T'(n, d)$, then the discrete Closest String over a binary alphabet is in time $T'(\Order(n), \Order(d + \log n)) + \widetilde\Order(nd)$.
\end{restatable}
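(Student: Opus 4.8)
The plan is to prove both halves of the equivalence with a pair of mirror-image reductions, both powered by the single identity $\HD(x,\bar y)=d-\HD(x,y)$, which holds over a binary alphabet and exchanges a $\min$–$\max$ objective for a $\max$–$\min$ one and back. It is cleanest to think in terms of the threshold (decision) versions — deciding whether the optimum is at most, resp.\ at least, a given $r$ — and to aim for reductions that send optimum to optimum up to a known additive shift, so that a single oracle call does the job. As a preliminary step I spend $\widetilde\Order(nd)$ time to sort $X$ and remove exact duplicates; this changes neither optimum and guarantees $n\le 2^d$.

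Consider computing the Remotest String optimum with a Closest String oracle. After complementation, the question ``is there an $x_i$ with $\HD(x_i,x_j)\ge r$ for all $j\ne i$?'' becomes ``is there an $x_i$ inside the radius-$(d-r)$ Hamming ball of every string of $\{\bar x_j : j\ne i\}$?'' I would build a Closest String instance $Z$ on $\Order(n)$ strings in dimension $\Order(d+\log n)$ that holds a \emph{candidate} derived from each $x_i$ and a \emph{target} derived from each $\bar x_j$, together with auxiliary blocks — a colour coordinate separating candidates from targets, an $\Order(\log n)$-bit index block, and possibly an $\Order(d)$-bit block of codewords of an explicit code — tuned so that (i) a candidate's \emph{own} target never realises its maximum distance, which is how the quantifier $j\ne i$ is enforced; (ii) for each candidate the farthest point of $Z$ is one of the targets, whose distance encodes $\max_{j\ne i}\HD(x_i,\bar x_j)=d-\min_{j\ne i}\HD(x_i,x_j)$ up to a fixed shift; and (iii) no target is ever an optimal centre. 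Then $\mathrm{Closest}(Z)$ decides whether $\mathrm{Remotest}(X)\ge r$, and binary search (or a direct value-to-value version of the construction) recovers the exact optimum. The reverse reduction, computing the Closest optimum with a Remotest oracle, is the mirror image; it is in one respect easier, since the complemented question ``is some $x_i$ at distance $\ge d-r$ from every $\bar x_j$?'' has the inner quantifier over \emph{all} $j$ — the self term $\HD(x_i,\bar x_i)=d$ is harmlessly large inside a $\max$–$\min$ — so no index needs excluding; there the gadgets instead ensure that the optimal candidate's nearest neighbour in $Z$ is a target rather than another candidate.

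The real content, and the main obstacle, is the exclusion $j\ne i$ in the Remotest direction. A candidate and its own target are built from the complementary pair $x_i,\bar x_i$ and so sit at distance $\ge d$ in the original coordinates; worse, this is the largest distance any target can have from that candidate, so a naive ``complement all targets into one set'' construction makes the Closest optimum of $Z$ a constant and carries no information. Breaking this requires either spreading the candidates apart with a code of minimum distance close to the full $d$ (whence the $\Theta(d)$ growth of the dimension; such a code with $\ge n$ codewords exists because $n\le 2^d$ after deduplication), or — the cleaner idea — colour-coding over the $\lceil\log n\rceil$ levels given by the bits of the index: a pair $(i,j)$ differing in bit $\ell$ has no self-exclusion problem at level $\ell$ because $i$ and $j$ lie on opposite colours there, and since $\min_{j\ne i}\HD(x_i,x_j)=\min_{\ell}\min_{j:\,\mathrm{bit}_\ell(j)\ne\mathrm{bit}_\ell(i)}\HD(x_i,x_j)$ the $\log n$ bichromatic sub-instances can be folded into one using the $\Order(\log n)$ extra coordinates. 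Either way, the delicate part is simultaneously (a) keeping the number of strings at $\Order(n)$ rather than $\Order(n\log n)$, (b) keeping the extra dimension within the $\Order(d+\log n)$ budget, and (c) making the additive shifts introduced by the gadgets coincide on every distance that carries the signal, so that the reduction transfers the optimum exactly. The $\widetilde\Order(nd)$ term then absorbs deduplication, constructing the gadgets, assembling $Z$, and decoding the oracle's answer.
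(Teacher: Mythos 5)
Your high\-/level plan matches the paper's: both halves of the equivalence are driven by the complementation identity $\HD(x,\overline{y})=d-\HD(x,y)$, and both attach $\Order(\log n)$\-/ or $\Order(d)$\-/sized gadget blocks (built from an explicit code of constant relative weight and distance) so that each distance of interest shifts by a \emph{fixed} additive constant while same\-/type distances are pushed out of range. The paper's concrete gadget is $a_i=x_i\circ c_i^{\,r}$, $b_j=\overline{x_j}\circ 0^{r d''}$ with $c_1,\dots,c_n$ an explicit constant\-/weight code of length $d''=\Order(\log n)$ (Lemma~\ref{lem:constant-weight-code}) and $r=\Theta(d/d'')$ repetitions, giving $\HD(a_i,b_j)=d-\HD(x_i,x_j)+r\cdot 0.25d''$ --- exactly the ``candidate/target plus a constant\-/weight block'' structure you describe.

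Where your proposal goes genuinely beyond a paraphrase of the paper is your diagnosis of the asymmetry. You observe that in the $\max$--$\min$ (Remotest) objective the self\-/term $\HD(a_i,b_i)=d+r\cdot 0.25d''$ sits at the \emph{maximum} end of a \emph{minimum} and is therefore harmless, while in the $\min$--$\max$ (Closest) objective the same self\-/term sits at the \emph{maximum} end of a \emph{maximum} and drowns the signal, so that a naive ``complement all targets'' instance has constant optimum. This is precisely correct --- and, as far as I can tell, it is exactly what goes wrong with the paper's own second\-/direction construction ($x_i=a_i\circ 0^{rd''}$, $y_i=\overline{a_i}\circ c_i^{\,r}$, claimed ``analogous'' with the proof omitted): for every center $x_i$ one finds $\max_{w}\HD(x_i,w)=\HD(x_i,y_i)=d+r\cdot 0.25d''$ irrespective of $A$, so the stated identity fails whenever $\mathrm{Remotest}(A)>0$. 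You should flag this explicitly rather than treat it as an obstacle you believe the literature already handles; the directions are \emph{not} mirror images of each other for precisely the reason you give.

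However, your two proposed repairs do not yet close the gap, and this is the substantive shortfall of the proposal. The reduction must transfer the optimum \emph{exactly}, so the gadget contribution to every candidate--target pair with $j\ne i$ must be the \emph{same} constant $C$, while the $j=i$ contribution must fall short of $C-d$. That is the profile of an \emph{equidistant} code, and by the Plotkin/Deza bounds a binary equidistant code with $n$ codewords at pairwise distance $>d$ does not fit in $\Order(d+\log n)$ coordinates once $d=o(n)$ --- which is the regime the theorem is about. Replacing ``equidistant'' by ``min distance $\approx d$'' reintroduces a $j$\-/dependent shift $\HD(c_i,c_j)$ and loses exactness. Your colour\-/coding alternative has a different problem: $\min_{j\ne i}\HD(a_i,a_j)=\min_\ell\min_{j:\mathrm{bit}_\ell(j)\ne\mathrm{bit}_\ell(i)}\HD(a_i,a_j)$ is true, but the outer quantifier is $\max_i$, and $\max_i\min_\ell$ does not commute to $\min_\ell\max_i$; so the $\log n$ bichromatic answers cannot be ``folded'' into a single oracle call by adding $\Order(\log n)$ coordinates (and $\Order(\log n)$ separate oracle calls would exceed the claimed $T(\Order(n),\Order(d+\log n))+\widetilde\Order(nd)$ budget, since that is written for a single call). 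In short: you have correctly isolated the real difficulty, and correctly observed that one direction (Closest solved via a Remotest oracle, where the self\-/term is a max inside a min) goes through cleanly exactly as in the paper's Claim~\ref{thm:discrete-equiv:clm:closest-to-remotest}; but the other direction still lacks a concrete, verifiable gadget, and neither of your two candidate mechanisms is developed to the point where one can check properties (a)--(c) you list.
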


In combination with \cref{thm:discrete-closest-hsc}, this equivalence entails that also Remotest String requires quadratic time in the regime $\omega(\log n) < d < n^{\order(1)}$. Let us remark that, while the analogous equivalence is trivial in the continuous regime, proving \cref{thm:discrete-equiv} is not trivial and involves the construction of a suitable gadget that capitalizes on explicit constant-weight codes.

The similarity between discrete Closest and Remotest String continues also on the positive side: All of our algorithms extend naturally to Remotest String, not only for binary alphabets (see \cref{thm:incl-excl-remotest,thm:fast-mm-remotest}).

\subsection{Open Problems}
Our work inspires some interesting open problems. The most pressing question from our perspective is whether there also is a $|\Sigma|^{(1-\order(1))d}$ lower-bound for continuous Remotest String (for alphabets of size bigger than $2$).

\begin{question}[Continuous Closest String for Large Alphabets]
For $|\Sigma| > 2$, can the continuous Closest String problem be solved in time $\Order(|\Sigma|^{(1-\epsilon)d} \poly(n))$, for some $\epsilon > 0$? 
\end{question}

We believe that our approach (proving hardness under SETH) hits a natural barrier for the Closest String problem. In some sense, the $k$-SAT problem behaves very similarly to Remotest String (with the goal to be remote from all falsifying assignments), and over binary alphabets remoteness and closeness can be exchanged. For larger alphabets this trivial equivalence simply does not hold. It would be exciting if this insight could fuel a faster \emph{algorithm} for Closest String, and we leave this question for future work.

\medskip
On the other hand, consider again the discrete Closest and Remotest String problems. While we close \emph{almost all} regimes of parameters, there is one regime which we did not address in this paper:

\begin{question}[Discrete Closest/Remotest String for Logarithmic Dimension]
Let $c$ be a constant. Can the discrete Closest and Remotest String problems with dimension $d = c \log n$ be solved in time $\Order(n^{2-\epsilon})$, for some $\epsilon = \epsilon(c) > 0$?
\end{question}

In the regime $d = \Theta(\log n)$, we typically expect only very sophisticated algorithms, say using the polynomial method in algorithm design~\cite{AbboudWY15}, to beat exhaustive search. And indeed, using the polynomial method it is possible to solve also discrete Closest and Remotest String in subquadratic time for binary (or more generally, constant-size) alphabets~\cite[Theorem~1.4]{AlmanW15,AlmanCW16}. The question remains whether subquadratic time complexity is also possible for unrestricted alphabet sizes.

\subsection{Outline of the Paper}
We organize this paper as follows. In \cref{sec:preliminaries} we give some preliminaries and state the formal definitions of the continuous/discrete Closest/Remotest String problems. In \cref{sec:continuous}   we prove our conditional hardness results for the continuous problems. In \cref{sec:discrete}   we treat in detail the discrete problems by designing our new algorithms and proving our fine-grained equivalence between Closest and Remotest String.

\section{Preliminaries} \label{sec:preliminaries}
We set $[n] = \set{1, \dots, n}$ and write $\widetilde\Order(T) = T (\log T)^{\Order(1)}$. We occasionally write $\One(P) \in \set{0, 1}$ to express the truth value of the proposition $P$.

\paragraph{Strings}
Let $\Sigma$ be a finite alphabet set of size at least $2$. For a string $x \in \Sigma^d$ of length (or dimension) $d$, we write $x[i]$ for the $i$-th character in $x$. For a subset $I \subseteq [d]$, we write $x[I] \in \Sigma^I$ for the subsequence obtained from $x$ by restricting to the characters in $I$. The \emph{Hamming distance} between two equal-length strings $x, y \in \Sigma^d$ is defined as $\HD(x, y) = \abs{\set{i \in [d] : x[i] \neq y[i]}}$. Let $X$ be a set of length-$d$ strings and let~$x^*$ be a length-$d$ string. Then we set
\begin{align*}
    r(x^*, X) &= \max_{y \in X} \HD(x^*, y) \quad\text{(the \emph{radius} of $X$ around $x^*$),} \\
    d(x^*, X) &= \min_{y \in X} \HD(x^*, y) \quad\text{(the \emph{distance} from $x^*$ to $X$).}
\end{align*}
Let us formally repeat the definitions of the four problems studied in this paper:

\begin{definition}[Continuous Closest String]
Given a set of $n$ strings $X \subseteq \Sigma^d$, find a string $x^* \in \Sigma^d$ which minimizes the radius $r(x^*, X)$.
\end{definition}

\begin{definition}[Continuous Remotest String]
Given a set of $n$ strings $X \subseteq \Sigma^d$, find a string $x^* \in \Sigma^d$ which maximizes the distance $d(x^*, X)$.
\end{definition}

\begin{definition}[Discrete Closest String]
Given a set of $n$ strings $X \subseteq \Sigma^d$, find a string $x^* \in X$ which minimizes the radius $r(x^*, X)$.
\end{definition}

\begin{definition}[Discrete Remotest String]
Given a set of $n$ strings $X \subseteq \Sigma^d$, find a string $x^* \in X$ which maximizes the distance $d(x^*, X \setminus \set{x^*})$.
\end{definition}

\paragraph{Hardness Assumptions}
In this paper, our lower bounds are conditioned on the following two plausible hypotheses from fine-grained complexity.

\begin{definition}[Strong Exponential Time Hypothesis, SETH~\cite{ImpagliazzoP01,ImpagliazzoPZ01}]
For every $\epsilon > 0$, there is some $k \geq 1$ such that $k$-CNF SAT cannot be solved in time $\Order(2^{(1-\epsilon)n})$.
\end{definition}

\begin{definition}[Hitting Set Conjecture~\cite{AbboudWW16}]
For every $\epsilon > 0$, there is some $c \geq 1$ such that no algorithm can decide in $\Order(n^{2-\epsilon})$ time, whether in two given lists $A, B$ of $n$ subsets of a universe of size $c \log n$, there is a set in the first list that intersects every set in the second list (i.e. a ``hitting set'').
\end{definition}
\section{Continuous Closest String is SETH-Hard} \label{sec:continuous}
In this section we present our fine-grained lower bounds for the continuous Closest and Remotest String problems. We start with a high-level overview of our proof, and then provide the technical details in \cref{sec:continuous:sec:qsat,sec:continuous:sec:balancing,sec:continuous:sec:reduction,sec:continuous:sec:theorem}.

Let us first recall that over \emph{binary} alphabets, the continuous Closest and Remotest String problems are trivially equivalent. The insight is that for any two strings $x, y \in \set{0, 1}^d$ we have that $\HD(x, y) = d - \HD(\overline{x}, y)$ where $\overline{x}$ is the \emph{complement} of $x$ obtained by flipping each bit. From this it easily follows that
\begin{equation*}
    \min_{x^* \in \set{0, 1}^d} \max_{y \in X} \HD(x^*, y) = d - \max_{x^* \in \set{0, 1}^d} \min_{y \in X} \HD(x^*, y).
\end{equation*}
Note that finding a string $x^*$ optimizing the left-hand side is exactly the Closest String problem, whereas finding a string $x^*$ optimizing the right-hand side is exactly the Remotest String problem, and thus both problems are one and the same. For this reason, let us focus our attention for the rest of this section only on the Remotest String problem.

\paragraph{Tight Lower Bound for Remotest String}
Our goal is to establish a lower bound under the Strong Exponential Time Hypothesis. To this end, we reduce a $k$-SAT instance with $N$ variables to an instance of the Remotest String problem with dimension~\makebox{$d = (1 + \order(1)) N$}. In \cref{sec:continuous:sec:qsat,sec:continuous:sec:balancing,sec:continuous:sec:reduction,sec:continuous:sec:theorem} we will actually reduce from a $q$-ary analogue of the $k$-SAT in order to get a tight lower bound for all alphabet sizes $|\Sigma|$. However, for the sake of simplicity we stick to binary strings and the usual $k$-SAT problem in this overview. Our reduction runs in two steps.

\paragraph{Step 1: Massaging the SAT Formula}
In the first step, we bring the given SAT formula into a suitable shape for the reduction to the Remotest String problem. Throughout, we partition the variables $[N]$ into \emph{groups}~$P_1, \dots, P_{\frac Ns}$ of size exactly $s$ (where $s$ is a parameter to be determined later). We assert the following properties:
\begin{itemize}
    \item \emph{Regularity:} All clauses contain exactly $k$ literals, and all clauses contain literals from the same number of groups (say $r$). This property can be easily be guaranteed by adding a few fresh variables to the formula, all of which must be set to $0$ in a satisfying assignment, and by adding these variables to all clauses which do not satisfy the regularity constraint yet.
    \item \emph{Balancedness:} Let us call an assignment $\alpha \in \set{0, 1}^N$ \emph{balanced} if in every group it assigns exactly half the variables to $0$ and half the variables to $1$. We say that a formula is \emph{balanced} if it is either unsatisfiable or if it is satisfiable by a balanced assignment. To make sure that a given formula is balanced, we can for instance flip each variable in the formula with probability \smash{$\frac12$}. In this way we balance each group with probability \smash{$\approx \frac{1}{\sqrt s}$}, and so all $\frac{N}{s}$ groups are balanced with probability at least~\smash{$s^{-\frac{N}{2s}}$}. By choosing~\makebox{$s = \omega(1)$}, this random experiment yields a balanced formula after a negligible number of repetitions. In \cref{lem:qsat-balancing} we present a deterministic implementation of this idea.
\end{itemize}

\paragraph{Step 2: Reduction to Remotest String}
The next step is to reduce a regular and balanced $k$-CNF formula to an instance of the Remotest String problem. The idea is to encode all \emph{falsifying} assignments of the formula as strings---a sufficiently remote point should in spirit be remote from falsifying and thus satisfying. To implement this idea, take any clause $C$ from the instance. Exploiting the natural correspondence between strings and assignments, we add all strings $\alpha \in \set{0, 1}^n$ that satisfy the following two constraints to the Remotest String instance:
\begin{enumerate}
    \item The assignment $\alpha$ falsifies the clause $C$.
    \item For any group $P_i$ that does \emph{not} contain a variable from $C$, we have that $\alpha[P_i] = 0^s$ or $\alpha[P_i] = 1^s$.
\end{enumerate}
We start with the intuition behind the second constraint: For any \emph{balanced} assignment $\alpha$ and any group $P_i$ that does not contain a variable from $C$, we have that $\HD(\alpha^*[P_i], \alpha[P_i]) = \frac s2$ (the string $\alpha^*[P_i]$ contains half zeros and half ones, whereas $\alpha[P_i]$ is either all-zeros or all-ones). There are exactly \smash{$\frac Ns - r$} such groups (by the regularity), leading to Hamming distance \smash{$\frac{s}{2} (\frac{N}{s} - r)$}.

It follows that the only groups that actually matter for the distance between $\alpha^*$ and $\alpha$ are the groups which do contain a variable from $C$. Here comes the first constraint into play: If $\alpha^*$ is a satisfying assignment, then~$\alpha^*$ and $\alpha$ must differ in at least one of these groups and therefore have total distance at least~\smash{$\frac{s}{2} (\frac Ns - r) + 1$}. Conversely, for any falsifying assignment $\alpha^*$ there is some string $\alpha$ in the instance with distance at most \smash{$\frac{s}{2} (\frac Ns - r)$}. Therefore, to decide whether the SAT formula is satisfiable it suffices to compute whether there is a Remotest String with distance at least \smash{$\frac{s}{2} (\frac Ns - r) + 1$}. Finally, it can be checked that the number of strings $\alpha$ added to the instance is manageable.

This completes the outline of our hardness proof, and we continue with the details. In \cref{sec:continuous:sec:qsat} we introduce the $(q, k)$-SAT problem which we will use to give a clean reduction also for alphabet larger than size $2$. In \cref{sec:continuous:sec:balancing} we formally prove how to guarantee that a given $(q, k)$-SAT formula is regular and balanced, and in \cref{sec:continuous:sec:reduction} we give the details about the reduction to the Remotest String problem. We put these pieces together in \cref{sec:continuous:sec:theorem} and formally prove \cref{thm:continuous-remotest-seth}.


\subsection{\boldmath\texorpdfstring{$q$}{q}-ary SAT} \label{sec:continuous:sec:qsat}
To obtain our full hardness result, we base our reduction on the hardness of \emph{$q$-ary} analogue of the classical $k$-SAT problem. We start with an elaborate definition of this problem. Let $X_1, \dots, X_N$ denote some $q$-ary variables (i.e., variables taking values in the domain $[q]$). A \emph{literal} is a Boolean predicate of the form $X_i \neq a$, where $X_i$ is one of the variables and $a \in [q]$. A \emph{clause} is a conjunction of several literals; we say the clause has \emph{width} $k$ if it contains exactly $k$ literals. A \emph{$(q, k)$-CNF formula} is a disjunction of clauses of width at most $k$. Finally, in the $(q, k)$-SAT problem, we are given a $(q, k)$-CNF formula over $M$ clauses and $N$ $q$-ary variables, and the task is to check whether there exists an assignment $\alpha \in [q]^N$ which satisfies all clauses. This problem has already been adressed in previous works~\cite{Traxler08,StephensV19}, and it is known that $q$-ary SAT cannot be solved faster than exhaustive search unless SETH fails:

\begin{lemma}[$q$-ary SAT is SETH-Hard~{{\cite[Theorem~3.3]{StephensV19}}}] \label{lem:qsat-seth}
For any $\epsilon > 0$, there is some~$k \geq 3$ such that for all $q = q(N) \geq 2$, $(q, k)$-SAT cannot be solved in time $\Order(q^{(1-\epsilon) N} \poly(M))$, unless SETH fails.
\end{lemma}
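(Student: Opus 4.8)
Since \cref{lem:qsat-seth} is quoted from~\cite{StephensV19}, I only sketch the natural proof. The plan is to reduce ordinary $k$-SAT over $n$ Boolean variables to $(q, k')$-SAT over $N$ $q$-ary variables in a way that (i) $q^N \le 2^{(1+\delta)n}$ for a tiny slack $\delta = \delta(\epsilon)$, (ii) the width $k'$ depends only on $k$ and $\delta$ and \emph{not} on $q$, and (iii) the number $M$ of produced clauses is polynomially bounded. Granting this, a hypothetical $\Order(q^{(1-\epsilon)N}\poly(M))$-time algorithm for $(q,k')$-SAT would solve $k$-SAT in time $2^{(1+\delta)(1-\epsilon)n}\poly(M) \le 2^{(1-\epsilon/2)n}\poly(M)$ once $\delta$ is chosen small enough in terms of $\epsilon$; picking $k$ large enough (via SETH, for the exponent $1-\epsilon/2$) that $k$-SAT admits no such algorithm then refutes SETH.

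The heart of the reduction is an \emph{alphabet-packing} step that crams essentially $\log_2 q$ Boolean variables into each $q$-ary variable. When $q$ is a power of two this is immediate---a $q$-ary variable simply \emph{is} a block of $\log_2 q$ bits---but for general $q$, and in particular for small non-powers of two such as $q=3$, the naive rate of $\lfloor\log_2 q\rfloor$ bits per variable is too lossy to beat SETH for small $\epsilon$. To fix this, set $t := \lceil 1/\delta\rceil$ and $B := \lfloor t\log_2 q\rfloor$, and observe $B \ge (1-\delta)\,t\log_2 q$; group the $q$-ary variables into blocks of $t$ and use a fixed injection $\set{0,1}^{B}\hookrightarrow[q]^t$ to store $B$ Boolean variables per block. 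With $N' := \lceil n/B\rceil$ blocks this uses $N := tN'$ $q$-ary variables, and since $q^t \le 2^{B/(1-\delta)}$ and $BN' \le n+B$ we get $q^N \le 2^{(1+\Order(\delta))n}$ whenever $q$ is, say, subexponential in $n$. Finally we add \emph{range} clauses forbidding, in each block, each of the $q^t-2^B$ tuples outside the image of the injection.

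For the clause translation, recall that a width-$k$ clause rules out a single length-$\le k$ pattern of $q$-ary variable values. A $k$-SAT clause is likewise violated by exactly one Boolean pattern on its $\le k$ variables; those variables lie in $\le k$ blocks, hence touch $\le kt$ of the $q$-ary variables, and under the injection the forbidden Boolean pattern pulls back to at most $2^{kB-k} \le q^{kt}$ forbidden $q$-ary patterns, each realized by one $(q,k')$-clause of width $k' \le kt = \Order(k/\delta)$. Correctness is immediate: a $q$-ary assignment satisfying all clauses avoids every range pattern---hence decodes block-wise to a genuine Boolean assignment---and avoids every clause pattern---hence that assignment satisfies each original clause---and conversely. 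In total $M \le m\cdot q^{kt} + \Order(N'q^t)$, which is a genuine polynomial in the input size whenever $q$ is subexponential in $n$ (the only regime in which the stated bound is meaningful), and is absorbed into the $\poly(M)$ factor.

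\textbf{Main obstacle.} The subtlety is that the three requirements pull against each other. To handle \emph{every} $q$ one must push the packing rate arbitrarily close to $\log_2 q$, which forces the block-of-$t$ encoding---exactly the step where a hidden constant-factor loss in the exponent would be fatal. But for non-powers of two that encoding is non-separable, so a single Boolean bit depends on all $t$ variables of its block; this makes the clause translation lossy and inflates both the width (to $\Order(k/\delta)$, which must be checked to remain a $q$-independent constant) and the clause count (by a $q^{\Order(k/\delta)}$ factor, which must be checked to stay polynomial). Arranging all of this simultaneously, and dealing cleanly with the dependence $q = q(N)$ and with very large alphabets, is the technical work carried out in~\cite{StephensV19}.
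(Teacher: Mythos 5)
The paper does not prove \cref{lem:qsat-seth} at all; it is cited verbatim from \cite[Theorem~3.3]{StephensV19}, so there is no ``paper's own proof'' to compare against. You correctly recognize this and supply a sketch of the natural reduction. Your alphabet-packing argument is a reasonable reconstruction and, as far as I can verify from first principles, essentially sound: the block size $t = \lceil 1/\delta\rceil$ ensures $B = \lfloor t\log_2 q\rfloor \ge (1-\delta)t\log_2 q$; the injection $\set{0,1}^B\hookrightarrow[q]^t$ exists; the range clauses and the clause translation are correct; the width $k' \le kt$ is a $q$-independent constant; the clause count $\Order(m\cdot q^{kt} + N'q^t)$ is absorbed into the $\poly(M)$ slack; and $q^N \le 2^{(1+\Order(\delta))n}$ holds once $\log q = o(n)$.

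One remark on the ``main obstacle'' you flag. Your worry about very large alphabets is milder than you suggest. In your reduction $n \approx N\log q$ (up to the $(1+\Order(\delta))$ slack), so $\log q \approx n/N$, and hence $\log q = o(n)$ is \emph{automatic} whenever $N\to\infty$; the regime $N = \Order(1)$ is the only one where your bounds degrade, and there the asymptotic hardness statement is vacuous anyway. Moreover, when $q$ is a power of two the blocking is unnecessary (take $t=1$, $B = \log_2 q$), and the genuinely lossy case is small non-power-of-two $q$, exactly as you observe. A second small nit: the functional dependence $q=q(N)$ creates a mild chicken-and-egg issue (you must pick $N$ so that $N\approx n/\log q(N)$), which you flag but do not resolve; it is resolvable by a standard fixed-point/monotonicity argument since $N\mapsto N\log q(N)$ is nondecreasing. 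Neither point is a gap so much as a detail you would need to tighten if this were to stand as a full proof rather than a pointer to the cited reference.

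Also note that the paper's \cref{sec:continuous:sec:qsat} defines a $(q,k)$-CNF clause as a ``conjunction of literals'' and a formula as a ``disjunction of clauses,'' which appears to be a slip (the roles of conjunction/disjunction should be swapped to match ``satisfies all clauses''); your reading of a width-$k$ clause as forbidding exactly one $k$-tuple is the intended and self-consistent one, so you implicitly used the corrected definition.
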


Note that this hardness result applies even when $q$ grows with $N$. We will later exploit this by proving hardness for Remotest String even for alphabets of super-constant size.

\subsection{Regularizing and Balancing} \label{sec:continuous:sec:balancing}
Before we get to the core of our hardness result, we need some preliminary lemmas on the structure of $(q, k)$-CNF formulas. Throughout, let $N$ be the number of variables and let $\mathcal P$ be a partition of $N$ into \emph{groups} of size exactly $s$. (Note that the existence of $P$ implies that $N$ is divisible by $s$.) In two steps we will now formally introduce the definitions of regular and balanced formulas and show how to convert unconstrained formulas into regular and balanced ones.  

\begin{definition}[Regular Formulas]
Let $\phi$ be a $(q, k)$-CNF formula over $N$ variables, and let $\mathcal P$ be a partition of $[N]$. We say that $\phi$ is \emph{$r$-regular} (with respect to $\mathcal P$) if every clause contains exactly $k$ literals from exactly~$r$ distinct groups in $\mathcal P$.
\end{definition}

\begin{lemma}[Regularizing] \label{lem:qsat-regularizing}
Let $\phi$ be a $(q, k)$-CNF formula, and let $2k \leq s \leq N$. In time $\poly(N M)$ we can construct a $(q, 2k)$-CNF formula $\phi'$ satisfying the following properties:
\begin{itemize}
    \item $\phi'$ is satisfiable if and only if $\phi$ is satisfiable.
    \item $\phi'$ has at most $N + \Order(s)$ variables and at most $M + \Order(s \poly(q))$ clauses.
    \item $\phi'$ is $(k+1)$-regular with respect to some partition $\mathcal P$ into groups of size exactly $s$. 
\end{itemize}
\end{lemma}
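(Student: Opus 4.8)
The plan is to reshape $\phi$ in three moves: pad the variable set so that it partitions into groups of size $s$; reserve a constant number of auxiliary ``dummy'' groups whose variables are forced to a fixed value; and then inflate every clause---original clauses as well as the forcing clauses introduced for the dummy groups---to width exactly $2k$ while making it touch exactly $k+1$ groups, by appending literals on dummy variables, which are vacuously false and hence semantically inert.

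Setting up the partition is routine: add at most $s$ fresh filler variables so that the original variable count becomes a multiple of $s$ and split into $\lceil N/s\rceil$ groups of size $s$; then append $k+1$ further dummy groups $D_1,\dots,D_{k+1}$ of size $s$ each, for a total of $N+O(s)$ variables (treating $k$ as constant).

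The one genuinely delicate ingredient is forcing the dummy variables to a fixed value, say $1$, using only ``regular'' clauses (width $2k$, touching $k+1$ groups). A wide clause $(Z\neq a)\vee(\cdots)$ is strictly weaker than the unit constraint ``$Z\neq a$'', and naively padding a unit clause with literals on other forced variables creates a circular dependency (and, done carelessly, clauses that become vacuously \emph{true} and lose their force). The resolution is to take, for each dummy variable $Z$ in group $D_i$, each forbidden value $a\in\{2,\dots,q\}$, and \emph{every} pattern $\vec b\in[q]^{2k-1}$ of values for a fixed tuple of witness variables $W_1,\dots,W_{2k-1}$, the clause $(Z\neq a)\vee\bigvee_{l=1}^{2k-1}(W_l\neq b_l)$. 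The conjunction of these clauses over all $\vec b$ is logically equivalent to ``$Z\neq a$'' no matter what the witnesses equal (if $Z=a$, the conjunct matching the actual values of the $W_l$'s is violated), so letting $a$ range over $\{2,\dots,q\}$ pins $Z$ to $1$. I would pick the $W_l$'s among the variables of $D_1,\dots,D_{k+1}$ so that $\{Z\}\cup\{W_1,\dots,W_{2k-1}\}$ occupies exactly the $k+1$ dummy groups---feasible since each $D_j$ has $s\geq 2k$ slots and there are $k$ dummy groups besides $D_i$. This adds $(k+1)s\cdot(q-1)\cdot q^{2k-1}=O(s\cdot\poly(q))$ clauses, each of width exactly $2k$ and touching exactly $k+1$ groups.

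Finally, replace each original clause $C$ of width $w\leq k$, which touches some number $t\leq w$ of original groups, by $C\vee\bigvee_{l=1}^{2k-w}(Z_l\neq 1)$, where the $Z_l$ are distinct forced dummy variables chosen to occupy exactly $k+1-t$ of the groups $D_1,\dots,D_{k+1}$; a short count ($2k-w\geq k+1-t$ in all cases, and each $D_j$ holds $\geq 2k$ forced variables) shows this is always possible, so the padded clause has exactly $2k$ literals spread over exactly $t+(k+1-t)=k+1$ groups, and since the appended literals are vacuously false it is equivalent to $C$. This produces $M$ padded original clauses. Equisatisfiability is then immediate: a satisfying assignment of $\phi$ extends to $\phi'$ by setting all dummy and filler variables to $1$; conversely every satisfying assignment of $\phi'$ sets all dummy variables to $1$ by the forcing clauses, and then the padded clauses collapse to the original $C$'s. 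The variable bound $N+O(s)$, the clause bound $M+O(s\poly(q))$, and the polynomial running time follow by the counts above, and $(k+1)$-regularity with groups of size exactly $s$ holds by construction. The main obstacle is exactly the forcing gadget: recognizing that pinning a variable using only wide, many-group clauses cannot be done by direct padding and requires the ``range over all witness patterns'' device, which incurs a $\poly(q)$ (but, since $k$ is constant, affordable) blow-up in the number of clauses.
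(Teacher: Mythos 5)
Your proposal is correct, and the high-level skeleton matches the paper's: reserve $k+1$ dummy groups of size $s$, force the dummy variables to a fixed value using $\Order(s\poly(q))$ auxiliary clauses, and then pad every clause with unsatisfiable dummy literals to reach width exactly $2k$ and exactly $k+1$ touched groups. The counting argument you give for why the padding always fits ($2k-w\geq k+1-t$, and each dummy group has $s\geq 2k$ slots) is the same calculation the paper relies on implicitly.

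Where you diverge is in the forcing gadget. The paper forces the dummies \emph{tuple-wise}: it takes, for each $\ell\in[s]$, a small CNF $\psi(Y_{\ell,1},\dots,Y_{\ell,k+1})$ consisting of all $q^{k+1}-1$ width-$(k+1)$ clauses over those $k+1$ variables except the one ruling out the all-zeros assignment, which pins the whole tuple to $0$ at once; each such clause already touches all $k+1$ dummy groups (since the group index is the second coordinate) and only needs to be stretched from width $k+1$ to width $2k$ by adding $k-1$ further $Y_{\cdot,\cdot}\neq 0$ literals. You instead force each dummy variable \emph{individually}, using the ``range over all witness patterns'' device: for each $Z$, each forbidden value $a$, and each $\vec b\in[q]^{2k-1}$ add $(Z\neq a)\vee\bigvee_l(W_l\neq b_l)$, whose conjunction over $\vec b$ is unconditionally equivalent to $Z\neq a$ regardless of the witness values, thereby sidestepping the circularity you correctly identified. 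Both gadgets are valid; yours produces clauses of width $2k$ directly at the cost of a somewhat larger (but still $\Order(s\poly(q))$ for constant $k$) clause count $(k{+}1)s(q{-}1)q^{2k-1}$ versus the paper's $s(q^{k+1}-1)$, while the paper's is more compact and makes the $(k{+}1)$-group requirement for the gadget clauses automatic rather than something the witness choice must arrange. Your observation that the circularity issue is genuine and needs a non-obvious resolution is a fair diagnosis of why the lemma isn't completely trivial; the paper's way of dodging it is to make the gadget clauses wide from the start (width $k+1$, never unit), rather than to make unit constraints robust to padding.
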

\begin{proof}
First, observe that there is a $(q, k+1)$-CNF formula $\psi(Y_1, \dots, Y_{k+1})$ on $k+1$ variables with a unique satisfying assignment, say the all-zeros assignment: Include all clauses on $Y_1, \dots, Y_{k+1}$ except for the clause ruling out the all-zeros assignment.

Let us construct a $(q, k+1)$-CNF formula $\phi''$ as follows: Let $X_1, \dots, X_N$ denote the variables of the formula~$\phi$ (after possibly adding at most $s$ unused variables to satisfy the divisibility constraint $s \mid N$), let~\makebox{$Y_{1, 1}, \dots, Y_{s, k+1}$} be $(k+1)s$ fresh variables, and let $\phi'' = \phi(X_1, \dots, X_N) \land \bigwedge_{\ell=1}^s \psi(Y_{i, 1}, \dots, Y_{i, k+1})$. Clearly~$\phi''$ is satisfiable if and only if $\phi$ is satisfiable, it contains $N + (k+1)s = N + \Order(s)$ variables and~$M + s (q^{k+1} - 1) = M + \Order(s \poly(q))$ clauses.

Next, we turn $\phi''$ into a regular formula $\phi'$ by modifying the existing clauses without adding new variables or clauses. Let $\mathcal P$ be a partition of the involved variables that arbitrarily partitions $X_1, \dots, X_N$, and puts each block~$Y_{1, i}, \dots, Y_{s, i}$ into a separate group. In order to make $\phi''$ $(k+1)$-regular, we have to fill up the number of variables in every clause to $2k$, while making sure that each clause has variables from exactly~\makebox{$k+1$} groups. To this end, we add to each clause an appropriate number of literals of the form $Y_{i, j} \neq 0$. It is easy to check that it is always possible to find such literals which increase the width to exactly $2k$ and the number of participating groups to exactly $k + 1$. Moreover, note that these literals cannot be satisfied and therefore do not change the truth value of the formula.
\end{proof}

\begin{definition}[Balanced Formulas]
Let $\mathcal P$ be a partition of $[N]$ into groups of size $s$. We say that an assignment~\makebox{$\alpha \in [q]^N$} is \emph{balanced} (with respect to $\mathcal P$) if in every group of $\mathcal P$, $\alpha$ assigns each symbol in $[q]$ exactly~$\frac{s}{q}$ times. We say that a $(q, k)$-CNF formula $\phi$ is \emph{balanced} (with respect to $\mathcal P$) if either $\phi$ is unsatisfable, or $\phi$ is satisfiable by a balanced assignment~$\alpha$. 
\end{definition}

\begin{lemma}[Balancing] \label{lem:qsat-balancing}
Let $\phi$ be a $(q, k)$-CNF formula over $N$ variables, let $\mathcal P$ be a partition of $[N]$ into groups of size $s$, and assume that $q$ divides $s$. We can construct $(q, k)$-CNF formulas $\phi_1, \dots, \phi_t$ over the same number of variables and clauses as $\phi$ such that:
\begin{itemize}
    \item For all $i \in [t]$, $\phi_i$ is satisfiable if and only if $\phi$ is satisfiable.
    \item There is some $i \in [t]$ such that $\phi_i$ is balanced (with respect to $\mathcal P$).
    \item \smash{$t = ((s + 1)(q-1))^{(q-1)\frac Ns}$}, and we can compute all formulas in time $\poly(N M t)$.
\end{itemize}
\end{lemma}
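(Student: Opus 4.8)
The plan is to \emph{derandomize} the described random-flipping experiment by enumerating, independently for each group, a small family of alphabet-relabelings and taking the product over all groups. The starting observation is that any tuple of permutations $\pi = (\pi_1, \dots, \pi_N)$ of the alphabet $[q]$ induces a formula $\phi^{\pi}$ obtained by replacing every literal ``$X_j \neq a$'' with ``$X_j \neq \pi_j(a)$''. This formula has exactly the same variables, clauses and width as $\phi$, and $\alpha \mapsto (\pi_j(\alpha_j))_j$ is a satisfiability-preserving bijection between the assignments of $\phi$ and of $\phi^{\pi}$ (literal by literal, $\alpha_j \neq a \iff \pi_j(\alpha_j) \neq \pi_j(a)$). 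So it suffices to produce a list of $t$ permutation-tuples such that: whenever $\phi$ has some satisfying assignment $\alpha^\star$, at least one tuple $\pi$ in the list makes the image $(\pi_j(\alpha^\star_j))_j$ balanced. Indeed, the corresponding $\phi^{\pi}$ is then balanced, and if $\phi$ is unsatisfiable all $\phi^{\pi}$ are trivially balanced.

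Next I would reduce to a single group. Since balancedness is a conjunction of independent conditions, one per group, it is enough to build, for a single group $P$ of $s$ positions, a list $L_P$ of at most $((s+1)(q-1))^{q-1}$ tuples of permutations (indexed by $P$) such that for \emph{every} value vector $v \in [q]^P$ some tuple in $L_P$ maps $v$ to a balanced string, i.e.\ one in which each symbol occurs exactly $T := s/q$ times (using $q \mid s$). Taking the product of the $L_P$ over all $N/s$ groups, and padding $L_P$ to have exactly $((s+1)(q-1))^{q-1}$ entries (repeating the identity tuple), yields exactly $t = ((s+1)(q-1))^{(q-1)N/s}$ combined tuples and hence $t$ formulas $\phi_1,\dots,\phi_t$; each is computed by relabeling the $\le kM$ literals, for total time $\poly(NMt)$. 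Satisfiability equivalence and balancedness then follow from the bijection above applied to the per-group-optimal choices.

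The heart of the argument is building $L_P$ by \emph{peeling symbols one at a time}. Fix an ordering $1,\dots,s$ of the positions in $P$, and pin the symbol counts in $q-1$ phases: in phase $p$ we drive the count of symbol $m := q-p+1$ to exactly $T$ using only permutations that transpose two symbols \emph{inside} $\{1,\dots,m\}$ --- such a transposition fixes every symbol $>m$, hence does not disturb the counts pinned in earlier phases. The candidate moves of phase $p$ are: choose a partner $a \in \{1,\dots,m-1\}$ and a prefix length $\ell \in \{0,\dots,s\}$, then apply $(a\;m)$ to positions $1,\dots,\ell$ and the identity elsewhere; this is at most $(q-1)(s+1)$ moves (pad with the identity to hit exactly that count). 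Correctness of phase $p$ is a discrete intermediate-value argument: as $\ell$ runs from $0$ to $s$ the count of symbol $m$ changes by at most $1$ per step (only position $\ell{+}1$ is affected), equals the current count of $m$ at $\ell=0$, and equals the current count of $a$ at $\ell=s$; and among the $m$ still-unpinned symbols the counts average to $(s-(p-1)T)/m = T$, so by pigeonhole one can pick $a$ so that $T$ lies weakly between the two endpoint values, whence some $\ell$ realizes count exactly $T$. After all $q-1$ phases the counts of $q,q-1,\dots,2$ are all $T$, hence so is the count of $1$, and the string is balanced; the composed transpositions over the phases give the desired per-group tuple, and $L_P$ is the set of all such compositions.

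The main obstacle is making this last step watertight, and within it the cross-phase bookkeeping in particular: the phase-$p$ moves are prefixes of the \emph{full} index set $\{1,\dots,s\}$ rather than of the (adversarially determined) set of still-unpinned positions, so one must check that applying an inside-$\{1,\dots,m\}$ transposition to an already-pinned position leaves its symbol --- and therefore every count pinned so far --- unchanged, and that the ``average $=T$'' pigeonhole always supplies a partner $a$ on the correct side of $T$ (including the degenerate cases where the current count of $m$ already equals $T$). Once these invariants are verified, the Lipschitz/intermediate-value step and the product-over-groups accounting are routine.
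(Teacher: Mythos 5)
Your proposal is correct and takes essentially the same approach as the paper: per group you peel one symbol at a time, using a discrete intermediate-value argument over prefix-transpositions (with an unpinned partner symbol chosen by the average-equals-$s/q$ pigeonhole) to pin each symbol's count to $s/q$, and then take a product over groups. The only cosmetic differences are that you peel symbols in decreasing order and package the moves as permutation tuples, while the paper peels in increasing order and maintains the growing sets of formulas $\Phi_\ell$ directly.
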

\begin{proof} 
Let us first focus on a single group $P \in \mathcal P$. We demonstrate how to construct a set $\Phi$ of~$((s + 1)(q-1))^{q-1}$ equivalent formulas with the guarantee that at least one such formula is balanced for this group. These formulas are constructed in stages $\ell = 0, \dots, q - 1$, by inductively proving that there is a set $\Phi_\ell$ satisfying the following three properties (for all $\ell$):
\begin{itemize}
    \item $\Phi_\ell$ consists of formulas that are satisfiable if and only if $\phi$ is satisfiable.
    \item If $\phi$ is satisfiable, then there is some $\phi' \in \Phi_\ell$ and a satisfying assignment $\alpha$ of $\phi'$ that assigns to the variables in $P$ each symbol in $[\ell]$ exactly $\frac sq$ times. That is, $|\alpha^{-1}(1) \cap P| = \dots = |\alpha^{-1}(\ell) \cap P| = \frac sq$. We say that $\phi'$ is \emph{balanced up to $\ell$.}
    \item $|\Phi_\ell| \leq ((s + 1)(q-1))^{\ell}$.
\end{itemize}
We start with $\Phi_0 = \set{\phi}$ for which all of these properties are trivially satisfied. So let $0 < \ell < q$. In order to construct $\Phi_\ell$, we make use of the following insight: We can, without changing the truth value of the formula, arbitrarily permute the $q$ values of any variable as long as we do so consistently across all clauses. To construct the set $\Phi_\ell$, we start from an arbitrary formula $\phi' \in \Phi_{\ell-1}$ and guess a symbol $a \in [q] \setminus [\ell]$. We sweep over all variables $i \in P$ and in each step exchange the values $\ell$ and $a$ of the variable $X_i$. Let~\makebox{$\phi'_{a, 0}, \dots, \phi'_{a, s}$} denote the~$s + 1$ formulas created during this process (with $\phi'_{a, 0} = \phi'$ and $\phi'_{a, s}$ being the formula after exchanging the values $\ell$ and $a$ for all variables in $P$). We assign $\Phi_\ell = \set{\phi'_{a, 0}, \dots, \phi'_{a, s} : \phi' \in \Phi_{\ell-1}, a \in [q] \setminus [\ell]}$.

Let us argue that this choice is correct. The first property is obvious as we never alter the truth value of a formula, and the third property is also simple since $|\Phi_\ell| \leq (s + 1)(q-1) \cdot ((s + 1)(q-1))^{\ell-1}$ by induction. The second property is more interesting. By induction we know that there is some formula $\phi' \in \Phi_{\ell-1}$ that is balanced up to $\ell-1$, for some assignment $\alpha$. By construction the formulas $\phi'_{a, 0}, \dots, \phi'_{a, s}$ are also balanced up to $\ell-1$, and we claim that there is at least one that is balanced up to $\ell$. To see this, let $\alpha_{a, j}$ denote the assignments corresponding to $\alpha$ after performing the same value changes that transform $\phi'$ into $\phi_{a, j}$. Let us further define~\makebox{$k_{a, j} = |\alpha_{a, j}^{-1}(\ell) \cap P|$}. Note that it suffices to prove that there is some pair~$a, j$ with $k_{a, j} = \frac sq$. To find such a pair, distinguish two cases: If \smash{$|\alpha^{-1}(\ell) \cap P| \leq \frac sq$}, then take some~$a$ with \smash{$|\alpha^{-1}(a) \cap P| \geq \frac sq$}, and if \smash{$|\alpha^{-1}(\ell) \cap P| \geq \frac sq$}, then we take some~$a$ with \smash{$|\alpha^{-1}(a) \cap P| \leq \frac sq$}. It is easy to check that such symbols~$a$ exist. Let us focus on the former case; the latter case is symmetric. Since $\phi' = \phi_{a, 0}$ we have that \smash{$k_{a, 0} \leq \frac sq$} and since~$\phi_{a, s}$ is obtained from $\phi'$ by exchanging all occurrences of $\ell$ and $a$ (in the group $P$) we have that~\smash{$k_{a, s} \geq sq$}. Finally, note that any two adjacent values $k_{a, j}$ and $k_{a, j+1}$ differ by at most one. It easily follows that there is indeed some $0 \leq j \leq s$ with \smash{$k_{a, j} = \frac sq$}.

After constructing $\Phi_0, \dots, \Phi_{q-1}$ in this way, let us pick $\Phi = \Phi_{q-1}$. While by induction we only know that this set contains some formula $\phi'$ that is balanced up to $q-1$, by the divisibility constraint that $q$ divides $s$ it follows that the respective assignment also assigns the value $q$ to exactly \smash{$\frac{s}{q}$} variables. This completes the construction of a set $\Phi$ which balances a \emph{single} group.

To complete the lemma, we iterate this construction: Start with a single group $P$ and run the construction to obtain some set $\Phi$. Then, for each formula in $\Phi$ we rerun the construction for a second group and update~$\Phi$ to be the resulting set. After dealing with all $\frac Ns$ groups, we have indeed constructed a balanced formula, and the size of $\Phi$ has reached exactly \smash{$t = ((s + 1)(q - 1))^{(q-1)\frac Ns}$}. As a final detail, we have neglected the running time of this process so far, but it is easy to implement this idea in time $\poly(N M t)$.
\end{proof}

\subsection{Reduction to Remotest String} \label{sec:continuous:sec:reduction}
Having in mind that for our reduction we can assume the SAT formula to be regular and balanced, the following lemma constitutes the core of our reduction:

\begin{lemma}[Reduction from Regular Balanced SAT to Remotest String] \label{lem:qsat-to-remotest}
Suppose there is an algorithm for the continuous Remotest String problem, running in time $\Order(|\Sigma|^{(1-\epsilon)d} \poly(n))$, for some $\epsilon > 0$. Then there is an algorithm that decides whether a given $s$-partitioned $r$-regular $(q, k)$-SAT formula is satisfiable, and runs in time $\Order(q^{(1-\epsilon) N + \Order(s + \frac{N}{s})} \poly(M))$.
\end{lemma}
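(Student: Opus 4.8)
The plan is a direct reduction with no gadgets beyond the structure already guaranteed. Given a regular balanced $(q,k)$-SAT formula $\phi$ over $N$ variables with $M$ clauses — where $k$, and hence the regularity parameter $r \le k$, are constants, and $q \mid s$ so that balanced assignments exist — I build a continuous Remotest String instance $X \subseteq \Sigma^d$ with $\Sigma = [q]$ and $d = N$, identifying length-$N$ strings with assignments in $[q]^N$ coordinatewise. Let $\mathcal P = \{P_1, \dots, P_{N/s}\}$ be the partition witnessing regularity; each clause $C$ has a unique falsifying pattern $p_C$ on its $k$ (distinct) variables, which meet exactly $r$ groups, say those in $G_C \subseteq \mathcal P$. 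For every clause $C$ I add to $X$ all strings $\alpha \in [q]^N$ such that (i) $\alpha$ agrees with $p_C$ on the variables of $C$ (equivalently, $\alpha$ falsifies $C$), and (ii) for every group $P \notin G_C$ the block $\alpha[P]$ is constant, i.e.\ $\alpha[P] \in \{1^s, \dots, q^s\}$; the variables of groups in $G_C$ not appearing in $C$ are left unconstrained. The number of strings contributed by $C$ is $q^{rs - k}$ (choices on its $r$ groups, of whose $rs$ coordinates exactly $k$ are pinned, since the $k$ variables of $C$ distribute among these groups) times $q^{N/s - r}$ (choices of constants on the other groups), so summing over the $M$ clauses, $n = q^{\Order(s) + N/s}\poly(M)$.

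The core claim is that $\phi$ is satisfiable if and only if the optimal Remotest String distance for $X$ is at least
\[
  D \;:=\; \Bigl(\tfrac Ns - r\Bigr)\Bigl(s - \tfrac sq\Bigr) + 1.
\]
For the forward direction, let $\alpha^*$ be a balanced satisfying assignment and put $x^* = \alpha^*$. For any $\alpha \in X$, added on account of a clause $C$: on each of the $\tfrac Ns - r$ groups $P \notin G_C$ we have $\HD(\alpha^*[P], \alpha[P]) = s - \tfrac sq$ exactly, since $\alpha^*[P]$ carries the constant symbol of $\alpha[P]$ in precisely $\tfrac sq$ positions; and since $\alpha^*$ satisfies $C$, it disagrees with $p_C$ — hence with $\alpha$ — on at least one variable of $C$, contributing at least one further mismatch. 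So $\HD(x^*, \alpha) \ge D$ for all $\alpha \in X$, i.e.\ $d(x^*, X) \ge D$. For the converse, suppose $x^* \in [q]^N$ satisfies $d(x^*, X) \ge D$ but, read as an assignment, falsifies some clause $C$, so $x^*$ matches $p_C$ on the variables of $C$. Then I build a nearby $\alpha \in X$ for $C$: set $\alpha = x^*$ on the variables of $C$ (this equals $p_C$) and on all free variables of the groups in $G_C$, and set $\alpha[P] = c^s$ for each $P \notin G_C$ where $c$ is a most frequent symbol of $x^*[P]$, so $\HD(x^*[P], \alpha[P]) \le s - \tfrac sq$ by pigeonhole. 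Summing gives $\HD(x^*, \alpha) \le (\tfrac Ns - r)(s - \tfrac sq) = D - 1 < D$, contradicting $d(x^*, X) \ge D$; hence $x^*$ satisfies $\phi$.

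To conclude, run the hypothesized $\Order(|\Sigma|^{(1-\epsilon)d}\poly(n))$-time algorithm on $X$ to obtain the optimal distance and compare it with $D$; by the equivalence this decides the satisfiability of $\phi$. Constructing $X$ costs time $\poly(nd)$, and with $d = N$ and $n = q^{\Order(s)+N/s}\poly(M)$ we have $\poly(n) = q^{\Order(s + N/s)}\poly(M)$, so the total running time is $\Order(q^{(1-\epsilon)N}\poly(n)) + \poly(nd) = \Order(q^{(1-\epsilon)N + \Order(s + N/s)}\poly(M))$, as required.

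I expect the main obstacle to be getting the two directions of the equivalence and the threshold $D$ exactly right, i.e.\ seeing that conditions (i) and (ii) are precisely the right constraints. The forward direction depends essentially on balancedness, which forces the contribution of each untouched group to be \emph{exactly} $s - \tfrac sq$ rather than merely at least that much; the backward direction depends on leaving the non-literal variables of the touched groups free, which is what allows a falsifying $x^*$ to be matched at distance precisely $D-1$. The rest is routine bookkeeping: the per-clause count $q^{rs-k}$, the use of $r = \Order(1)$ to keep $n$ within $q^{\Order(s)+N/s}\poly(M)$, and the arithmetic of $D$ and of the final running time.
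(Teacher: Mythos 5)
Your proof is correct and follows essentially the same route as the paper's: the same construction of the Remotest String instance (strings falsifying a clause $C$, constant on each block not touching $C$), the same threshold $D = (\tfrac Ns - r)(s-\tfrac sq)+1 = \tfrac{(q-1)(N-rs)}{q}+1$, the same use of balancedness to make each untouched block contribute exactly $s-\tfrac sq$ to the distance, and the same pigeonhole construction of a nearby $\alpha$ in the soundness/converse direction. The only cosmetic difference is that you phrase soundness contrapositively ("$d(x^*,X)\ge D$ implies $x^*$ satisfies $\phi$") and give a slightly sharper per-clause count $q^{rs-k}$ where the paper just upper-bounds by $q^{rs}$; neither changes the argument.
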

\begin{proof}
We start with some notation: For a clause $C$, we write $\mathcal P(C) \subseteq \mathcal P$ to address all groups containing a literal from $C$. We start with the construction of the Remotest String instance with alphabet $\Sigma = [q]$ and dimension $d = N$. Here, we make use of the natural correspondence between strings~\makebox{$\alpha \in \Sigma^d$} and assignments~\makebox{$\alpha \in [q]^N$}. In the instance, we add the following strings: For each clause $C$, add all assignments~\makebox{$\alpha \in [q]^N$} to the instance which satisfy the following two constraints:
\begin{enumerate}
    \item The assignment $\alpha$ falsifies the clause $C$.
    \item For each group $P \in \mathcal P \setminus \mathcal P(C)$, the subsequence $\alpha[P]$ contains only one symbol.\\(That is,~$\alpha[P] = a^s$ for some $a \in [q]$.)
\end{enumerate}
We prove that this instance is complete and sound.

\begin{claim}[Completeness] \label{lem:qsat-to-remotest:clm:completeness}
If $\phi$ is satisfiable, then there is some $\alpha^* \in [q]^d$ with $d(\alpha^*, X) > \frac{(q-1)(N - rs)}{q}$.
\end{claim}
\begin{proof}
Since we assume that the formula $\phi$ is satisfiable and balanced, there is a satisfying and balanced assignment $\alpha^*$. To prove that \smash{$d(\alpha^*, X) > \frac{(q-1)(N - rs)}{q}$}, we prove that \smash{$\HD(\alpha^*, \alpha) > \frac{(q-1)(N - rs)}{q}$} for each string~$\alpha$ added to the Remotest String instance. Let $C$ be the clause associated to $\alpha$. From the two conditions on $\alpha$, we get the following two bounds.

By the first condition, $\alpha$ is a falsifying assignment of $C$. In particular, the subsequence \smash{$\alpha[\bigcup_{P \in \mathcal P(C)} P]$} falsifies $C$ (which is guaranteed to contain all variables visible to $C$) falsifies $C$. Since $\alpha^*$ is a satisfying assignment to the whole formula, and in particular to $C$, we must have that \smash{$\alpha^*[\bigcup_{P \in \mathcal P(C)} P] \neq \alpha[\bigcup_{P \in \mathcal P(C)} P]$}, and thus \smash{$\sum_{P \in \mathcal P(C)} \HD(\alpha^*[P], \alpha[P]) \geq 1$}.

By the second condition, for any group $P \in \mathcal P \setminus \mathcal P(C)$, the subsequence $\alpha[P]$ contains only one symbol. Since $\alpha^*$ is balanced, $\alpha^*[P]$ contains that symbol exactly in a $1/q$-fraction of the positions and differs in the remaining ones from \smash{$\alpha[P]$}. It follows that \smash{$\HD(\alpha^*[P], \alpha[P]) = s - \frac{s}{q} = \frac{(q-1) s}{q}$}.

Combining both bounds, we have that
\begin{equation*}
    \HD(\alpha^*, \alpha) = \!\!\!\sum_{P \in \mathcal P(C)}\!\!\! \HD(\alpha^*, \alpha[P]) + \!\!\!\!\!\sum_{P \in \mathcal P \setminus \mathcal P(C)}\!\!\!\!\! \HD(\alpha^*, \alpha[P]) \geq 1 + \parens*{\frac{N}{s} - r} \cdot \frac{(q-1)s}{q} = \frac{(q-1)(N - rs)}{q} + 1,
\end{equation*}
and the claim follows.
\end{proof}

\begin{claim}[Soundness] \label{lem:qsat-to-remotest:clm:soundness}
If $\phi$ is not satisfiable, then for all $\alpha^* \in [q]^d$ we have $d(\alpha^*, X) \leq \frac{(q-1)(N - rs)}{q}$.
\end{claim}
\begin{proof}
Take any $\alpha^* \in [q]^d$. Since $\phi$ is not satisfiable, $\alpha^*$ is a falsifying assignment of $\phi$ and thus there is some clause $C$ that is falsified by $\alpha^*$. Our strategy is to find some string $\alpha \in [q]^d$ in the constructed instance with~\smash{$\HD(\alpha^*, \alpha) \leq \frac{(q-1)(N - rs)}{q}$}.

We define that string $\alpha$ group-wise: In the groups $\mathcal P(C)$ touching~$C$, we define~$\alpha$ to be exactly as~$\alpha^*$, that is, \smash{$\alpha[\bigcup_{P \in \mathcal P(C)}] := \alpha^*[\bigcup_{P \in \mathcal P(C)}]$}. For each group $P \in \mathcal P \setminus \mathcal P(C)$ not touching~$C$, let $a \in [q]$ be an arbitrary symbol occurring at least \smash{$\frac{s}{q}$} times in $\alpha^*[P]$ and assign $\alpha[P] := a^s$. By this construction we immediately have that \smash{$\HD(\alpha[P], \alpha^*[P]) \leq s - \frac{s}{q} = \frac{(q-1)s}{q}$}, and in total
\begin{equation*}
    \HD(\alpha^*, \alpha) = \!\!\!\sum_{P \in \mathcal P(C)}\!\!\! \HD(\alpha^*, \alpha[P]) + \!\!\!\!\!\sum_{P \in \mathcal P \setminus \mathcal P(C)}\!\!\!\!\! \HD(\alpha^*, \alpha[P]) \leq 0 + \parens*{\frac{N}{s} - r} \cdot \frac{(q-1)s}{q} = \frac{(q-1)(N - rs)}{q},
\end{equation*}
as claimed.
\end{proof}

In combination, \cref{lem:qsat-to-remotest:clm:completeness,lem:qsat-to-remotest:clm:soundness} show that the constructed instance of the Remotest String problem is indeed equivalent to the given $(q, k)$-SAT instance $\phi$ in the sense that $\phi$ is satisfiable if and only if there is a remote string with distance more than \smash{$\frac{(q-1)(N - rs)}{q}$}.

\medskip
It remains to analyze the running time. Let $n$ denote the number of strings in the constructed instance. As a first step, we prove that $n \leq q^{\Order(s) + \frac{N}{s}} \cdot M$ and that we can construct the instance in time $\poly(n)$. Indeed, focus on any of the $M$ clauses. The strings $\alpha$ in the instance are unconstrained in all groups touching~$M$ (up to the condition that $\alpha$ must falsify some clause) which accounts for $r \cdot s$ positions and thus $q^{rs} = q^{\Order(s)}$ options. For each group not touching $M$ we can choose between $q$ possible values, and therefore the total number of options is \smash{$q^{\frac{N}{s} - r} \leq q^{\frac{N}{s}}$}. Therefore, the total number of strings is indeed $n \leq M \cdot q^{\Order(s)} \cdot q^{\frac{N}{s}}$. Moreover, it is easy to see check that the instance can be constructed in time $\poly(n)$.

As the time to construct the instance is negligible, the total running time is dominated by solving the Remotest String instance. Assuming an efficient algorithm in time $\Order(|\Sigma|^{(1-\epsilon)d} \poly(n))$, this takes time~\smash{$\Order(q^{(1-\epsilon) N + \Order(s + \frac Ns)} \poly(M))$} as claimed.
\end{proof}

\subsection{Putting the Pieces Together} \label{sec:continuous:sec:theorem}
We are finally ready to prove \cref{cor:continuous-closest-seth,thm:continuous-remotest-seth}.

\corcontinuousclosestseth*
\begin{proof}
From \cref{thm:continuous-remotest-seth} we know that it is SETH-hard to solve the Remotest String problem over a binary alphabet in time $\Order(2^{(1-\epsilon) d} \poly(n))$. For this proof, it thus suffices to reduce a Closest String instance~\makebox{$X \subseteq \set{0, 1}^d$} to an instance of Remotest String without increasing $n$ and $d$. The construction is trivial: Simply view $X$ as an instance of Remotest String. Let us write $\overline x$ to denote the \emph{complement} of a binary string $x$ (obtained by flipping the bits in all positions). Since for any binary strings $x, y$ it holds that
\begin{equation*}
    \HD(x, y) = d - \HD(\overline x, y),
\end{equation*}
it follows that
\begin{equation*}
    \min_{x^* \in \set{0, 1}^d} \max_{y \in X} \HD(x, y) = \min_{x^* \in \set{0, 1}^d} \max_{y \in X} (d - \HD(\overline{x^*}, y)) = d - \max_{x^* \in \set{0, 1}^d} \min_{y \in X} \HD(x^*, y).
\end{equation*}
We can therefore solve the Closest String problem by returning $d$ minus the Remotest String distance.
\end{proof}

\thmcontinuousremotestseth*
\begin{proof}
Suppose that the continuous Remotest String problem can be solved in time $\Order(|\Sigma|^{(1-\epsilon) d} \poly(n))$ for some $\epsilon > 0$ and for $|\Sigma| = \order(d)$. With this in mind, we design a better-than-brute-force $(q, k)$-SAT algorithm for $q = |\Sigma|$ by combining the previous three \cref{lem:qsat-regularizing,lem:qsat-balancing,lem:qsat-to-remotest}. Let $\phi$ be the input formula, and let~$\mathcal P$ denote a partition of the variables into groups of size $s$ (which is yet to be determined) as before.
\begin{enumerate}
    \item Using \cref{lem:qsat-regularizing}, construct a \emph{regular} $(q, 2k)$-formula $\phi'$ which is equivalent to $\phi$.
    \item Using \cref{lem:qsat-balancing}, construct regular $(q, 2k)$-formulas $\phi_1', \dots, \phi_t'$ all of which are equivalent to $\phi$. At least one of these formulas is \emph{balanced}.
    \item By means of the reduction in \cref{lem:qsat-to-remotest}, solve all $t$ formulas $\phi_1', \dots, \phi_t'$. If a formula is reported to be satisfiable, check whether the answer is truthful (e.g., using the standard decision-to-reporting reduction) and if so report that the formula is satisfiable. We need the additional test since, strictly speaking, we have not verified in \cref{lem:qsat-to-remotest} that the algorithm is correct for non-balanced inputs.
\end{enumerate}
The correctness is obvious. Let us analyze the running time. Constructing the formula $\phi'$ takes polynomial time and can be neglected. By \cref{lem:qsat-regularizing}, $\phi'$ has $N' = N + \Order(s)$ variables and $M' = M + \Order(s \poly(q))$ clauses. The construction of the formulas $\phi_1', \dots, \phi_t'$ also runs in polynomial time $\poly(N' M' t)$ and can be neglected; this time we do not increase the number of variables and clauses. Moreover, \cref{lem:qsat-balancing} guarantees that
\begin{equation*}
    t = ((s + 1)(q - 1))^{(q - 1) \frac{N'}{s}} \leq (sq)^{\Order(\frac{qN}{s})},
\end{equation*}
By picking $s = c q$ (for some parameter $c$ to be determined), this becomes
\begin{equation*}
    t \leq (c q^2)^{\Order(\frac{N}{c})} = q^{\Order(\frac{N}{c } \log_q(cq^2))} = q^{N \cdot \Order(\frac{\log c}{c})}.
\end{equation*}
Finally, by \cref{lem:qsat-to-remotest} solving each formula $\phi_i'$ takes time
\begin{equation*}
    q^{(1-\epsilon) N' + \Order(s + \frac{N'}{s})} \poly(M') = q^{(1-\epsilon) N + \Order(s + \frac Ns)} \poly(M) = q^{(1-\epsilon) N + \order(c N) + \Order(\frac Nc)} \poly(M),
\end{equation*}
(using that $s = cg = \order(c N)$), and thus the total running time is bounded by
\begin{equation*}
    q^{N \cdot \Order(\frac{\log c}{c})} \cdot q^{(1-\epsilon) N + \order(c N) + \Order(\frac Nc)} \poly(M) = q^{(1-\epsilon + \order(c) + \Order(\frac{\log c}{c})) N} \poly(M).
\end{equation*}
Note that by picking $c$ to be a sufficiently large constant (depending on $\epsilon$), the exponent becomes $(1 - \frac{\epsilon}{2}) N$, say. We have therefore witness an algorithm for the $(q, k)$-SAT problem in time \smash{$\Order(q^{(1 - \epsilon/2) N} \poly(M))$}, which contradicts SETH by \cref{lem:qsat-seth}.
\end{proof}
\section{Discrete Closest String via Inclusion-Exclusion} \label{sec:discrete}
In this section, we mainly focus on presenting an algorithm for the discrete Closest String problem with \emph{subquadratic} running time whenever the dimension is small, i.e. $d= o(\log n)$. Our algorithm relies on the inclusion-exclusion principle, and is, to the best of our knowledge, the first application of this technique to the Closest and Remotest String problems.  Specifically, we obtain the following result:


\thminclexclclosest*

We structure this section as follows: First, we present a high-level overview of the main ideas behind the algorithm; for the sake of presentation, we focus only on the Closest String problem. We start developing a combinatorial toolkit to tackle the Closest String problem in \cref{sec:discrete:sec:inclusion-exclusion}. Then, in \cref{sec:discrete:sec:algorithm} we provide the actual algorithm and prove \cref{thm:incl-excl-closest}. Finally, in \cref{sec:discrete:sec:remotest} we refer to the discrete Remotest String problem.

After presenting our algorithm for the discrete Closest String problem with \emph{subquadratic} running time whenever the dimension is small in \cref{sec:discrete:sec:large_dim} we present an algorithm for the discrete Closest String problem with better running time in the large-dimension regime. In \cref{sec:discrete:sec:equivalence}, at the end of this section, we show a fine-grained equivalence between the Closest String problem and the Remotest String problem.

Before we describe our algorithm for the discrete Closest String problem, we provide some intuition about the general connection between the inclusion-exclusion principle and the Hamming distance between a pair of strings. Our key insight is that the inclusion-exclusion principle allows us to express whether two strings have Hamming distance bounded by, say $k$. The following lemma makes this idea precise:

\begin{restatable}[Hamming Distance by Inclusion-Exclusion]{lemma}{lemhdinclusionexclusion} \label{lem:hd-inclusion-exclusion}
Let $x$ and $y$ be two strings of length $d$ over some alphabet $\Sigma$, and let $0 \leq k < d$. Then:
\begin{equation*}
    \One(\HD(x, y) \leq k) = \sum_{\substack{I \subseteq [d]\\|I| \geq d - k}} (-1)^{|I| - d + k} \cdot \binom{|I| - 1}{d - k - 1} \cdot \One(x[I] = y[I]).
\end{equation*}
\end{restatable}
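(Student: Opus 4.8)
The plan is to verify the identity by evaluating the right-hand side directly, treating the cases $\HD(x,y) \le k$ and $\HD(x,y) > k$ separately. Let $D = \{i \in [d] : x[i] = y[i]\}$ be the set of agreement positions, so $m := |D| = d - \HD(x,y)$. The key observation is that $\One(x[I] = y[I]) = 1$ exactly when $I \subseteq D$, so the sum on the right collapses to $\sum_{I \subseteq D,\ |I| \ge d-k} (-1)^{|I|-d+k}\binom{|I|-1}{d-k-1}$. Grouping the subsets $I \subseteq D$ by their size $j = |I|$, and using $\binom{m}{j}$ for the number of such subsets, the right-hand side becomes $\sum_{j \ge d-k} (-1)^{j-d+k}\binom{j-1}{d-k-1}\binom{m}{j}$. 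Writing $\ell = d - k$ for brevity (so $1 \le \ell \le d$ since $0 \le k < d$), I need to show that this equals $\One(m \ge \ell)$, i.e. that $\sum_{j=\ell}^{m} (-1)^{j-\ell}\binom{j-1}{\ell-1}\binom{m}{j}$ equals $1$ when $m \ge \ell$ and $0$ when $m < \ell$. The case $m < \ell$ is immediate since the sum is empty; the whole content is the identity $\sum_{j=\ell}^{m} (-1)^{j-\ell}\binom{j-1}{\ell-1}\binom{m}{j} = 1$ for $m \ge \ell \ge 1$.

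To prove that binomial identity, I would reindex with $i = j - \ell \ge 0$, turning it into $\sum_{i=0}^{m-\ell} (-1)^i \binom{i+\ell-1}{\ell-1}\binom{m}{i+\ell} = 1$. There are a few clean routes. One is a generating-function argument: $\binom{i+\ell-1}{\ell-1}$ is the coefficient of $z^i$ in $(1-z)^{-\ell}$, and using the Vandermonde-type convolution against $\sum_i (-1)^i \binom{m}{i+\ell} z^{?}$ one can extract the constant term. A cleaner combinatorial route is to recognize $\binom{j-1}{\ell-1}\binom{m}{j}$ via the subset-of-a-subset identity $\binom{m}{j}\binom{j}{\ell} = \binom{m}{\ell}\binom{m-\ell}{j-\ell}$ — but note the binomial here is $\binom{j-1}{\ell-1}$, not $\binom{j}{\ell}$, so I would first use $\binom{j-1}{\ell-1} = \binom{j}{\ell}\cdot \frac{\ell}{j}$ or, better, avoid the division by instead telescoping. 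The slickest approach I would actually pursue: prove it by induction on $\ell$. For $\ell = 1$ the sum is $\sum_{j=1}^m (-1)^{j-1}\binom{m}{j} = -\big((1-1)^m - 1\big) = 1$. For the inductive step, use Pascal's rule $\binom{j-1}{\ell-1} = \binom{j-2}{\ell-2} + \binom{j-2}{\ell-1}$ together with $\binom{m}{j} = \binom{m-1}{j-1} + \binom{m-1}{j}$, and reorganize the double sum so that it reduces to the statement for $\ell - 1$ with parameter $m - 1$; the cross terms should cancel. Alternatively — and this may be the least error-prone — one shows the finite difference identity directly: $\sum_{j} (-1)^j \binom{j-1}{\ell-1}\binom{m}{j}$ is, up to sign, the $m$-th finite difference at $0$ of the polynomial-like sequence $\binom{t-1}{\ell-1}$ evaluated appropriately, and such differences of a degree-$(\ell-1)$ polynomial vanish once $m > \ell - 1$ except for the boundary contribution that yields $1$.

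I expect the main obstacle to be nailing down the binomial identity $\sum_{j=\ell}^{m}(-1)^{j-\ell}\binom{j-1}{\ell-1}\binom{m}{j} = 1$ cleanly, since the shifted upper index $j-1$ (rather than $j$) in $\binom{j-1}{\ell-1}$ breaks the most standard subset-of-subset manipulation and the telescoping/induction bookkeeping has to be done carefully to avoid sign errors and off-by-one issues at the boundary $j = \ell$. Everything else — the reduction of the RHS to a sum over $I \subseteq D$, the grouping by $|I|$, and handling $m < \ell$ — is routine. I would also double-check the edge cases $k = d-1$ (so $\ell = 1$) and $\HD(x,y) = d$ (so $m = 0$) explicitly to make sure the conventions on $\binom{|I|-1}{d-k-1}$ behave, in particular that $\binom{-1}{0}$ or empty-sum conventions do not cause trouble.
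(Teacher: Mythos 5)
Your proposal follows the same path as the paper: reduce the sum to $I\subseteq D$ (the paper writes $J$), group by size $|I|$, and boil everything down to a binomial identity proved by induction via Pascal's rule. The identity $\sum_{j=\ell}^m (-1)^{j-\ell}\binom{j-1}{\ell-1}\binom{m}{j}=1$ you isolate is exactly the paper's \cref{lem:binom}: its $m$ is your $\ell=d-k$ and its $\ell$ is your $m-\ell=k-\HD(x,y)$. The one substantive difference is the choice of induction variable: the paper fixes $d-k$ and inducts on the slack $k-\HD(x,y)$, splitting via Pascal and proving one of the two halves vanishes in a separate helper lemma, whereas you fix the slack implicit in the upper limit and induct on $\ell=d-k$. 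Your sketched step --- apply Pascal to both $\binom{j-1}{\ell-1}$ and $\binom{m}{j}$ and hope the cross terms cancel --- does in fact go through: of the four resulting sums, one is $f(\ell-1,m-1)$ after the shift $j\mapsto j-1$, two regroup to $f(\ell,m-1)$, and the last is $-f(\ell,m-1)$ after the same shift, giving $f(\ell,m)=f(\ell-1,m-1)$. You should, however, actually carry out that bookkeeping rather than assert the cancellation. As an aside, the Vandermonde route you gesture at is arguably the cleanest of all: writing $\binom{j-1}{\ell-1}=(-1)^{j-\ell}\binom{-\ell}{j-\ell}$ turns your sum into $\sum_i\binom{-\ell}{i}\binom{m}{m-\ell-i}=\binom{m-\ell}{m-\ell}=1$.
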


Recall that we write $x[I] = y[I]$ to express that the strings $x$ and $y$ are equally restricted to the indices in~$I$. 
The precise inclusion-exclusion-type formula does not matter too much here, but we provide some  intuition for \cref{lem:hd-inclusion-exclusion}  by considering the special cases where $\HD(x,y)=k$ and $\HD(x,y)=k-1$. If $\HD(x, y) = k$, then there is a unique set $I$ of size $d - k$ for which $x[I] = y[I]$. If instead $\HD(x, y) = k - 1$, then there is a unique such set of size $d - k + 1$, and additionally there are $d - k + 1$ such sets of size $d - k$. The scalars \smash{$(-1)^{|I|-d+k} \binom{|I| - 1}{d - k - 1}$} are chosen in such a way that in any case, all these contributions sum up to exactly $1$.

The takeaway from the above lemma is that we can express the proposition that two strings satisfy \makebox{$\HD(x, y) \leq k$} by a linear combination of $2^d$ indicators of the form $\One(x[I] = y[I])$. It is easy to extend this idea further to the following lemma, which is the core of our combinatorial approach:

\begin{restatable}[Radius by Inclusion-Exclusion]{lemma}{lemradiusinclusionexclusion} \label{lem:radius-inclusion-exclusion}
Let $x$ be a string of length $d$ over some alphabet $\Sigma$, let $X$ be a set of strings each of length $d$ over $\Sigma$, and let $0 \leq k < d$. Then $r(x, X) \leq k$ if and only if
\begin{equation*}
    |X| = \sum_{\substack{I \subseteq [d]\\|I| \geq d - k}} (-1)^{|I| - d + k} \cdot \binom{|I| - 1}{d - k - 1} \cdot \abs{\set{y \in X : x[I] = y[I]}}.
\end{equation*}
\end{restatable}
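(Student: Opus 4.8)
The plan is to derive this directly from \cref{lem:hd-inclusion-exclusion} by summing over all strings $y \in X$. First I would observe that, by the definition of the radius, $r(x, X) \leq k$ holds if and only if $\HD(x, y) \leq k$ for every $y \in X$, i.e. if and only if $\One(\HD(x, y) \leq k) = 1$ for all $y \in X$. Since each indicator $\One(\HD(x, y) \leq k)$ takes a value in $\set{0, 1}$, this condition is equivalent to the single scalar identity $\sum_{y \in X} \One(\HD(x, y) \leq k) = \abs{X}$: a sum of $\abs{X}$ terms, each at most $1$, equals $\abs{X}$ precisely when every term equals $1$.

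Next I would plug \cref{lem:hd-inclusion-exclusion} into each term of this sum and interchange the order of summation:
\begin{align*}
    \sum_{y \in X} \One(\HD(x, y) \leq k)
    &= \sum_{y \in X} \sum_{\substack{I \subseteq [d]\\ |I| \geq d-k}} (-1)^{|I| - d + k} \binom{|I| - 1}{d - k - 1} \One(x[I] = y[I]) \\
    &= \sum_{\substack{I \subseteq [d]\\ |I| \geq d-k}} (-1)^{|I| - d + k} \binom{|I| - 1}{d - k - 1} \abs{\set{y \in X : x[I] = y[I]}},
\end{align*}
where in the last step I used that $\sum_{y \in X} \One(x[I] = y[I]) = \abs{\set{y \in X : x[I] = y[I]}}$. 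Combining this with the equivalence from the previous paragraph gives exactly the claimed characterization.

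There is no real obstacle here; the lemma is a routine corollary of \cref{lem:hd-inclusion-exclusion}, and the only point deserving a word of care is the reduction from the quantified statement ``$\HD(x, y) \leq k$ for all $y \in X$'' to a single equation, which hinges on each summand being a $0/1$ indicator — this is precisely what lets the aggregate count $\abs{X}$ certify that every individual distance is at most $k$. (The subsequent proof of \cref{thm:incl-excl-closest} will then exploit that, for a fixed $k$, the right-hand side can be evaluated efficiently across all candidate centers $x \in X$ by maintaining, for each $I \subseteq [d]$, the multiset of $\abs{\Sigma}$-ary values $y[I]$ occurring among $y \in X$.)
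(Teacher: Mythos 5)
Your proof is correct and follows exactly the paper's approach: sum the identity from \cref{lem:hd-inclusion-exclusion} over all $y \in X$, interchange the order of summation, and observe that the left-hand side $\sum_{y\in X}\One(\HD(x,y)\le k)$ equals $\abs{X}$ precisely when every indicator is $1$, i.e.\ precisely when $r(x,X)\le k$. Your version is slightly more explicit about the $0/1$-valuedness of the indicators being the reason the scalar identity certifies the universally quantified condition, but the argument is the same one the paper gives.
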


\noindent
Given this lemma, our algorithm for the Closest String problem is easy to state. We proceed in two steps:

\paragraph{Step 1: Partition.}
Precompute, for all $x \in X$ and for all $I \subseteq [d]$, the value $\abs{\set{y \in X : x[I] = y[I]}}$. This can be implemented in time $\Order(n \cdot 2^d \cdot \poly(d))$ by partitioning the input strings $X$ depending on their characters in the range $I$. After computing this partition, we can read the value $\abs{\set{y \in X : x[I] = y[I]}}$ as the number of strings in the same part as $x$. 

\paragraph{Step 2: Inclusion-Exclusion}
We test for each $0 \leq k \leq d$ and $x \in X$, whether $r(x, X) \leq k$ and finally return the best answer. By \cref{lem:radius-inclusion-exclusion} we can equivalently express the event $r(x, X) \leq k$ via
\begin{equation*}
    |X| = \sum_{\substack{I \subseteq [d]\\|I| \geq d - k}} (-1)^{|I| - d + k} \cdot \binom{|I| - 1}{d - k - 1} \cdot \abs{\set{y \in X : x[I] = y[I]}}.
\end{equation*}
 By observing that the sum contains only~$2^d$ terms and noting that we have precomputed the values $\abs{\set{y \in X : x[I] = y[I]}}$, we can evaluate the sum, for a fixed $x$, in time $\Order(2^d \cdot \poly(d))$. In total, across all strings $x \in X$, the running time becomes $\Order(n \cdot 2^d \cdot \poly(d)$). 

Finally, let us briefly comment on the $\poly(d)$ term in the running time. When evaluating the above sum naively, we naturally incur a running time overhead of $\poly(d)$ since the numbers  in the sum need $\Omega(d + \log n)$ bits to be represented. However, this overhead can be circumvented by evaluating the expression in a smarter way. We provide more details in \cref{sec:discrete:sec:algorithm}.\bigskip

\subsection{Inclusion-Exclusion} \label{sec:discrete:sec:inclusion-exclusion}
We start with the proofs of the technical \cref{lem:hd-inclusion-exclusion,lem:radius-inclusion-exclusion}. These proofs rely in turn on two auxiliary lemmas, which we prove first. 

\begin{lemma}
Let $m > 0$ and $\ell > 0$. Then $\sum_{i = 0}^{\ell} {(-1)}^{i} \binom{m + \ell - 1}{m + i - 1} \binom{m + i - 1}{m - 1} = 0$.
\end{lemma}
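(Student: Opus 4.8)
The plan is to recognize the product of binomial coefficients as an instance of the ``subset-of-a-subset'' identity $\binom{n}{k}\binom{k}{j} = \binom{n}{j}\binom{n-j}{k-j}$, which will pull the parts of the summand that depend on $i$ out of a clean binomial factor and reduce the whole sum to the alternating-sum identity $\sum_i (-1)^i \binom{\ell}{i} = 0$.

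Concretely, first I would apply the identity with $n = m + \ell - 1$, $k = m + i - 1$, and $j = m - 1$, which gives
\begin{equation*}
    \binom{m + \ell - 1}{m + i - 1}\binom{m + i - 1}{m - 1}
    = \binom{m + \ell - 1}{m - 1}\binom{(m + \ell - 1) - (m-1)}{(m + i - 1) - (m - 1)}
    = \binom{m + \ell - 1}{m - 1}\binom{\ell}{i}.
\end{equation*}
(One should check the edge cases: for $i$ in the summation range $0 \le i \le \ell$ all the upper and lower indices are nonnegative and the identity is the usual one; if one prefers, extend all binomials by the convention that $\binom{a}{b} = 0$ when $b < 0$ or $b > a$, under which the identity holds unconditionally.)

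Next I would factor the $i$-independent coefficient $\binom{m + \ell - 1}{m-1}$ out of the sum, leaving
\begin{equation*}
    \sum_{i=0}^{\ell} (-1)^i \binom{m+\ell-1}{m+i-1}\binom{m+i-1}{m-1}
    = \binom{m+\ell-1}{m-1}\sum_{i=0}^{\ell} (-1)^i \binom{\ell}{i}.
\end{equation*}
Finally, by the binomial theorem $\sum_{i=0}^{\ell} (-1)^i \binom{\ell}{i} = (1 - 1)^{\ell} = 0$, using the hypothesis $\ell > 0$ so that $0^\ell = 0$. Hence the entire expression is zero, as claimed.

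There is no real obstacle here: the only thing to be slightly careful about is bookkeeping with the binomial-coefficient conventions at the boundary of the summation range, and making sure the hypotheses $m > 0$, $\ell > 0$ are invoked exactly where they are needed (positivity of $m$ guarantees $m - 1 \ge 0$ so that $\binom{m+\ell-1}{m-1}$ is a bona fide binomial coefficient, and $\ell \ge 1$ is what makes the alternating sum vanish).
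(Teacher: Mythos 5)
Your proof is correct and takes essentially the same route as the paper: both factor the product of binomial coefficients into a constant times $\binom{\ell}{i}$ and then invoke $\sum_{i=0}^{\ell}(-1)^i\binom{\ell}{i}=0$. The paper does the factoring by expanding into factorials while you invoke the subset-of-a-subset identity directly, which is a cleaner presentation of the same calculation (and incidentally avoids a small typo in the paper, which writes the constant as $\binom{m+\ell}{\ell}$ rather than $\binom{m+\ell-1}{\ell}=\binom{m+\ell-1}{m-1}$; it is harmless since the factor multiplies zero).
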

\begin{proof}
Using that $\binom nk = \frac{n!}{k!\,(n-k)!}$, we can rearrange the equation as follows:
\begin{gather*}\allowdisplaybreaks
    \sum_{i=0}^{\ell} (-1)^i \binom{m + \ell - 1}{m + i - 1} \binom{m + i - 1}{m - 1} 
    = \sum_{i=0}^{\ell} (-1)^i \frac{(m+\ell-1)!}{(m+i-1)!\,(\ell-i)!} \cdot \frac{(m+i-1)!}{(m-1)!\, i!} \\
    \qquad= \sum_{i=0}^{\ell} (-1)^i \frac{(m+\ell-1)!}{\ell!\, (m-1)!} \cdot \frac{\ell!}{i!\,(\ell-i)!} \\
    \qquad=\binom{m+\ell}{\ell} \sum_{i=0}^{\ell} (-1)^i \binom{\ell}{i}.
\end{gather*}
This becomes zero, as can be seen by an application of the binomial theorem $0 = (1 - 1)^\ell = \sum_{i=0}^\ell (-1)^i \binom{\ell}{i}$, provided that $\ell > 0$.
\end{proof}

\begin{lemma} \label{lem:binom}
Let $m > 0$ and $\ell \geq 0$. Then $\sum_{i = 0}^{\ell} {(-1)}^{i} \binom{m + \ell}{m + i} \binom{m + i - 1}{m - 1} = 1$.
\end{lemma}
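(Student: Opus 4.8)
The plan is to prove the identity by induction on $\ell$, using the previous (unnamed) lemma as the workhorse of the inductive step. Write $S(m,\ell) := \sum_{i=0}^{\ell} (-1)^i \binom{m+\ell}{m+i}\binom{m+i-1}{m-1}$ for the left-hand side. For the base case $\ell = 0$ the sum has only the term $i = 0$, which equals $\binom{m}{m}\binom{m-1}{m-1} = 1$, so $S(m,0) = 1$.

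For the inductive step, I would fix $\ell \geq 1$ and assume $S(m,\ell-1) = 1$ for every $m > 0$. Applying Pascal's rule $\binom{m+\ell}{m+i} = \binom{m+\ell-1}{m+i-1} + \binom{m+\ell-1}{m+i}$ and splitting the sum accordingly gives
\begin{equation*}
  S(m,\ell) = \sum_{i=0}^{\ell} (-1)^i \binom{m+\ell-1}{m+i-1}\binom{m+i-1}{m-1} + \sum_{i=0}^{\ell} (-1)^i \binom{m+\ell-1}{m+i}\binom{m+i-1}{m-1}.
\end{equation*}
The first sum is exactly the expression appearing in the previous lemma, with the same $m$ and $\ell$, so it vanishes because $\ell \geq 1$. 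In the second sum the term $i = \ell$ contributes nothing since $\binom{m+\ell-1}{m+\ell} = 0$; discarding it leaves $\sum_{i=0}^{\ell-1} (-1)^i \binom{m+(\ell-1)}{m+i}\binom{m+i-1}{m-1} = S(m,\ell-1)$, which is $1$ by the inductive hypothesis. Hence $S(m,\ell) = 0 + 1 = 1$, completing the induction.

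I do not expect any genuinely hard step here: the argument is pure bookkeeping, and the only points to double-check are that the first sum after Pascal's rule matches the previous lemma verbatim and that the second sum (after deleting its top term) is literally the $\ell-1$ instance of the identity being proved. As an alternative that sidesteps induction entirely, one can use the upper-negation identity $(-1)^i \binom{m+i-1}{m-1} = \binom{-m}{i}$ together with the symmetry $\binom{m+\ell}{m+i} = \binom{m+\ell}{\ell-i}$ to rewrite the sum as $\sum_{i} \binom{-m}{i}\binom{m+\ell}{\ell-i}$, which the Vandermonde convolution evaluates to $\binom{-m+(m+\ell)}{\ell} = \binom{\ell}{\ell} = 1$ (the terms with $i > \ell$ vanish, so restricting or not restricting the range is harmless). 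I would present the inductive proof as the main one, since it reuses the lemma already established just above.
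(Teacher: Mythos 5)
Your main inductive argument --- applying Pascal's rule, killing the first resulting sum with the preceding lemma, and recognizing the second sum as the $\ell-1$ instance of the identity --- is essentially the paper's proof verbatim, differing only in the cosmetic point that you use the convention $\binom{m+\ell-1}{m+\ell}=0$ to drop the top term whereas the paper invokes the boundary form of Pascal's rule directly. Your alternative via upper negation and Vandermonde's convolution is a correct, self-contained route that avoids induction, but leading with the inductive version is the right call since it reuses the lemma established immediately above.
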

\begin{proof}
Fix $m > 0$. The proof is by induction on $\ell$. The equation is trivially true for $\ell = 0$ since both sides become $1$. So assume that $\ell > 0$. By an application of Pascal's triangle (\smash{$\binom nk = \binom{n-1}{k-1} + \binom{n-1}{k}$} if $0 < k < n$ and \smash{$\binom nk = \binom{n-1}{k-1}$} if $n = k$), the left-hand side becomes:
\begin{gather*}
    \sum_{i = 0}^{\ell} {(-1)}^{i} \binom{m + \ell}{m + i} \binom{m + i - 1}{m - 1} \\
    \qquad= \sum_{i = 0}^{\ell} {(-1)}^{i} \binom{m + \ell - 1}{m + i - 1} \binom{m + i - 1}{m - 1} + \sum_{i = 0}^{\ell-1} {(-1)}^{i} \binom{m + \ell - 1}{m + i} \binom{m + i - 1}{m - 1}.
\end{gather*}
Using the previous lemma, the first sum is $0$ and by the induction hypothesis, the second sum is $1$. Since the choice of $m$ was arbitrary, the lemma holds for all $m>0$.
\end{proof}

\lemhdinclusionexclusion*
\begin{proof}
We distinguish two cases. First, if $\HD(x,y) > k$ then for every set $I\subseteq[n]$ of size $|I| \geq d - k$ we have that $x[I]\neq y[I]$ (as otherwise the distance would be at most $k$). Thus, the right-hand side of the equation is zero and so is the left-hand side by the assumption that $\HD(x, y) > k$.

Next, assume that $\HD(x, y) \leq k$. Let $\ell$ such that $\HD(x,y) = k-\ell$ and let $J = \set{i : x[i] = y[i]}$ be the set of matching position between $x$ and $y$. Since the Hamming distance between $x$ and $y$ is $k - \ell$, we have that~\makebox{$|J|= d - k + \ell$}. 
Observe that for any $I \subseteq [d]$ with $I \nsubseteq J$ it holds that $x[I] \neq y[I]$. Hence, $\One(x[I]=y[I]) = 1$ if and only if $I \subseteq J$. We can thus rewrite the right-hand side of the equation in the following way:
\begin{gather*}
\sum_{\substack{I \subseteq [d] \\d - k \le |I|}} {(-1)}^{|I| - d + k} \cdot \binom{|I| - 1}{d - k - 1} \cdot \One(x[I]=y[I]) \\
    \qquad=  \sum_{\substack{I \subseteq J \\d - k \le |I|}} {(-1)}^{|I| - d + k} \cdot \binom{|I| - 1}{d - k - 1}  \\ 
    \qquad=\sum_{i = 0}^{\ell} \sum_{\substack{I \subseteq J \\ |I| = d - k + i}} {(-1)}^{i} \binom{d - k + i - 1}{d - k - 1}\\ 
    \qquad= \sum_{i = 0}^{\ell} {(-1)}^{i} \binom{d - k + \ell}{d - k + i} \binom{d - k + i - 1}{d - k - 1}
\intertext{By defining $m = d - k$, we can further rewrite this expression as} 
    \qquad= \sum_{i = 0}^{\ell} {(-1)}^{i} \binom{d - k + \ell}{d - k + i} \binom{d - k + i - 1}{d - k - 1} = \sum_{i = 0}^{\ell} {(-1)}^{i} \binom{m + \ell}{m + i} \binom{m + i - 1}{m - 1}.
\end{gather*}
Finally, from \cref{lem:binom} we get that $\sum_{i = 0}^{\ell} {(-1)}^{i} \binom{m + \ell}{m + i} \binom{m + i - 1}{m - 1}$ is exactly $1$. In summary: If $\HD(x,y) \le k$ then also the right-hand side of the equation in the lemma is equal to $1$. 
\end{proof}

Below we consider an extension of this lemma to \emph{sets} of strings:

\lemradiusinclusionexclusion*
\begin{proof}
We sum over each $x \in X$ for both sides of the equation in  \cref{lem:hd-inclusion-exclusion} and get:
\begin{equation*}
\sum_{y \in X} \One(\HD(x, y) \leq k) = \sum_{y \in X} \sum_{\substack{I \subseteq [d]\\|I| \geq d - k}} (-1)^{|I| - d + k} \cdot \binom{|I| - 1}{d - k - 1} \cdot \One(x[I] = y[I]).
\end{equation*}
The left-hand side becomes $\sum_{y \in X} \One(\HD(x, y) \leq k)$ which is equal to $|X|$ if and only if $r(x, X) \leq k$. The right-hand side can be interpreted as follows:
\begin{gather*}
\sum_{y \in X} \sum_{\substack{I \subseteq [d]\\|I| \geq d - k}} (-1)^{|I| - d + k} \cdot \binom{|I| - 1}{d - k - 1} \cdot \One(x[I] = y[I]) \\ = \sum_{\substack{I \subseteq [d]\\|I| \geq d - k}} (-1)^{|I| - d + k} \cdot \binom{|I| - 1}{d - k - 1} \cdot \abs{\set{y \in X : x[I] = y[I]}}
\end{gather*}
Thus, we get that $r(x, X) \leq k$ if and only if
\begin{equation*}
    |X| = \sum_{\substack{I \subseteq [d]\\|I| \geq d - k}} (-1)^{|I| - d + k} \cdot \binom{|I| - 1}{d - k - 1} \cdot \abs{\set{y \in X : x[I] = y[I]}}. \qedhere
\end{equation*}
\end{proof}

\subsection{The Algorithm for Discrete Closest String in Details} \label{sec:discrete:sec:algorithm}

In this subsection, we provide our algorithms for the discrete Closest String problem. Let us first demonstrate how to precompute $\abs{\set{y \in X : x[I] = y[I]}}$ for all strings $x \in X$ efficiently. 
\begin{lemma}
\label{lem:efficient-set-compution}
We can compute $\abs{\set{y \in X : x[I] = y[I]}}$ for all strings $x \in X$ in time $\Order(n \cdot 2^d)$.
\end{lemma}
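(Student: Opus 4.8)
The plan is to compute, for every subset $I \subseteq [d]$, a partition of $X$ according to the characters in the positions indexed by $I$, and then to read off each desired count $\abs{\set{y \in X : x[I] = y[I]}}$ as the size of the block containing $x$. Since there are $2^d$ subsets $I$ and naively partitioning $X$ for each $I$ costs $\Order(n\cdot d)$ (hashing or sorting the projections $x[I]$), this already gives a running time of $\Order(n\cdot 2^d\cdot d)$, and the content of the lemma is to shave the extra $d$ factor by reusing work across subsets.

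The key idea is to enumerate the subsets $I$ in an order where consecutive subsets differ in a single element — i.e., via subset lattice / Gray-code-style incremental updates — and to maintain the partition of $X$ incrementally. Concretely, I would process subsets in increasing order of a fixed nesting: when moving from $I$ to $I \cup \{j\}$, the partition by $I\cup\{j\}$ refines the partition by $I$, and each block of the $I$-partition splits according to the single character $y[j]$. Maintaining the partition as, say, a list of blocks where each block carries its size, this refinement step costs $\Order(n)$ time per added element (touch each string once to route it into its sub-block, using the character $y[j]\in\Sigma$ as a bucket index; $|\Sigma|\le n$ so bucketing is fine). Walking the whole subset lattice in this incremental fashion — e.g. a DFS over subsets that adds one element going down and removes one going up, or more simply iterating $I$ from $\emptyset$ upward and for each $I$ doing one $\Order(n)$ pass — gives $2^d$ partition computations at amortized $\Order(n)$ each, hence total time $\Order(n\cdot 2^d)$. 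Once the partition for a given $I$ is in hand, for every $x\in X$ we look up the block containing $x$ and record its size as $\abs{\set{y\in X : x[I]=y[I]}}$; this is another $\Order(n)$ work per $I$, absorbed in the same bound.

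The main obstacle I anticipate is bookkeeping the per-string cost of the refinement so that it is genuinely $\Order(n)$ and not $\Order(n\cdot|\Sigma|)$ or $\Order(n\log n)$: one must avoid scanning empty buckets when splitting a block on the new coordinate, which is handled by routing only the strings actually present in the block and collecting the resulting nonempty sub-blocks (using a temporary array indexed by $\Sigma$ that is cleared lazily, touching only used entries). A second subtle point is the representation allowing $\Order(1)$-time "which block does $x$ belong to" lookups after each refinement — this is fine if we store, alongside the list of blocks, an array mapping each string index to its current block identifier and update it during the $\Order(n)$ pass. With these implementation details in place the bound $\Order(n\cdot 2^d)$ follows, and notably the counts themselves are integers in $[0,n]$, so arithmetic on them is $\Order(1)$ and contributes no extra factor.
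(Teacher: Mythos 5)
Your proposal is correct and matches the paper's approach: enumerate the subsets $I\subseteq[d]$ bottom-up, maintain for each $I$ the partition of $X$ by agreement on the positions in $I$ via an $\Order(n)$-time single-coordinate refinement of the partition for $I\setminus\{i\}$, and read each count $\abs{\set{y\in X : x[I]=y[I]}}$ off as the size of $x$'s block --- the paper simply delegates the $\Order(n)$ refinement and $\Order(1)$ block lookup to the Habib--Paul--Viennot partition-refinement data structure, whereas you hand-roll the same ingredients (bucketing on the new coordinate, lazy clearing, an index-to-block array). One small imprecision to flag: a Gray-code traversal that also \emph{removes} a coordinate would require coarsening the partition, which is not cheaply supported, so of the enumeration orders you mention it is the bottom-up-by-size dynamic program (storing all computed partitions) or a DFS with per-depth snapshots that actually works, not an arbitrary single-flip walk.
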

\begin{proof}
Our strategy is to compute, for each $I \subseteq [d]$, a partition ${\mathcal{P}}_{I}$ of the set of all strings $X$ such that two strings ${y}_{1}, {y}_{2} \in X$ are in the same part in ${\mathcal{P}}_{I}$ if and only if ${y}_{1}[I] = {y}_{2}[I]$. This is our goal since, for all strings~$x \in X$, the value we are interested in $\abs{\set{y \in X : x[I] = y[I]}}$ is exactly the size of the part $P$ in ${\mathcal{P}}_{I}$ that contains $x$. Thus, if we can efficiently compute, for all $I \subseteq [d]$ and all $x\in X$, the partition ${\mathcal{P}}_{I}$ and the part $P \in {\mathcal{P}}_{I}$ such that $x \in P$ then we have the desired algorithm. 

Computing the partition ${\mathcal{P}}_{I}$ for each subset of $I \subseteq [d]$ when $|I| \leq 1$ is simple: The partition~${\mathcal{P}}_{\emptyset}$ contains just one part which is the entire input set. We also know that \makebox{${\mathcal{P}}_{\{i\}} = \{\{x \in X : x[i] = \sigma\}:\sigma \in \Sigma\}$} for every~\makebox{$0 \le i \le d - 1$}. Thus, we can compute the partitions ${\mathcal{P}}_{\emptyset}$ and ${\mathcal{P}}_{\{i\}}$ for every $0 \le i \le d - 1$ in time~$\Order(n \cdot d)$. The remaining question is how to efficiently compute the partitions ${\mathcal{P}}_{I}$ for each subset of $I \subseteq [d]$ where $|I| \geq 2$. 

The idea is to use dynamic programming in combination with a \emph{partition refinement} data structure. Let us start with some notation: For a partition $\mathcal P$ and a set $S$, we define the \emph{refinement} of $\mathcal P$ by $S$ as the partition $\set{P \cap S, P \setminus S : P \in \mathcal P}$. For two partitions $\mathcal P$ and $\mathcal P'$, we define the refinement of $\mathcal P$ by $\mathcal P'$ by the iterative refinement of all sets $S \in \mathcal P'$. In previous work, Habib, Paul, and Viennot~\cite{HabibPV98} have established a data structure to maintain partitions $\mathcal P$ of some universe $[n]$ that efficiently supports the following two operations:
\begin{itemize}
    \item\emph{Refinement:} We can refine a partition $\mathcal P$ by another partition $\mathcal P'$ in time $\Order(n)$.
    \item\emph{Query:} Given a partition $\mathcal P$ and an element $i \in [d]$, we can find the part $i \in P \in \mathcal P$ in time $\Order(1)$.
\end{itemize}

Given this data structure, our algorithm is simple: We enumerate all sets $I$ in nondecreasing order with respect to their sizes $|I|$. Writing $I = I' \cup \set{i}$ (for some $i \in [d])$, we compute $\mathcal P_I$ as the refinement of the previously computed partitions $\mathcal P_{I'}$ and \smash{$\mathcal P_{\set i}$}. It is straightforward to verify that this algorithm is correct. The running time of each refinement step is $\Order(n)$ and so the total running time is $\Order(n \cdot 2^d)$ as claimed.
\end{proof}

We are finally ready to state our algorithm and prove its correctness using Lemmas \ref{lem:radius-inclusion-exclusion} and \ref{lem:efficient-set-compution}.

\begin{proof}[Proof of \cref{thm:incl-excl-closest}]
First, it is clear that if we test for each $0 \leq k \leq d$ and $x \in X$ whether $r(x, X) \leq k$ then we can find the solution to the discrete Closest String problem. From Lemma \ref{lem:radius-inclusion-exclusion} we know that $r(x, X) \leq k$ if and only if:
\begin{gather*}
|X| = \sum_{\substack{I \subseteq [d]\\|I| \geq d - k}} (-1)^{|I| - d + k} \cdot \binom{|I| - 1}{d - k - 1} \cdot \abs{\set{y \in X : x[I] = y[I]}}.
\end{gather*}
Thus, if we efficiently compute $\abs{\set{y \in X : x[I] = y[I]}}$ for all strings $x \in X$ and efficiently compute the right-hand side of the equation we will have an efficient algorithm for the discrete Closest String problem. We know from Lemma~\ref{lem:efficient-set-compution} that we can precompute $\abs{\set{y \in X : x[I] = y[I]}}$ for all strings $x \in X$ in time~\makebox{$\Order(n \cdot 2^d)$}. Therefore, the only missing part of the algorithm is computing the inclusion-exclusion step in~\makebox{$\Order(n \cdot 2^d)$} time.

\begin{algorithm}[t]
\caption{An algorithm for the discrete Closest String problem in the small-distance regime. See \cref{thm:incl-excl-closest}.}\label{alg:one_center}
\begin{algorithmic}[1]
\State \emph{(Step 1: Precompute $T[x,I] = \abs{\set{y \in X : x[I] = y[I]}}$)}
\State ${\mathcal{P}}_{\emptyset} \gets X$
\State ${\mathcal{P}}_{\{i\}} \gets \{\{x \in X : x[i] = \sigma\}:\sigma \in \Sigma\} \: \: \forall i \in [0,\dots,d - 1]$
\For {$I = {I'} \cup \{i\}$}
    \State ${\mathcal{P}}_{I} \gets \text{refinement of }  {\mathcal{P}}_{I'},{\mathcal{P}}_{\{i\}}$
\EndFor
\For {$x \in X, I \subseteq [d]$}
    \State $T[x,I] \gets |P| \: \text{where} \: x \in P \in {\mathcal{P}}_{I}$
\EndFor

\bigskip
\State \emph{(Step 2: Inclusion-Exclusion)}
\For {$x \in X , I \subseteq [d]$}
    \State $S[x,|I|] \gets S[x,|I|] + T[x,I]$
\EndFor
\For {$k \gets 0,\dots,d - 1$}
    \For{$x \in X$}
        \If{$|X| = \sum_{\ell = d - k}^{d} {(-1)}^{\ell} \cdot \binom{\ell - 1}{d - k - 1} \cdot S[x,\ell]$}
            \State return $x$
        \EndIf
    \EndFor
\EndFor
\State return an arbitrary $x \in X$
\end{algorithmic}
\end{algorithm}

If we naively evaluate the inclusion-exclusion formula the running time becomes $\Omega(n \cdot 2^d \cdot d)$ as the intermediate values need $\Omega(d)$ bits to be represented in memory. However, we observe that inclusion-exclusion formula can indeed be evaluated more efficiently by rewriting it as follows:
\begin{gather*}
    \sum_{\substack{I \subseteq [d]\\|I| \geq d - k}} (-1)^{|I| - d + k} \cdot \binom{|I| - 1}{d - k - 1} \cdot \abs{\set{y \in X : x[I] = y[I]}} \\
    \qquad= \sum_{\ell = d - k}^{d} {(-1)}^{\ell} \cdot \binom{\ell - 1}{d - k - 1} \cdot \sum_{\substack{I \subseteq [d]\\|I| = \ell}} \abs{\set{y \in X : x[I] = y[I]}}.
\end{gather*}
We can precompute \smash{$S[x, \ell] := \sum_{I \subseteq [d], |I| = \ell} \abs{\set{y \in X : x[I] = y[I]}}$} for all strings $x \in X$ and all values~\makebox{$1 \leq \ell \leq d$} before we compute the inclusion exclusion step. Since there are $2^{d}$ different subsets of $[d]$ and since we already have access to the values $\abs{\set{y \in X : x[I] = y[I]}}$, for all strings $x \in X$,  computing $S[x, \ell]$ amounts to time~\makebox{$O(n \cdot 2^d)$}. Afterwards, computing
\begin{equation*}
    \sum_{\ell = d - k}^{d} {(-1)}^{\ell} \cdot \binom{\ell - 1}{d - k - 1} \cdot S[x,\ell]
\end{equation*}
for all strings $x \in X$ and for all $0 \leq k \leq d - 1$ only takes time $O(n \cdot {d}^{3})$. Hence, the total running time of the algorithm is $O(n \cdot 2^d)$.
\end{proof}

We summarize the pseudocode of the algorithm outlined in the proof of \cref{thm:incl-excl-closest} in \cref{alg:one_center}.

\subsection{Remotest String via Inclusion-Exclusion}
\label{sec:discrete:sec:remotest}
So far we only considered inclusion-exclusion algorithm for the Closest String problem. However, all of our algorithm carry over in a straightforward way to the Remotest String problem. The only conceptual difference is that instead of using~\cref{lem:radius-inclusion-exclusion}, we here rely on the following combinatorial lemma:

\begin{lemma}[Remoteness by Inclusion-Exclusion]
\label{lem:remoteness-inclusion-exclusion}
Let $x$ be a string of length $d$ over some alphabet $\Sigma$, let $X$ be a set of strings each of length $d$ over $\Sigma$, and let $0 \leq k < d$. Then $d(x, X \setminus \set{x}) > k$ if and only if
\begin{equation*}
    \One(x \in X) = \sum_{\substack{I \subseteq [d]\\|I| \geq d - k}} (-1)^{|I| - d + k} \cdot \binom{|I| - 1}{d - k - 1} \cdot \abs{\set{y \in X : x[I] = y[I]}}.
\end{equation*}
\end{lemma}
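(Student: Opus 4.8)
The plan is to follow exactly the same route as the proof of \cref{lem:radius-inclusion-exclusion}, but to keep careful track of the contribution of the string $x$ itself. Concretely, I would start from the pointwise identity of \cref{lem:hd-inclusion-exclusion} (which applies since $0 \le k < d$) and sum it over all $y \in X$. Swapping the order of summation exactly as in the proof of \cref{lem:radius-inclusion-exclusion}, the right-hand side becomes
\[
\sum_{\substack{I \subseteq [d]\\|I| \geq d - k}} (-1)^{|I| - d + k} \cdot \binom{|I| - 1}{d - k - 1} \cdot \abs{\set{y \in X : x[I] = y[I]}},
\]
that is, precisely the expression appearing in the statement. So it only remains to understand when the left-hand side, namely $\sum_{y \in X} \One(\HD(x, y) \le k) = \abs{\set{y \in X : \HD(x,y) \le k}}$, equals $\One(x \in X)$.

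The only new observation, compared to \cref{lem:radius-inclusion-exclusion}, is to isolate $y = x$. Since $k \ge 0$ we always have $\HD(x,x) = 0 \le k$, so $x$ contributes $\One(x \in X)$ to the count, giving
\[
\abs{\set{y \in X : \HD(x,y) \le k}} = \One(x \in X) + \abs{\set{y \in X \setminus \set{x} : \HD(x,y) \le k}}.
\]
The second term vanishes exactly when no string of $X \setminus \set{x}$ lies within Hamming distance $k$ of $x$, i.e.\ exactly when $d(x, X \setminus \set{x}) > k$; and when it does not vanish it is a positive integer, so the total count is then strictly larger than $\One(x \in X)$. Hence the displayed equation in the lemma holds if and only if $d(x, X \setminus \set{x}) > k$, as claimed.

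I do not expect any genuine obstacle: the statement is a bookkeeping variant of \cref{lem:radius-inclusion-exclusion}, and the combinatorial heavy lifting is already finished in \cref{lem:binom} and \cref{lem:hd-inclusion-exclusion}. The only points worth double-checking are the boundary conventions (that $X$ is a genuine set, so $x$ is counted at most once, and that $k \ge 0$ ensures $x$ is always within distance $k$ of itself), and that the $S[x,\ell]$ precomputation and running-time analysis of \cref{sec:discrete:sec:algorithm} carry over verbatim once the quantity $\abs{X}$ on the left is replaced by the trivially computable value $\One(x \in X)$ — which is immediate and yields the analogue of \cref{thm:incl-excl-closest} for the discrete Remotest String problem.
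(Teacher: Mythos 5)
Your proof is correct and matches the paper's intended argument: the paper explicitly omits this proof as being ``essentially the same as'' that of \cref{lem:radius-inclusion-exclusion}, and your derivation carries out exactly that plan, with the one additional (and correct) observation that the contribution of $y = x$ is $\One(x \in X)$ since $\HD(x,x) = 0 \le k$.
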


We omit the proof of this lemma as it is essentially the same as the proof of Lemma \ref{lem:radius-inclusion-exclusion}. In the algorithm, the only difference appears in the inclusion-exclusion step: While the loop in the algorithm for the Closest String problem is in ascending order, in the algorithm for the Remotest String problem the loop should go in descending order (as we want to return a string of \emph{maximum} remoteness as opposed to the minimum radius). All in all, we obtain the following theorem:

\begin{theorem}[Discrete Remotest String for Small Dimensions] \label{thm:incl-excl-remotest}
The discrete Remotest String problem can be solved in time $\Order(n \cdot 2^d)$.
\end{theorem}

\subsection{Discrete Closest/Remotest String via Fast Matrix Multiplication}
\label{sec:discrete:sec:large_dim}
In this subsection we are concerned with the Closest and Remotest String problems in the large-dimension regime, i.e., where $d$ is polynomial in $n$. Our goal is to prove the following theorem:
\thmfastmmclosest*

The proof relies on fast matrix multiplication, so let us introduce some notation here. Let $\MM(a, b, c)$ denote the time complexity of multiplying an $a \times b$ by a $b \times c$ matrix. Let $\omega$ be such that $\MM(n, n, n) = \Order(n^{\omega})$ (also called the \emph{exponent of matrix multiplication}); it is known that we can take $\omega < 2.373$~\cite{AlmanW21}. Moreover, by a simple decomposition into square block matrices, we have that $\MM(a, b, c) = \Order(\frac{a b c}{\min\set{a, b, c}^{3-\omega}})$.

\begin{proof}
Let $x_1, \dots, x_n$ denote the strings in the given set $X$. Our goal is to construct the~$n \times n$ matrix $D$ for which $D[i, j] = \HD(x_i, x_j)$. Given this matrix $D$ it is easy to solve the Closest String problem in time~$\Order(n^2)$: Select the row $i$ with smallest maximum entry and report $x_i$ as the solution. 

To compute $D$, let $A$ denote the $n \times d|\Sigma|$ matrix defined by
\begin{equation*}
    A[i, (k, \sigma)] = 
    \begin{cases}
        1 &\text{if $x_i[k] = \sigma$,} \\
        0 &\text{otherwise.}
    \end{cases}
\end{equation*}
That is, $A$ indicates in which positions the symbols $\sigma \in \Sigma$ occur in the input. We claim that $D = d \One - A A^T$ (where $\One$ is the all-ones matrix). Indeed, for any pair $i, j \in [n]$ the Hamming distance between $x_i$ and $x_j$ is exactly $d$ minus the number of positions $k \in [d]$ for which $x_i[k] = x_j[k]$, or equivalently, for which there is some symbol $\sigma \in \Sigma$ with $x_i[k] = \sigma$ and $x_j[k] = \sigma$.

By the previous identity, it suffices to compute $A A^T$. Before we start, let us remark that throughout it is necessary to store $A$ \emph{sparsely}, i.e., in a list of nonzero entries. Otherwise, exhaustively writing down the matrix in time $\Order(n d |\Sigma|)$ would possibly exhaust our time budget (since $|\Sigma|$ can be as large as $nd$); however, the sparse representation only takes size $\Order(nd)$.

To compute $A A^T$, we split $A$ into two submatrices: Let~$A_H$ be the submatrix of $A$ consisting of all columns $(k, \sigma)$ containing at least $n^{1-\epsilon}$ nonzero entries (for some parameter $\epsilon > 0$ to be determined), and let $A_L$ denote the submatrix consisting of the remaining columns. Since we split $A$ column-wise, it holds that~$A A^T = A_H A_H^T + A_L A_L^T$. We evaluate these two matrix products using two different algorithms. For the former product \smash{$A_H A_H^T$} we exploit that the matrix $A_H$ is thin with~\makebox{$\Order(nd / n^{1-\epsilon}) = \Order(n^{\delta+\epsilon})$} columns. For this case we therefore prefer fast matrix multiplication, running in time
\begin{equation*}
    \Order(\MM(n, n^{\delta+\epsilon}, n)) = \Order\parens*{\frac{n^{2+\delta+\epsilon}}{\min\set{n, n^{\delta + \epsilon}}^{3-\omega}}} = \Order(n^{2+\delta+\epsilon- (3 - \omega)\min\set{1, \delta + \epsilon}}).
\end{equation*}
The product \smash{$A_L A_L^T$} on the other hand, we evaluate in time $\Order(n d n^{1-\epsilon})$ by the naive algorithm: For each nonzero entry $A_L[i, (k, \sigma)]$, enumerate the at most $n^{1-\epsilon}$ indices $j$ with~\makebox{$A_L[j, (k, \sigma)] \neq 0$}. The total running time is bounded by
\begin{equation*}
    \Order(n^{2+\delta+\epsilon-(3-\omega)\min\set{1, \delta + \epsilon}} + n^{2+\delta-\epsilon}) \leq \Order(n^{2+\delta+\epsilon-0.6 \min\set{1, \delta}} + n^{2+\delta-\epsilon}),
\end{equation*}
and setting $\epsilon = 0.3 \min\set{1, \delta} > 0$ yields the claimed running time.
\end{proof}

The same algorithms works with only minor modififcations also for the Remotest String problem (the only change is that after precomputing $D$, we return find a row $i$ with largest minimum entry and return the string~$x_i$):

\begin{theorem}[Discrete Remotest String for Large Dimensions] \label{thm:fast-mm-remotest}
For all $\delta > 0$, there is some~$\epsilon > 0$ such that the discrete Remotest String problem with dimension $d = n^\delta$ can be solved in time $\Order(n^{2+\delta-\epsilon})$.
\end{theorem}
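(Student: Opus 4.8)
The plan is to mirror the proof of \cref{thm:fast-mm-closest} almost verbatim, since the only part of that argument that is specific to the Closest String objective is the very last step, where we read off the answer from the distance matrix. So I would first reconstruct the full $n \times n$ Hamming-distance matrix $D$ with $D[i,j] = \HD(x_i, x_j)$, using exactly the same approach: build the sparse $0/1$ indicator matrix $A$ of size $n \times d|\Sigma|$ with $A[i,(k,\sigma)] = \One(x_i[k] = \sigma)$, observe that $D = d\One - AA^T$, and compute $AA^T$ by splitting $A$ column-wise into a ``heavy'' part $A_H$ (columns with $\geq n^{1-\epsilon}$ nonzeros) and a ``light'' part $A_L$. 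The heavy part is handled by fast rectangular matrix multiplication, using that $A_H$ has only $\Order(n^{\delta+\epsilon})$ columns, and the light part by the naive sparse-product algorithm in time $\Order(n d n^{1-\epsilon})$; balancing with $\epsilon = 0.3\min\set{1,\delta}$ gives running time $\Order(n^{2+\delta-\epsilon})$ for constructing $D$.

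Once $D$ is available, the Remotest String step is trivial and is the only place where the proof diverges: instead of selecting the row $i$ minimizing $\max_j D[i,j]$, we select the row $i$ maximizing $\min_{j \neq i} D[i,j]$ (excluding the diagonal entry $D[i,i]=0$, since in the discrete Remotest String problem the distance is measured to $X \setminus \set{x}$), and return $x_i$. This post-processing takes $\Order(n^2)$ time, which is dominated by the matrix-multiplication step, so the overall running time is unchanged.

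Concretely, I would write the proof as: ``The proof is identical to that of \cref{thm:fast-mm-closest}, except that after computing the distance matrix $D$ we return a string $x_i$ maximizing $\min_{j \in [n] \setminus \set{i}} D[i,j]$ instead of minimizing $\max_{j} D[i,j]$. Correctness is immediate from the definition of discrete Remotest String, and the running time analysis carries over verbatim.'' I do not anticipate any genuine obstacle here — the content is entirely a matter of noting that the matrix $D$ encodes enough information for both objectives. The only thing to be slightly careful about is the exclusion of the diagonal entry in the $\min$, so that the self-distance $0$ does not spuriously force every candidate to have remoteness $0$; this is a one-line observation and matches the definition of $d(x^*, X \setminus \set{x^*})$ given in the preliminaries.
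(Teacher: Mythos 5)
Your proposal is correct and matches the paper's own proof essentially verbatim: the paper also computes the Hamming-distance matrix $D$ exactly as in \cref{thm:fast-mm-closest} and then simply returns the row with the largest minimum entry. Your extra remark about excluding the diagonal entry $D[i,i]=0$ is a valid detail that the paper glosses over but that is indeed needed so the self-distance does not trivially zero out the remoteness.
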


\subsection{Equivalence of Closest and Remotest String}
\label{sec:discrete:sec:equivalence}
In this section we provide the missing proof of the fine-grained equivalence between the discrete Closest and Remotest String problems over binary alphabets:

\thmdiscreteequiv*

The proof crucially relies on the following auxiliary lemma:

\begin{lemma}[Explicit Constant-Weight Codes] \label{lem:constant-weight-code}
There is an absolute constant $c > 0$, such that for any~$n \in \mathbb N$ there are strings $c_1, \dots, c_n \in \set{0, 1}^d$ of length $d = c \ceil{\log n}$, satisfying the following two properties:
\begin{itemize}
    \item the Hamming weight of all strings $c_1, \dots, c_n$ is exactly $0.25 d$, and
    \item the Hamming distance between any strings is $\HD(c_i, c_j) \geq 0.37 d$ (for $i \neq j$).
\end{itemize}
We can construct $c_1, \dots, c_n$ in deterministic time $\widetilde\Order(n)$.
\end{lemma}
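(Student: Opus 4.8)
The plan is to construct the code in two stages: first produce a binary code with good minimum distance via a standard explicit construction, then apply a weight-balancing transformation to force every codeword to have Hamming weight exactly $0.25d$. For the first stage I would use a classical explicit code meeting the Gilbert–Varshamov bound in the relevant regime — for instance, concatenating a Reed–Solomon code with a small inner code found by brute force, or simply taking a random linear code whose generator matrix is derandomized by the method of conditional expectations. Since we only need $n$ codewords of block length $\Theta(\log n)$, a brute-force search over all $2^{O(\log n)} = \poly(n)$ candidate generator matrices (or even over all binary strings of length $O(\log n)$, greedily building a code of relative distance close to $1/2$) runs in time $\poly(n)$, which can be sharpened to $\widetilde\Order(n)$ by being a bit careful; in any case $\widetilde\Order(n)$ should be achievable since the ambient space has only $\poly(n)$ points and we only need to select $n$ of them. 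This yields strings $c_1', \dots, c_n' \in \set{0,1}^{d_0}$ with $d_0 = \Order(\log n)$ and pairwise distance at least, say, $0.49 d_0$, but with uncontrolled weights.

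The second stage is the weight-balancing gadget. Here I would append to each codeword $c_i'$ a "padding block" designed so that the total weight becomes exactly a $1/4$-fraction of the new length, while only mildly diluting the relative distance. Concretely: suppose $c_i'$ has weight $w_i \in [0, d_0]$. Append a block of length $L$ (a fixed constant times $d_0$, the same for all $i$) whose weight is chosen per codeword to be $0.25(d_0 + L) - w_i$; this is a valid non-negative integer $\le L$ provided $L$ is large enough relative to $d_0$ (taking $L = 3 d_0$ works, so that $0.25 \cdot 4 d_0 = d_0 \ge w_i$ and $0.25 \cdot 4 d_0 - w_i \le d_0 \le L$). Within the padding block one can place the required number of ones canonically (e.g.\ left-justified). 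The resulting strings have length $d = d_0 + L = \Theta(\log n)$ and weight exactly $0.25 d$, establishing the first bullet and the length bound. Setting $c = $ the implied constant handles $d = c\ceil{\log n}$.

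The one real obstacle is the minimum distance: padding blocks of two distinct codewords can partially cancel the distance coming from the original blocks, so I must argue $\HD(c_i, c_j) \ge 0.37 d$. The distance in the original block is at least $0.49 d_0$. In the padding block, two left-justified constant-weight strings of weights $a_i, a_j$ differ in exactly $|a_i - a_j|$ positions, which could be as large as $d_0$ — this is the danger. To control it, instead of a single naive left-justified block I would use a slightly smarter padding: either (a) further subdivide the padding into $d_0$ unit sub-blocks and spread the ones so that nearly-equal weights stay close in Hamming distance (giving padding distance $\approx |a_i - a_j|$, which combined with $|w_i - w_j| = |a_i - a_j|$... no, that still cancels), or better (b) note $|a_i - a_j| = |w_i - w_j| \le \HD(c_i', c_j')$ is a weak bound, and instead observe the cancellation is at most $|w_i - w_j| \le d_0$ against a gain of $\ge 0.49 d_0$ in the original block, which is not enough. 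So the fix is to make $L$ a larger multiple of $d_0$: with $L = \lambda d_0$ the original-block distance is $\ge 0.49 d_0$ and the worst-case padding cancellation is $\le d_0$, so total distance $\ge 0.49 d_0 - $ (nothing, padding only adds) $ = 0.49 d_0$ — wait, padding never subtracts distance, it can only add. The correct accounting: $\HD(c_i,c_j) = \HD(c_i', c_j') + \HD(\text{pad}_i, \text{pad}_j) \ge 0.49 d_0 + 0$, and $d = (1+\lambda) d_0$, so we need $0.49 d_0 \ge 0.37 (1+\lambda) d_0$, i.e. $\lambda \le 0.32$. But we also needed $\lambda \ge 3$ for the weight correction to be feasible when $w_i$ can be as large as $d_0$. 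These conflict, so the real fix is to first ensure the original code already has weight close to $0.25 d_0$ (e.g.\ pre-balance by a smaller append, or choose the inner code to be constant-weight), making $|w_i - 0.25 d_0|$ only $O(\sqrt{d_0 \log d_0})$ or even forcing it to be $0$, after which a short padding block with $\lambda$ a small constant suffices and the distance arithmetic closes with room to spare. I expect this interplay — simultaneously achieving exact weight, relative distance $\ge 0.37$, and length $O(\log n)$ — to be the crux, resolved by using a constant-weight inner code in the concatenation (or a direct greedy/derandomized construction of a constant-weight code of relative distance $> 0.37$, which exists by a Gilbert–Varshamov-type bound for constant-weight codes and is constructible in $\widetilde\Order(n)$ time by the same brute-force-over-$\poly(n)$-space argument).
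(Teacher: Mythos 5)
Your padding idea is correctly diagnosed as broken: to absorb a weight surplus of up to $d_0$ the padding block must have length $\lambda d_0$ with $\lambda\geq 3$, which dilutes the relative distance to at most $0.49/(1+\lambda)<0.13$, far short of $0.37$. But your proposed fix has real gaps. The ``direct greedy constant-weight GV code'' does not run in $\widetilde\Order(n)$: sieving the $\binom{d}{d/4}$ weight-$d/4$ strings of length $d=c\ceil{\log n}$ costs $n^{\Theta(c)}$ time, and here $c$ must be large. The parameters $(w/d,\,\delta/d)=(1/4,\,0.37)$ sit just below the constant-weight Plotkin threshold $2\cdot\tfrac14\cdot\tfrac34=0.375$, so the constant-weight GV rate is on the order of $10^{-4}$, making $c$ in the thousands and $n^{\Theta(c)}$ a huge polynomial, nowhere near $\widetilde\Order(n)$. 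You also never verify that this rate is in fact positive, which at parameters this close to the threshold requires an actual calculation, not a gesture at ``a Gilbert--Varshamov-type bound.'' Your alternative suggestion, a constant-weight inner code in a concatenation, is where the right idea lives, but you leave it entirely unspecified.

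The paper realizes exactly that suggestion, concretely and cleanly: build a code over $\mathbb{F}_4$ of relative distance $0.74$ (Reed--Solomon concatenated with a small inner code found by brute force --- the brute force is over a sub-polynomially sized inner alphabet, not over the whole weight-$d/4$ sphere, which is what makes $\widetilde\Order(n)$ attainable), and then replace each $\mathbb{F}_4$ symbol by its one-hot indicator in $\set{0,1}^4$. Every $4$-bit block then has weight exactly $1$, so the total weight is exactly $d/4$ with no balancing step at all, and two differing $\mathbb{F}_4$ symbols always contribute Hamming distance exactly $2$, so relative distance $0.74$ becomes $0.74\cdot 2/4=0.37$. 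In other words, the one-hot map $\mathbb{F}_4\to\set{0,1}^4$ \emph{is} the constant-weight inner code you were looking for; composing it with standard GV-over-$\mathbb{F}_4$ makes the weight and distance requirements drop out immediately, with no near-threshold constant-weight rate computation and no $n^{\Theta(c)}$-time enumeration.
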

\begin{proof}
Consider any code \smash{$C:\mathbb{F}_4^{\log n}\to \mathbb{F}_4^{(c/4)\log n}$} whose relative distance is $0.74$ for some universal constant~\makebox{$c\in\mathbb N$}. This can be constructed by the code concatenation of any inner code meeting the Gilbert-Varshamov bound with the Reed-Solomon code as the outer code, and the computation is in $\widetilde\Order(n)$ time, as we can find the inner code by exhaustive search. We think of $C$ as a subset of \smash{$\mathbb{F}_4^{(c/4)\log n}$} of cardinality~$n$, where all pairwise Hamming distances are at least $0.74 (c/4) \log n$. We replace each coordinate of each point in $C$ with its characteristic vector in $\{0,1\}^4$. Therefore we obtain \smash{$\tilde{C}\subseteq \{0,1\}^{c\log n}$} still of cardinality $n$. Note that the Hamming weight of each point in \smash{$\tilde C$} is exactly $(c/4)\log n$, and that all pairs of points in \smash{$\tilde{C}$} are at distance at least $2\cdot 0.74 (c/4) \log n = 0.37 c\log n$.
\end{proof}

\begin{proof}[Proof of \cref{thm:discrete-equiv}]
Suppose there is a fast algorithm for the discrete Remotest String problem; we give a fast algorithm for the Closest String problem. Let $X = \set{x_1, \dots, x_n} \subseteq \set{0, 1}^d$ be the Closest String instance which we will convert to an instance of Remotest String. First, compute the strings $c_1, \dots, c_n$ as in \cref{lem:constant-weight-code}. That is, the strings have dimension $d'' = \Order(\log n)$, the Hamming weight of each string $z_i$ is exactly~$0.25 d''$ and the Hamming distance between any pair is at least $0.37 d''$. Let $r = 10 \ceil{\frac{d}{d''}}$. Then consider the following two types of strings for~\makebox{$i \in [n]$}:
\begin{itemize}
    \item $a_i := x_i \circ c_i^r$ (i.e., $x_i$ followed by $r$ repetitions of $c_i$), and
    \item $b_i := \overline{x_i} \circ 0^{r \cdot d''}$ (i.e., the complement of $x_i$ followed by $r \cdot d''$ zeros).
\end{itemize}
Let $A = \set{a_1, \dots, a_n, b_1, \dots, b_n}$ be the instance of the Remotest String problem with dimension $d' = d + d''$. We claim that $X$ and $A$ are equivalent in the following sense:

\begin{claim} \label{thm:discrete-equiv:clm:closest-to-remotest}
$\displaystyle \max_{a \in A} \min_{b \in A \setminus \set{a}} \HD(a, b) = d + r \cdot 0.25d'' - \min_{x \in X} \max_{y \in X} \HD(x, y)$.
\end{claim}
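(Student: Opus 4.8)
The plan is to prove \cref{thm:discrete-equiv:clm:closest-to-remotest} by analyzing the three kinds of distances that can occur between pairs of strings in $A = \set{a_1, \dots, a_n, b_1, \dots, b_n}$, namely $a_i$ vs $a_j$, $b_i$ vs $b_j$, and $a_i$ vs $b_j$, and arguing that the ``$a_i$ vs $b_i$'' distances are the only ones that can be the bottleneck for a remote string. Throughout, write $w = r \cdot 0.25 d''$ for the common Hamming weight of the suffix $c_i^r$, and recall $r = 10\ceil{d/d''}$, so that $r d'' \geq 10 d$ and hence the suffix block dominates the prefix block by a factor of roughly $10$.

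\medskip
\noindent\textbf{Step 1: Compute the three distance types.}
First I would record:
\begin{itemize}
    \item $\HD(a_i, b_i) = \HD(x_i, \overline{x_i}) + \HD(c_i^r, 0^{rd''}) = d + w$, since $x_i$ and $\overline{x_i}$ differ in all $d$ coordinates and $c_i^r$ has weight $w$.
    \item $\HD(a_i, b_j) = \HD(x_i, \overline{x_j}) + \HD(c_i^r, 0^{rd''}) = \HD(x_i, \overline{x_j}) + w \geq w$ for $i \neq j$; this is at least $w \geq 10 d$, which (as we will see) is strictly larger than $d + w$ is \emph{not} — wait, $w < d + w$, so I must be more careful: the point is rather that $\HD(a_i,b_j) = \HD(x_i,\overline{x_j}) + w$ and $\HD(x_i,\overline{x_j}) = d - \HD(x_i, x_j)$, so this equals $d + w - \HD(x_i, x_j)$. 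Hmm — that is \emph{smaller} than $d+w$. So these cross distances are genuinely relevant and I will need to handle them, see Step 3.
    \item $\HD(b_i, b_j) = \HD(\overline{x_i}, \overline{x_j}) = \HD(x_i, x_j) \leq d$ for $i \neq j$; since $d \leq \tfrac{1}{10} r d'' < w \leq d+w$, these are never the bottleneck for $b_i$ as long as $b_i$ has some other neighbor at distance $\leq d$ — but in fact its neighbor $a_i$ is at distance $d + w$, so the minimum over neighbors of $b_i$ is $\min_j \HD(x_i,x_j)$ together with the cross terms, and is at most $d$. Hence no $b_i$ can be the remotest string once $n \geq 2$ (and the $n=1$ case is trivial). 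This is the key structural observation: \emph{the remotest string must be some $a_i$}.
    \item $\HD(a_i, a_j) = \HD(x_i, x_j) + r \cdot \HD(c_i, c_j) \geq 0 + r \cdot 0.37 d'' \geq 3.7 d$ for $i \neq j$, using the code distance from \cref{lem:constant-weight-code}; in particular this exceeds $d + w \le d + 0.25 r d'' = d + 2.5 d \le 3.5d < 3.7 d$. Wait — I need $d + w = d + 0.25 r d''$; with $r d'' \le 10d + d'' = 10 d + O(\log n)$ this is at most $d + 2.5 d + O(\log n) = 3.5 d + O(\log n)$, which is indeed $< 3.7 d$ for large enough $n$ (or after adjusting constants). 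So $a_i$–$a_j$ distances are never the bottleneck for $a_i$.
\end{itemize}

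\medskip
\noindent\textbf{Step 2: Identify the minimizing neighbor of $a_i$.}
By Step 1, for a fixed $i$ the neighbors of $a_i$ in $A \setminus \set{a_i}$ split into the $a_j$'s (all at distance $\geq 3.7d > d + w$) and the $b_j$'s. Among the latter, $\HD(a_i, b_i) = d + w$ while $\HD(a_i, b_j) = d + w - \HD(x_i, x_j) \leq d + w$ for $j \neq i$. Hence
\[
    \min_{b \in A \setminus \set{a_i}} \HD(a_i, b) = \min_{j \in [n]} \bigl(d + w - \HD(x_i, x_j)\bigr) = d + w - \max_{j \in [n]} \HD(x_i, x_j) = d + w - r(x_i, X),
\]
where the inner $\min$ over $j = i$ contributes $d+w$ (i.e.\ $\HD(x_i,x_i)=0$), consistent with the formula.

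\medskip
\noindent\textbf{Step 3: Assemble and rule out the $b_i$'s.}
Taking the maximum over all strings in $A$: for the $a_i$'s we get $\max_i (d + w - r(x_i, X)) = d + w - \min_i r(x_i, X)$, which is exactly the right-hand side of the claim. It remains to verify that no $b_i$ beats this, i.e.\ that $\min_{a' \in A \setminus \set{b_i}} \HD(b_i, a') \leq d + w - \min_i r(x_i, X)$. By Step 1, $\HD(b_i, b_j) = \HD(x_i, x_j) \leq d$, and since $\min_i r(x_i,X) \le d$ trivially (all distances are at most $d$), we have $d + w - \min_i r(x_i,X) \ge w \ge 10 d \ge d \ge \HD(b_i,b_j)$; so indeed $b_i$'s min-neighbor distance is at most $d < d + w - \min_i r(x_i,X)$, provided $n \geq 2$. (If $n = 1$ the claim is vacuous/trivial since both sides collapse.) This completes the proof of the claim.

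\medskip
\noindent\textbf{From the claim to the theorem.}
Given the claim, one direction of \cref{thm:discrete-equiv} follows: to solve discrete Closest String on $X \subseteq \set{0,1}^d$ with $n$ strings, build $A \subseteq \set{0,1}^{d + \Order(\log n)}$ with $2n$ strings in $\widetilde\Order(nd)$ time (dominated by writing down the $r$ copies of the $c_i$, which is $\Order(n \cdot r d'') = \Order(nd)$, plus $\widetilde\Order(n)$ for \cref{lem:constant-weight-code}), run the Remotest String algorithm, and subtract. The converse direction — reducing Remotest to Closest — is entirely symmetric once one observes that over a binary alphabet the single-string complementation trick $\HD(x,y) = d - \HD(\overline x, y)$ flips radius and remoteness globally; one replaces each $x_i$ by $\overline{x_i}$ and each $c_i$ by $\overline{c_i}$ (or, more cleanly, re-runs the same gadget with the roles of $\min$ and $\max$ swapped, which works because the constant-weight code is also a constant-\emph{co}weight code) and reads off the symmetric identity. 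I would just remark that this direction is proved by the same case analysis with inequalities reversed, rather than redoing it in full.

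\medskip
\noindent\textbf{Main obstacle.}
The only delicate part is the constant-juggling in Step 1: I need the separation inequality $d + w < \min\{3.7 d,\ w + (\text{stuff})\}$ to hold \emph{simultaneously} with $w \ge (\text{large multiple of }d)$, and since $w = 0.25 r d''$ and $r d'' \approx 10 d$ this gives $w \approx 2.5 d$, so $d + w \approx 3.5 d < 3.7 d$ — which works, but only with a bit of room; if one is sloppy about the rounding in $r = 10 \ceil{d/d''}$ and $d'' = c\ceil{\log n}$ the additive $O(\log n)$ slack must be absorbed (e.g.\ by taking $r$ a slightly larger constant multiple, or noting $d = \omega(\log n)$ is the only interesting regime). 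Getting these constants to line up cleanly, and making sure the $b_i$'s are decisively eliminated, is where the care goes; everything else is bookkeeping.
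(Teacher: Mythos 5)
Your case analysis is the same as the paper's: you compute the three distance types, argue no $b_i$ can be remotest because the $b_i$-$b_j$ distances are small, argue that for $a_i$ the nearest neighbor is always some $b_j$ (never another $a_j$), and then read off the formula $\max_i \min_j \bigl(d + w - \HD(x_i,x_j)\bigr) = d + w - \min_i \max_j \HD(x_i,x_j)$. The one place you go astray is the ``constant-juggling'' you worry about in the final paragraph: it is a phantom problem created by rounding both sides of the comparison to multiples of $d$. The inequality you actually need is $\HD(a_i,a_j) > d + w$ for $i \neq j$, i.e.\ $r \cdot 0.37 d'' > d + r \cdot 0.25 d''$, which simplifies to $0.12\, r d'' > d$; since $r = 10\ceil{d/d''}$ gives $r d'' \geq 10 d$ unconditionally, this is $\geq 1.2 d > d$ with room to spare and no $O(\log n)$ slack to absorb, for all $d$ and $n$. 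Your detour through ``$\geq 3.7d$ vs.\ $\leq 3.5d + O(\log n)$'' replaces both quantities with cruder bounds pointing in opposite directions and thereby manufactures a gap that does not exist in the actual reduction. Separately, in Step 3 the inequality ``$w \geq 10d$'' should read $w = 0.25\, r d'' \geq 2.5 d$; the downstream conclusion $w \geq d \geq \HD(b_i,b_j)$ still holds, so this is only a typo-level slip, but the $10d$ figure is wrong. With those two corrections the argument is exactly the paper's.
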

\begin{proof}
As a first step, we derive the following obvious bounds for the Hamming distances between any pair of strings in $A$:
\begin{gather}
    \HD(a_i, b_j) = \HD(x_i, \overline{x_j}) + r \cdot \HD(c_i, 0^{d''}) = d - \HD(x_i, x_j) + r \cdot 0.25 d'', \label{thm:discrete-equiv:clm:closest-to-remotest:eq:1} \\
    \HD(a_i, a_j) = \HD(x_i, x_j) + r \cdot \HD(c_i, c_j) \geq r \cdot \HD(c_i, c_j) \geq r \cdot 0.37 d'' > d + r \cdot 0.25d''\quad\text{(if $i \neq j$),} \label{thm:discrete-equiv:clm:closest-to-remotest:eq:2} \\
    \HD(b_i, b_j) = \HD(\overline{x_i}, \overline{x_j}) \leq d < r \cdot 0.25d''. \label{thm:discrete-equiv:clm:closest-to-remotest:eq:3}
\end{gather}
The rest of the proof is by the following calculation:
\begin{gather*}
    \max_{a \in A} \min_{b \in A \setminus \set{a}} \HD(a, b) \vphantom{\bigg\rbrace} \\
    \qquad= \max\set*{\max_{i \in [n]} \min_{b \in A \setminus \set{a}} \HD(a_i, b),\, \max_{i \in [n]} \min_{b \in A \setminus \set{a}} \HD(b_i, b)}
\intertext{Using that the distance between any strings $b_i, b_j$ (as computed in \cref{thm:discrete-equiv:clm:closest-to-remotest:eq:3}) is strictly smaller than the distance between any strings $a_i$ and $b \in A \setminus \set{a_i}$ (as computed in \cref{thm:discrete-equiv:clm:closest-to-remotest:eq:1,thm:discrete-equiv:clm:closest-to-remotest:eq:2}), we conclude that this maximum is attained by the left-hand side:}
    \qquad= \max_{i \in [n]} \min_{b \in A \setminus \set{a}} \HD(a_i, b) \vphantom{\bigg\rbrace} \\
    \qquad= \max_{i \in [n]} \min\set*{\min_{j \in [n] \setminus \set{i}} \HD(a_i, a_j),\, \min_{j \in [n]} \HD(a_i, b_j)}
\intertext{Using further that the distance between any strings $a_i$ and $a_j$ for $i \neq j$ (as computed by \cref{thm:discrete-equiv:clm:closest-to-remotest:eq:2}) is strictly larger than the distance between any strings $a_i$ and $b_j$ (as computed by \cref{thm:discrete-equiv:clm:closest-to-remotest:eq:1}), we conclude that the minimum is attained by the right-hand side:}
    \qquad= \max_{i \in [n]} \min_{j \in [n]} \HD(a_i, b_j) \\
    \qquad= \max_{i \in [n]} \min_{j \in [n]} (d - \HD(x_i, x_j) + r \cdot 0.25d'') \\
    \qquad= d + r \cdot 0.25d'' - \min_{i \in [n]} \max_{j \in [n]} \HD(x_i, x_j).
\end{gather*}
This completes the proof of \cref{thm:discrete-equiv:clm:closest-to-remotest}.
\end{proof}

\cref{thm:discrete-equiv:clm:closest-to-remotest} proves that we can indeed solve the given Closest String instance by computing a Remotest String in $A$. Constructing the Remotest String instance takes near-linear time $\widetilde\Order(n d)$, and running the fast algorithm with an input of size $|A| = 2n$ and $d' = d + r d'' = \Order(d + \log n)$ takes time $T(\Order(n), \Order(d + \log n))$. Thus, the total time is as claimed.

\medskip
It remains to prove the converse direction, i.e., to reduce the Remotest String problem to the Closest String problem. Since this reduction is very similar to the previous one, we only give the construction and omit the analogous correctness proof.

Let $A = \set{a_1, \dots, a_n} \subseteq \set{0, 1}^d$ be the given Remotest String instance. As before, we precompute the strings $c_1, \dots, c_n$ of length $d'' = \Order(\log n)$ by \cref{lem:constant-weight-code}. Let $r = 10 \ceil{\frac{d}{d''}}$ be as before, and construct the following two types of strings:
\begin{itemize}
    \item $x_i := a_i \circ 0^{r \cdot d''}$ (i.e., $a_i$ followed by $r \cdot d''$ zeros), and
    \item $y_i := \overline{a_i} \circ c_i^r$ (i.e., the complement of $a_i$ followed by $r$ repetitions of $c_i$).
\end{itemize}
Let $X = \set{x_1, \dots, x_n, y_1, \dots, y_n}$ be the Closest String instance. Then the instances $A$ and $X$ are equivalent in the following sense:

\begin{claim}
$\displaystyle \min_{x \in X} \max_{y \in X} \HD(x, y) = d + r \cdot 0.25d'' - \max_{a \in A} \min_{b \in A \setminus \set{a}} \HD(a, b)$.
\end{claim}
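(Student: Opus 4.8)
The plan is to prove this claim exactly as the converse of \cref{thm:discrete-equiv:clm:closest-to-remotest}, mirroring the case analysis there with the roles of ``$a$-type'' and ``$b$-type'' strings now played by the ``$y$-type'' and ``$x$-type'' strings respectively. First I would record the three baseline distance identities for pairs of strings in $X$, using that $r = 10\ceil{d/d''}$ makes $r \cdot 0.25 d''$ larger than $d$ and that the codewords $c_i$ have weight exactly $0.25 d''$ and pairwise distance at least $0.37 d''$:
\begin{gather*}
    \HD(x_i, y_j) = \HD(a_i, \overline{a_j}) + r \cdot \HD(0^{d''}, c_j) = d - \HD(a_i, a_j) + r \cdot 0.25 d'', \\
    \HD(y_i, y_j) = \HD(\overline{a_i}, \overline{a_j}) + r \cdot \HD(c_i, c_j) \geq r \cdot 0.37 d'' > d + r \cdot 0.25 d'' \quad\text{(if $i \neq j$),} \\
    \HD(x_i, x_j) = \HD(a_i, a_j) \leq d < r \cdot 0.25 d''.
\end{gather*}

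Next I would unfold $\min_{x \in X} \max_{y \in X} \HD(x, y)$ along the same lines as in the proof of \cref{thm:discrete-equiv:clm:closest-to-remotest}. Splitting the outer minimum over $x \in X$ into the $x_i$'s and the $y_i$'s: for a fixed $x_i$, the inner maximum $\max_{y \in X} \HD(x_i, y)$ is at least $\HD(x_i, y_j)$ for any $j$, which by the first identity is at least $r \cdot 0.25 d''$ (using $\HD(a_i, a_j) \leq d$), whereas for a fixed $y_i$ the inner maximum is at most $\max\{d + r \cdot 0.25 d'', \ldots\}$; more carefully, $\max_{y \in X} \HD(y_i, y) = \max\{\max_{j \neq i}\HD(y_i, y_j), \max_j \HD(y_i, x_j)\}$, and since $\HD(y_i, y_j) > d + r\cdot 0.25 d''$ this quantity is strictly larger than any $\max_{y} \HD(x_i, y)$ — so the outer minimum is attained on the $x_i$'s, giving $\min_{i} \max_{y \in X} \HD(x_i, y)$. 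Then $\max_{y \in X} \HD(x_i, y) = \max\{\max_{j \neq i} \HD(x_i, x_j),\ \max_j \HD(x_i, y_j)\}$, and since each $\HD(x_i, y_j) = d - \HD(a_i, a_j) + r\cdot 0.25 d'' \geq r \cdot 0.25 d''$ dominates every $\HD(x_i, x_j) \leq d < r \cdot 0.25 d''$, the maximum is attained on the right, so the whole expression collapses to $\min_i \max_j \HD(x_i, y_j) = \min_i \max_j (d - \HD(a_i, a_j) + r \cdot 0.25 d'') = d + r \cdot 0.25 d'' - \max_i \min_j \HD(a_i, a_j)$.

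Finally I would observe that $\max_i \min_j \HD(a_i, a_j)$ ranges over $j \in [n]$ but the term $j = i$ contributes $\HD(a_i, a_i) = 0$, which never affects a maximum of minimums unless all entries are zero; to match the right-hand side $\max_{a \in A}\min_{b \in A \setminus \{a\}}\HD(a, b)$ exactly one should either note that in the unfolding above the relevant index $j$ over which we minimize in $\HD(x_i, y_j)$ should be restricted to $j \neq i$ (since $\HD(x_i, y_i) = d + r\cdot 0.25 d'' - 0$ is the largest such value and dropping it does not change the outer $\max_i\min_{j}$ of a genuinely smaller-is-better quantity — wait, it is a $\min$, so we must argue that $j=i$ is never the unique minimizer), which follows because for any $i$ there is some $j \neq i$ with $\HD(a_i, a_j) \geq \min_{b \neq a_i}\HD(a_i, b) \geq 0$ and hence $\HD(x_i, y_j) \leq \HD(x_i, y_i)$. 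The main obstacle, as in the original claim, is bookkeeping the self-distance $j = i$ correctly so that the minimum over $A \setminus \{a\}$ in the Remotest String objective lines up with the minimum over $[n]$ appearing in the calculation; this is handled precisely by the dominance inequalities above, which guarantee the extremal indices are always achieved at $i \neq j$. Since the argument is a routine mirror of the proof of \cref{thm:discrete-equiv:clm:closest-to-remotest}, the claim follows, and combined with the near-linear construction time $\widetilde\Order(nd)$ and the bound $d' = d + r d'' = \Order(d + \log n)$ this completes the converse direction of \cref{thm:discrete-equiv}.
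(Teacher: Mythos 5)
The crux of your argument lies in the final step, where you assert that the $j=i$ term ``never affects a maximum of minimums unless all entries are zero'' and that ``$j=i$ is never the unique minimizer.'' This is precisely where the proof breaks down, and the dominance inequalities you cite cannot repair it. Trace carefully: you correctly collapse the objective to $\min_i \max_{j \in [n]} \HD(x_i, y_j)$ with $\HD(x_i, y_j) = d + r\cdot 0.25d'' - \HD(a_i, a_j)$. The inner maximum over $j$ is attained at the $j$ \emph{minimizing} $\HD(a_i, a_j)$, and that minimizer is uniquely $j = i$ (where $\HD(a_i,a_i)=0$; for $j\neq i$ one has $\HD(a_i,a_j)\geq 1$, hence $\HD(x_i, y_j) \leq d + r\cdot 0.25 d'' - 1 < \HD(x_i, y_i)$). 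The self-pair is therefore the unique argmax, and dropping it changes the value: with it, $\max_j \HD(x_i, y_j) = d + r\cdot 0.25d''$ for \emph{every} $i$, so $\min_i \max_j \HD(x_i, y_j) = d + r\cdot 0.25d''$ identically, independent of $A$. This does not equal $d + r\cdot 0.25d'' - \max_{a\in A}\min_{b\neq a}\HD(a,b)$ whenever the $a_i$ are distinct. (Concretely, take $A=\{0,1\}\subseteq\{0,1\}^1$: the remotest distance is $1$, yet the constructed $X$ has minimum radius $d + r\cdot 0.25d''$, not $d + r\cdot 0.25d'' - 1$.)

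The deeper point your sketch glosses over is that the argument is \emph{not} a routine mirror of \cref{thm:discrete-equiv:clm:closest-to-remotest}, because the structural role of the self-pair flips. In the forward direction, the pair $(a_i, b_i)$ has the maximal cross-distance, and that is harmless because it enters an inner \emph{minimum}: $\min_j \HD(a_i, b_j)$ is insensitive to its largest entry, so it is automatically a minimum over $j\neq i$. In the converse direction, the pair $(x_i, y_i)$ again realizes the maximal cross-distance, but it enters an inner \emph{maximum}, where the largest entry is exactly what gets selected. For the converse reduction to work, the gadget must be rearranged so that $\HD(x_i, y_i)$ is \emph{small} rather than maximal (e.g., by giving $x_i$ and $y_i$ matching code suffixes, which then forces a stronger, essentially equidistant, code to keep the $j\neq i$ distances comparable). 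Since the paper omits this proof, what your attempt actually exposes is that filling the gap requires revisiting the construction, not merely the bookkeeping of the self index; the ``handled precisely by the dominance inequalities'' conclusion does not follow.
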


\noindent
One can prove this claim in the same vein as \cref{thm:discrete-equiv:clm:closest-to-remotest}. The running time analysis is also identical, so we omit further details.
\end{proof}

As an immediate corollary from the equivalence of Closest and Remotest String, and the known hardness result under the Hitting Set Conjecture~\cite{AbboudBCCS21} (see \cref{thm:discrete-closest-hsc}), we obtain the following result:

\begin{corollary}[Discrete Remotest String for Super-Logarithmic Dimensions]
The discrete Remotest String problem in dimension $d = \omega(\log n)$ cannot be solved in time $\Order(n^{2-\epsilon})$, for any $\epsilon > 0$, unless the Hitting Set Conjecture fails.
\end{corollary}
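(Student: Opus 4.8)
The plan is to obtain the corollary directly by contraposition, composing the fine-grained equivalence of \cref{thm:discrete-equiv} with the conditional lower bound for discrete Closest String in \cref{thm:discrete-closest-hsc}. Suppose, toward a contradiction, that for some $\epsilon > 0$ the discrete Remotest String problem over a binary alphabet can be solved in time $\Order(n^{2-\epsilon})$ on instances whose dimension is $\omega(\log n)$; I will turn this hypothetical algorithm into one that solves discrete Closest String too quickly in the same dimension regime.

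First I would invoke the ``conversely'' direction of \cref{thm:discrete-equiv}: if discrete Remotest String over a binary alphabet runs in time $T'(n, d)$, then discrete Closest String over a binary alphabet runs in time $T'(\Order(n), \Order(d + \log n)) + \widetilde\Order(nd)$. Substituting $T'(n, d) = \Order(n^{2-\epsilon})$, the induced Closest String algorithm runs in time $\Order(n^{2-\epsilon}) + \widetilde\Order(nd)$, and---this is the quantity to watch---it reduces a Closest String instance of dimension $d$ to a Remotest String instance of dimension $d' = \Order(d + \log n)$.

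Next I would pin down the dimension. Fix any $d = d(n)$ with $\omega(\log n) \le d \le \polylog n$ (superlogarithmic but still $n^{\order(1)}$); by \cref{thm:discrete-closest-hsc} there is no $\Order(n^{2-\epsilon'})$-time algorithm for discrete Closest String in this dimension unless the Hitting Set Conjecture fails. Since $d = \omega(\log n)$ we have $d' = \Order(d + \log n) = \Order(d)$, which is again $\omega(\log n)$ (so the assumed Remotest algorithm indeed applies to the reduced instance) and again $n^{\order(1)}$; moreover $\widetilde\Order(nd) = n^{1+\order(1)}$ is swallowed by $\Order(n^{2-\epsilon})$ after replacing $\epsilon$ with any fixed $\epsilon' \in (0,\min\{\epsilon,1\})$. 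Hence the composition solves discrete Closest String in dimension $\omega(\log n)$ in time $\Order(n^{2-\epsilon'})$, contradicting \cref{thm:discrete-closest-hsc} and forcing the Hitting Set Conjecture to fail; this establishes the corollary.

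I do not expect a genuine obstacle here, which matches the paper's framing of this as an ``immediate corollary''. The only two points deserving a moment's care are (i) that both \cref{thm:discrete-equiv} and \cref{thm:discrete-closest-hsc} are statements about binary alphabets, so that chaining them is legitimate, and (ii) that the dimension blow-up $d \mapsto \Order(d + \log n)$ together with the additive $\widetilde\Order(nd)$ overhead keeps us inside the regime $\omega(\log n) < d < n^{\order(1)}$ where \cref{thm:discrete-closest-hsc} has teeth---which holds as long as one starts from a sufficiently small superlogarithmic $d$, e.g.\ $d = \polylog n$.
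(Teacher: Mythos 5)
Your proof is correct and takes exactly the approach the paper has in mind: the paper states this follows immediately by combining the fine-grained equivalence of \cref{thm:discrete-equiv} (converse direction) with the Hitting Set Conjecture hardness of \cref{thm:discrete-closest-hsc}, and provides no further details. Your write-up fills in precisely the points that matter---that the dimension after the reduction is $\Order(d + \log n)$ and hence remains $\omega(\log n)$ (in fact $\geq d$) while staying $n^{\order(1)}$, and that the additive $\widetilde\Order(nd)$ overhead is subquadratic---so the composition is valid.
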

\bibliographystyle{plainurl}
\bibliography{refs}

\end{document}